\documentclass{article}

\PassOptionsToPackage{numbers,compress}{natbib}
\usepackage[preprint]{neurips_2026}

\usepackage[utf8]{inputenc}
\usepackage[T1]{fontenc}
\usepackage{microtype}
\usepackage{hyperref}
\usepackage{url}
\usepackage{xcolor}

\usepackage{graphicx}
\usepackage{subcaption}
\usepackage{booktabs}
\usepackage{nicefrac}

\usepackage{amsmath}
\usepackage{amssymb}
\usepackage{amsfonts}
\usepackage{mathtools}
\usepackage{amsthm}
\usepackage{siunitx}

\usepackage{enumitem}
\usepackage[capitalize,noabbrev]{cleveref}

\DeclareMathOperator{\LinearOp}{Linear}

\DeclareMathOperator{\GLUOp}{GLU}
\DeclareMathOperator{\MLPOp}{MLP}
\DeclareMathOperator{\DropoutOp}{Dropout}
\DeclareMathOperator{\NormOp}{Norm}
\DeclareMathOperator{\SubLayerOp}{SubLayer}
\DeclareMathOperator{\CellOp}{Cell}
\DeclareMathOperator{\CellBOp}{\widetilde{\mathrm{Cell}}}

\newcommand{\Linear}[1]{\LinearOp\!\left(#1\right)}

\newcommand{\GLU}[1]{\GLUOp\!\left(#1\right)}
\newcommand{\MLP}[1]{\MLPOp\!\left(#1\right)}
\newcommand{\Dropout}[1]{\DropoutOp\!\left(#1\right)}
\newcommand{\Norm}[1]{\NormOp\!\left(#1\right)}
\newcommand{\SubLayer}[1]{\SubLayerOp\!\left(#1\right)}
\newcommand{\Cell}[1]{\CellOp\!\left(#1\right)}
\newcommand{\CellB}[1]{\CellBOp\!\left(#1\right)}

\theoremstyle{plain}
\newtheorem{theorem}{Theorem}[section]
\newtheorem{proposition}[theorem]{Proposition}

\theoremstyle{definition}
\newtheorem{definition}[theorem]{Definition}

\theoremstyle{remark}

\usepackage{tikz}
\usetikzlibrary{arrows.meta,positioning,calc,intersections,shapes,angles,patterns,quotes,backgrounds,fit}
\usepackage{circuitikz}
\ctikzset{tripoles/pmos style/emptycircle}
\ctikzset{tripoles/mos style/arrows}

\definecolor{mycolor1}{RGB}{31,119,180}
\definecolor{mycolor2}{RGB}{255,127,14}
\definecolor{mycolor3}{RGB}{44,160,44}
\definecolor{mycolorlight}{RGB}{204,204,204}

\tikzset{
  sum/.style={draw, circle, inner sep=1pt, minimum size=10pt, thick},
  signal/.style={-{Latex[length=3mm, width=2mm]}, thick},
  modsignal/.style={-{Latex[open, length=3mm, width=2mm]}, thick, dashed},
  fblock/.style={draw, thick, minimum height=35, minimum width=40, align=center, rounded corners=5pt},
  sig pic/.pic={
    \draw[thick, black] (-0.5,-0.5) .. controls (0.3,-0.5) and (-0.3,0.5) .. (0.5,0.5);
  },
  heaviside pic/.pic={
      \draw[thick, black]
        (-0.6,-0.4) -- (0, -0.4)   
        -- (0, 0.4)               
        -- (0.6, 0.4);            
    },
  sblock/.style={draw, thick, minimum height=55, minimum width=60, align=center, rounded corners=2pt,
  path picture={
  \pic at (path picture bounding box.center) {heaviside pic};
  }
  }
}

\title{Hardware-Software Co-Design of Scalable, Energy-Efficient Analog Recurrent Computations}

\author{%
  Arthur Fyon \\
  University of Liège \\
  Liège, Belgium \\
  \texttt{afyon@uliege.be} \\
  \And
  Julien Brandoit \\
  University of Liège \\
  Liège, Belgium \\
  \texttt{jbrandoit@uliege.be} \\
  \And
  Loris Mendolia \\
  University of Liège \\
  Liège, Belgium \\
  \texttt{lmendolia@uliege.be} \\
  \And
  Damien Ernst \\
  University of Liège \\
  Liège, Belgium \\
  \texttt{dernst@uliege.be} \\
  \AND
  Jean-Michel Redouté \\
  University of Liège \\
  Liège, Belgium \\
  \texttt{jean-michel.redoute@uliege.be} \\
  \And
  Guillaume Drion \\
  University of Liège \\
  Liège, Belgium \\
  \texttt{gdrion@uliege.be} \\
}

\begin{document}

\maketitle

\begin{abstract}
Always-on AI applications, from environmental sensors to biomedical implants, require ultra-low power consumption. Analog circuits offer a path to sub-microwatt inference, yet existing analog implementations are limited to feedforward architectures: extending them to recurrent dynamics has been considered impractical due to noise accumulation through temporal feedback. We demonstrate that this barrier can be overcome through hardware-software co-design. Specifically, we identify that Bistable Memory Recurrent Units (BMRUs), a class of Recurrent Neural Networks (RNNs) with discrete-valued outputs and hysteretic dynamics, admit an ultra-low power current-mode analog implementation which we design from first principles. The resulting circuit establishes a one-to-one correspondence between each learned parameter and a circuit element. The discrete outputs suppress analog noise by at least 20-fold at each cell boundary, breaking the noise accumulation that prevents analog recurrence. We reformulate BMRUs for first-quadrant operation with fixed thresholds, enabling the direct correspondence while preserving expressivity and trainability. Transistor-level simulations in \SI{180}{\nano\meter} Complementary Metal-Oxide-Semiconductor (CMOS) show near-perfect agreement between software predictions and circuit-level behavior, with the software model thereby serving as a high-fidelity simulator of the physical hardware at low computational cost. We leverage this fidelity to conduct large-scale noise immunity and power scaling analyses: the power cost of adding recurrence scales linearly with state dimension, while the feedforward layers dominating total power scale quadratically, meaning recurrence is added at linear marginal cost relative to the feedforward backbone. End-to-end keyword spotting achieves sub-microwatt inference at the RNN core. To our knowledge, this is the first demonstration of an analog recurrent neural network with structural noise immunity.
\end{abstract}

\section{Introduction}
The computational demands of modern Artificial Intelligence (AI) have grown exponentially, now consuming significant portions of global energy resources~\cite{faiz2024llmcarbon, barbierato2024toward, horowitz2014computing, sze2017efficient}. This challenge becomes particularly acute for always-on applications where battery life is paramount: environmental sensor networks, biomedical monitoring devices, and edge devices processing real-world sequential data streams~\cite{lee2021neural, seo2013neural, sharifshazileh2021electronic}. Current systems rely on general-purpose digital hardware inherited from Von Neumann architectures that separate memory and computation~\cite{sarpeshkar1998analog}, necessitating constant data movement and substantial energy overhead~\cite{sebastian2020memory, ielmini2018memory}.

Modern sequence-based AI is dominated by Transformers, whose attention mechanisms achieve state-of-the-art performance across domains~\cite{vaswani2017attention}. However, their quadratic complexity in sequence length and large memory footprint make them unsuitable for resource-constrained edge devices~\cite{fichtl2025end, tay2022efficient}. State-Space Models (SSMs) and RNNs offer more efficient alternatives for sequential processing. SSMs achieve parallelizable training through linear recurrence~\cite{gu2021efficiently, gu2024mamba, dao2024transformers, yue2024linear, wang2024mamba, gu2022parameterization}, yet their transient dynamics, states decaying exponentially toward equilibrium, prevent multistability~\cite{boyd1985fading}, limiting performance on tasks requiring persistent memory~\cite{vecoven2021bio, lambrechts2023warming}. Nonlinear RNNs provide multistability, essential for long-term memory~\cite{marder1996memory}, through nonlinear recurrent dynamics~\cite{hochreiter1997long, cho2014learning}, but their inherently sequential nature creates training bottlenecks~\cite{bengio1994learning, pascanu2013difficulty}.

Orthogonal to this algorithmic divide lies the hardware implementation challenge. Application-Specific Integrated Circuits (ASICs) achieve efficiency gains through parallelized operations and quantization~\cite{machupalli2022review, chen2019eyeriss}, with the TinyML movement extending this to microcontrollers, typically enabling milliwatt-level inference~\cite{heydari2025tiny, banbury2021mlperf, warden2019tinyml, lin2020mcunet, fedorov2019sparse}. Yet these systems remain fundamentally digital, constrained by sequential execution and separated memory~\cite{maass2002real}. Neuromorphic approaches offer alternatives: digital chips like Intel Loihi 2 implement Spiking Neural Networks (SNNs) with event-driven efficiency~\cite{davies2018loihi, orchard2021efficient, mead1990neuromorphic}, while analog crossbar arrays achieve in-memory computation through Kirchhoff's law summation~\cite{xia2019memristive, prezioso2015training, azghadi2020hardware, li2019long}. However, SNN performance still lags behind conventional RNNs on many sequential tasks~\cite{wu2024review, roy2019towards, lebow2021real, schuman2022opportunities}, and while digital FPGA implementations of recurrent architectures exist~\cite{chang2015recurrent}, analog implementations focus predominantly on feedforward networks~\cite{semenova2022noise, kendall2020training}: extending them to recurrent dynamics introduces noise accumulation through temporal feedback across timesteps~\cite{liu2002analog}. Programmable analog processors such as the Aspinity AML100~\cite{aspinity2022aml100} perform always-on event detection entirely in the analog domain at tens of microwatts, and dedicated digital neural decision processors such as the Syntiant NDP120~\cite{syntiant2021ndp120} achieve inference at hundreds of microwatts with support for Convolutional Neural Networks (CNNs), RNNs, and Fully Connected (FC) networks. Yet no existing fully analog approach implements recurrent dynamics: the AML100 is restricted to feedforward signal processing, and all sub-milliwatt digital processors remain orders of magnitude above what subthreshold analog circuits can achieve. \emph{No existing approach combines parallelizable training, multistable memory, analog computation, and sub-microwatt power consumption.}

In this paper, we focus on a specific class of recurrent architectures that addresses both the algorithmic and hardware challenges. On the algorithmic side, we seek parallelizable training while maintaining multistability for persistent memory. On the hardware side, we seek architectures whose computational primitives align naturally with ultra-low power analog circuits. Memory Recurrent Units (MRUs) satisfy the first requirement~\cite{BMRUref}. Unlike traditional RNNs requiring iterative settling~\cite{danieli2025pararnn}, MRUs combine instantaneous convergence with multistability, enabling parallel training via associative scans~\cite{martin2017parallelizing} while maintaining persistent memory without transient dynamics. The Bistable MRU (BMRU) variant achieves this through hysteresis, producing discrete-valued outputs. As we demonstrate in this paper, these discrete outputs and hysteretic dynamics can be matched by an ultra-low power current-mode analog circuit that we design from first principles, enabling genuine hardware-software co-design. We leverage this co-design to create the first fully analog RNN with four important characteristics:

\textbf{One-to-one hardware-software correspondence.} Rather than a loose algorithmic inspiration, each learned parameter in the reformulated BMRU maps to a single, programmable circuit element: thresholds become bias currents and weights become transistor width ratios. Cadence Spectre simulations confirm near-perfect agreement between the software model and transistor-level dynamics, not only on final predictions but at every intermediate signal throughout the network (Section~\ref{sec:hardware_validation}). This fidelity turns the software model into a proxy for the physical circuit, enabling large-scale analyses that would be prohibitively expensive in transistor-level simulation.

\textbf{Structural noise immunity.} Discrete-valued BMRU outputs block analog noise at each cell boundary, enabling robust recurrent architectures despite device variability. Leveraging the high-fidelity software simulator, we conduct a large-scale noise analysis across multiple benchmarks (Section~\ref{sec:robustness_power}) that confirms that this immunity holds under realistic operating conditions.

\textbf{Recurrence at low marginal cost.} The BMRUs scale linearly in power with state dimension, while the FC layers scale quadratically. Component-level power breakdown (Section~\ref{sec:robustness_power}) shows that at realistic network sizes, recurrence adds a small, linearly-scaling overhead on top of the feedforward cost. This means any analog feedforward platform could, in principle, gain temporal processing capabilities by integrating BMRU cells at marginal additional cost.

\textbf{Sub-microwatt operation.} Subthreshold operation with picoampere currents achieves approximately \SI{100}{\nano\watt} for the RNN inference core of a minimal proof-of-concept network on Keyword Spotting (KWS). Additional overhead from bias generation, shift registers, and routing remains well within the sub-microwatt envelope based on component-level estimates (Appendix~\ref{sec:programmable_overhead}). Clockless continuous-time analog dynamics eliminate clock overhead entirely, with latency reducing to propagation delay.

We validate our approach through Cadence transistor-level simulations in X-FAB \SI{180}{\nano\meter} CMOS, demonstrating end-to-end feasibility on KWS using the Google Speech Commands dataset, a standard benchmark for always-on edge applications~\cite{warden2018speech} (Section~\ref{sec:kws_proof_of_concept}). We also benchmark the proposed cell against LRU~\cite{orvieto2023resurrecting} and minGRU~\cite{feng2024minGRU} on standard sequence modeling tasks and on character-level language modeling to highlight the scalability of the proposed approach. A comprehensive review of related work is provided in Appendix~\ref{sec:related_work}.


\section{Hardware-software co-design of an analog RNN}\label{sec:codesign}
We build the hardware-software co-design in three steps: identifying a recurrent cell that is amenable to monolithic analog implementation (Section~\ref{sec:cell_choice}), designing a matching ultra-low power current-mode circuit (Section~\ref{sec:circuit}), and adapting the cell equations to achieve one-to-one mapping between learned parameter maps and circuit elements (Section~\ref{sec:FQ_BMRU}).

\subsection{Recurrent cell selection}\label{sec:cell_choice}

Three properties are essential for an analog RNN target. First, \emph{competitive performance} on standard sequence benchmarks. Second, \emph{parallelizable training} through associative scans~\cite{martin2017parallelizing}, so that the model can be trained at scale on GPUs. Third, and most importantly for monolithic analog integration, \emph{persistent memory without transient dynamics}: cells that converge instantaneously rather than through exponential settling eliminate the need for on-chip capacitors. This is what distinguishes BMRUs~\cite{BMRUref} from SSMs and classical RNNs. Together with the discrete-valued outputs that we exploit for noise immunity (Section~\ref{sec:robustness_power}), these properties make the BMRU a natural candidate for monolithic, capacitor-free, transistor-only analog implementation.

The BMRU combines these properties through a formulation based on bistability~\cite{BMRUref}:
\begin{align}
    \hat{h}_t &= W_x x_t + b_x, \label{eq:bmru_candidate}\\
    \beta_t &= \left|W_\beta x_t + b_\beta\right|, \label{eq:bmru_threshold}\\
    z_t &= \mathcal{H}\bigl(|\hat{h}_t| - \beta_t\bigr), \label{eq:bmru_gate}\\
    h_t &= z_t \odot S(\hat{h}_t) \odot \alpha + (1 - z_t) \odot h_{t-1}, \label{eq:bmru_update}
\end{align}
where $x_t$ and $h_t$ are the current input and hidden state, respectively, and $\beta_t$ is an input-dependent threshold. $\mathcal{H}(\cdot)$ denotes the Heaviside function, $S(\cdot)$ is the sign function, $\odot$ denotes the Hadamard product, $W_x$ and $W_\beta$ are learnable parameter matrices, and $b_x$, $b_\beta$ and $\alpha$ are learnable parameters.

This formulation implements adaptive thresholding with hysteresis. When the candidate magnitude $|\hat{h}_t|$ exceeds the input-dependent threshold $\beta_t$, the gate $z_t$ opens and the hidden state is mapped to $S(\hat{h}_t) \cdot \alpha \in \{-\alpha, +\alpha\}$. A state change only occurs when $S(\hat{h}_t) \neq S(h_{t-1})$, that is, when the candidate and current state have opposite signs. When $|\hat{h}_t| < \beta_t$, the gate remains closed ($z_t = 0$) and the hidden state persists unchanged. Two properties then characterize the BMRU: \emph{instantaneous convergence}, since the piecewise-constant activation removes iterative settling dynamics and enables parallel computation across timesteps during training; and \emph{discrete-valued outputs}, since hidden states are restricted to $\{-\alpha, +\alpha\}$. These two properties are what monolithic analog recurrence has been missing. Instantaneous convergence eliminates capacitive settling, enabling capacitor-free integration; discrete outputs introduce a thresholding nonlinearity at every cell boundary, blocking the noise propagation that has historically prevented analog RNN implementations.

Beyond these implementation-friendly properties, recent work in reinforcement learning demonstrates that BMRU-based agents trained on short-horizon environments generalize to substantially longer horizons at deployment~\cite{BMRURL}, providing evidence that the cell captures transferable temporal structure.

\subsection{Analog circuit implementation}\label{sec:circuit}
The properties described above allow for a simple implementation following three guiding principles (full circuit details appear in Appendix~\ref{sec:circuits}).

\textbf{Current-mode operation.} All signals are represented as currents, exploiting the exponential $I$-$V$ relationship of subthreshold MOS transistors for natural dynamic range compression and low power operation at picoampere signal levels. Addition and subtraction become trivial through Kirchhoff's Current Law (KCL).

\textbf{Transistor-only design.} By eliminating resistors, capacitors, and inductors, the circuit achieves true monolithic integration without external passives, minimizing area and enabling full CMOS compatibility. The BMRU instantaneous convergence eliminates the need for capacitors that would be required to implement the transient dynamics of classical RNNs or SSMs.

\textbf{Parameter mapping.} All learned parameters map directly to circuit elements: BMRU thresholds and amplitudes correspond to bias currents, while FC layer weights correspond to transistor width ratios in current mirrors. In our proof of concept, these are fixed at design time; Section~\ref{sec:perspandlimit} discusses a path toward post-fabrication programmability through binary-weighted current mirror banks.

\textbf{Bistable cell.}
We design a current-mode bistable cell based on a dual-Heaviside feedback structure (Figure~\ref{fig:fig1}A). It achieves independent programmability of three key parameters through bias currents: the threshold current $I_\text{thresh}$ determines the high switching point, the output current $I_\text{gain}$ sets the high-state amplitude, and the hysteresis width $I_\text{width}$ tunes the memory region. All three parameters are set by external bias currents independently, which can be tuned by adapting their respective gate voltage. Each cell requires 9 transistors, all operating in subthreshold, achieving per-cell nanowatt power consumption.

\begin{figure}[t!]
  \centering
  \includegraphics[width=\linewidth]{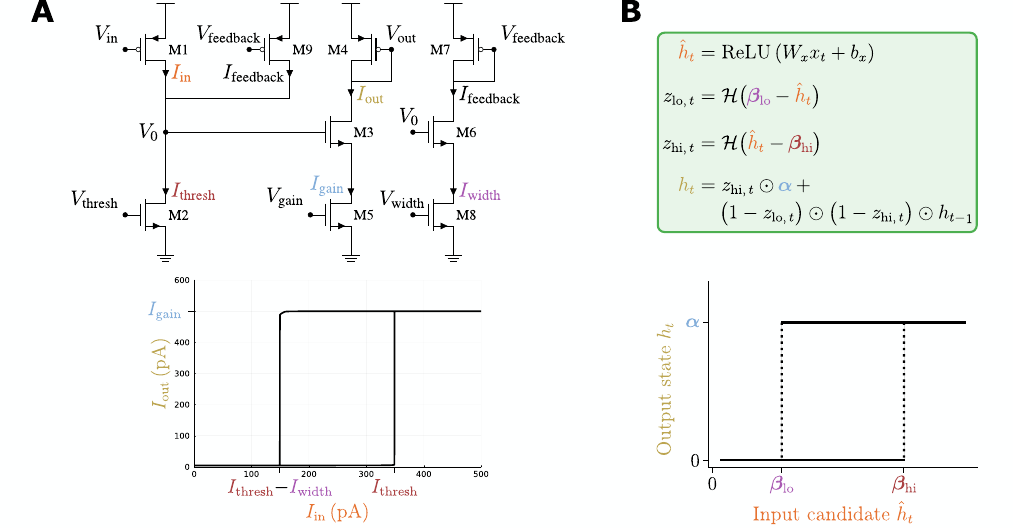}
  \caption{%
    \textbf{Current-mode analog implementation and FQ BMRU formulation.}
    \textbf{A.} Schematic of the ultra-low power current-mode bistable cell (top) and associated input-output current relationship (bottom). All thresholds and output gain are independently tunable via bias currents.
    \textbf{B.} FQ BMRU equations (top) and input candidate versus state relationship (bottom), with $\alpha$, $\beta_\text{lo}$ and $\beta_\text{hi}$ as learnable parameters. The correspondence between panels A and B illustrates the one-to-one mapping: each learned parameter maps directly to a tunable circuit parameter.
  }
  \label{fig:fig1}
\end{figure}

\textbf{Fully connected layers.}
Matrix-vector multiplication ($y = W\,x$) exploits KCL for analog summation using standard current-mirror techniques. Each input current $x_i$ feeds a bank of current mirrors producing partial currents $y_{ij} = w_{ij} x_i$, which sum at each output node: $y_j = \sum_i y_{ij}$. Transistor width ratios encode weight magnitudes:
\begin{equation}
    I_{\text{out},ij} = \frac{W_{\text{out}, ij}}{W_{\text{in}, i}}\cdot I_{\text{in},i}, \qquad I_{\text{out},j} = \sum_i I_{\text{out},ij},
\end{equation}
where $W_{\text{out}, ij}/W_{\text{in}, i}$ represents the width ratio implementing weight $w_{ij}$. Bias terms add through dedicated current sources. Weight precision is determined by layout matching~\cite{pelgrom1989matching, kinget2005device}, typically achieving 6--8 bit equivalent resolution in standard CMOS processes.

\textbf{Rectified linear units.}
The ReLU activation, placed after FC layers, emerges naturally from diode-connected NMOS/PMOS transistors, which conduct current only in one direction~\cite{mansoor2015silicon}. This implements $\text{ReLU}(I_{\text{out},j}) = \max(0, I_{\text{out},j})$ with no additional power overhead.

\textbf{Complete architecture.}
The full network comprises an input projection layer, $N$ stacked BMRU layers with inter-layer FC transformations, and an output classifier producing class currents. All components operate in the current domain with subthreshold biasing. All parameters, FC weights, BMRU thresholds, gains, and hysteresis widths, are set by bias currents, enabling the same fabricated chip to implement different trained networks.

\subsection{First-quadrant BMRU}\label{sec:FQ_BMRU}
The current-mode implementation described in Section~\ref{sec:circuit} operates with unipolar (positive-only) currents and fixed threshold parameters set by DC biases. The standard BMRU formulation presents two mismatches with these constraints. First, the sign function in Equation~\eqref{eq:bmru_update} produces bipolar outputs $\{-\alpha, +\alpha\}$, requiring differential signaling or dual power supplies, which increases circuit complexity and power consumption. Second, computing $\beta_t$ implies a linear transformation in Equation~\eqref{eq:bmru_threshold} and demands dedicated multiplication and addition circuitry evaluated at each timestep, adding substantial overhead to the fixed-bias tunability of the analog implementation. These mismatches motivate a reformulation that preserves the essential properties of the BMRU while aligning with current-mode circuit constraints. We propose a First-Quadrant BMRU (FQ BMRU) variant operating entirely with non-negative values (Figure~\ref{fig:fig1}B) that preserves expressivity (Appendix~\ref{sec:proofs}):
\begin{align}
    \hat{h}_t &= \text{ReLU}\left(W_x x_t + b_x\right), \label{eq:fq_candidate}\\
    z_{\text{lo},\, t} &= \mathcal{H}\bigl(\beta_\text{lo} - \hat{h}_t\bigr), \label{eq:fq_gate_lo}\\
    z_{\text{hi},\, t} &= \mathcal{H}\bigl(\hat{h}_t - \beta_\text{hi}\bigr), \label{eq:fq_gate_hi}\\
    h_t &= z_{\text{hi},\, t} \odot \alpha + \bigl(1 - z_{\text{lo},\, t}\bigr)\odot\bigl(1 - z_{\text{hi},\, t}\bigr) \odot h_{t-1}, \label{eq:fq_update}
\end{align}
where $\beta_\text{lo}$ and $\beta_\text{hi}$ are learned positive threshold parameters with $\beta_\text{hi} > \beta_\text{lo}$ (element-wise).

This formulation implements a window comparator with three operating regions. When $\hat{h}_t < \beta_\text{lo}$, the state resets (or stays) to zero. When $\hat{h}_t > \beta_\text{hi}$, the state sets (or stays) to $\alpha$. When $\beta_\text{lo} \leq \hat{h}_t \leq \beta_\text{hi}$, the state persists from the previous timestep. The hysteresis window $[\beta_\text{lo}, \beta_\text{hi}]$ provides noise immunity while enabling bistable memory. This reformulation offers three advantages for hardware implementation. \emph{(i) Unipolar signals}: input candidate $\hat{h}_t$ and hidden state $h_t$ remain non-negative, enabling single-ended current-mode circuits without differential pairs or negative supplies. \emph{(ii) Fixed thresholds}: parameters $\beta_\text{lo}$ and $\beta_\text{hi}$ are learned during training but remain constant during inference, mapping directly to fixed DC bias currents. \emph{(iii) Direct parameter correspondence}: each learned parameter maps to a specific circuit element, with $\hat{h}_t$ corresponding to the input current $I_\text{in}$, $h_t$ to the output current $I_\text{out}$, $\beta_\text{hi}$ to $I_\text{thresh}$, $\beta_\text{lo}$ to $I_\text{thresh} - I_\text{width}$, and $\alpha$ to $I_\text{gain}$ (color coded in Figure~\ref{fig:fig1}).

\textbf{Empirical trainability.}
\emph{While expressivity is preserved in theory}, the reformulation could still affect gradient-based training. We compare the FQ BMRU against the original BMRU and two recent parallelizable RNN baselines, LRU~\cite{orvieto2023resurrecting} and minGRU~\cite{feng2024minGRU}, on a unified suite of sequence modeling benchmarks spanning pixel-level image classification (sequential and permuted MNIST), spoken-digit keyword spotting (dKWS), the ListOps task from the Long Range Arena~\cite{tay2021long}, and character-level language modeling on Shakespeare~\cite{feng2024minGRU, karpathy2022nanogpt} (Table~\ref{tab:benchmarks}). For classification tasks, loss is computed via average pooling over timesteps and results are averaged over 5 seeds. Across tasks, the FQ BMRU matches or closely approaches the baselines on most tasks, with the largest gap appearing on ListOps. We attribute this gap to the optimization cost of fixed thresholds rather than to reduced representational capacity, consistent with the expressivity-versus-trainability distinction above. Notably, no analog co-design approach exists for LRU or minGRU, making FQ BMRU the only architecture with a demonstrated path to ultra-low power analog implementation. Complete training procedures are detailed in Appendix~\ref{sec:softwareKWS}.

\begin{table}[t!]
\caption{%
  \textbf{Unified benchmark results.} Accuracy (\%) on classification tasks (higher is better); cross-entropy loss on Shakespeare character-level language modeling (lower is better). Classification models use two recurrent layers with state dimension 64; Shakespeare uses a depth-6 model with state and model dimension 256 following~\citet{feng2024minGRU}. ListOps follows the Long Range Arena protocol~\cite{tay2021long}. Results averaged over 5 seeds. $^*$LRU and minGRU use identical architecture and training hyperparameters.
}
\label{tab:benchmarks}
\centering
\small
\setlength{\tabcolsep}{5pt}
\begin{tabular}{lccccc c}
\toprule
& \multicolumn{4}{c}{Accuracy (\%) $\uparrow$} & Loss $\downarrow$ \\
\cmidrule(lr){2-5} \cmidrule(lr){6-6}
Model & sMNIST & pMNIST & dKWS & ListOps & Shakespeare \\
\midrule
BMRU       & 
\shortstack{95.4 \\ {\scriptsize [94.9; 96.3]}} & 
\shortstack{94.2 \\ {\scriptsize [92.9; 95.8]}} & 
\shortstack{94.8 \\ {\scriptsize [94.4; 95.2]}} & 
\shortstack{41.7 \\ {\scriptsize [41.0; 42.7]}} & 
\shortstack{1.439 \\ {\scriptsize [1.437; 1.446]}}  \\
FQ BMRU  & 
\shortstack{95.6 \\ {\scriptsize [95.2; 96.0]}} & 
\shortstack{94.8 \\ {\scriptsize [94.4; 95.2]}} & 
\shortstack{92.9 \\ {\scriptsize [92.7; 93.2]}} & 
\shortstack{37.5 \\ {\scriptsize [37.0; 38.3]}} & 
\shortstack{1.484 \\ {\scriptsize [1.477; 1.492]}} \\
LRU$^*$    & 
\shortstack{92.6 \\ {\scriptsize [90.4; 93.8]}} & 
\shortstack{95.9 \\ {\scriptsize [95.0; 96.7]}} & 
\shortstack{96.4 \\ {\scriptsize [95.9; 96.8]}} &  
\shortstack{41.0 \\ {\scriptsize [37.9; 49.3]}} & 
\shortstack{1.480 \\ {\scriptsize [1.474; 1.487]}} \\
minGRU$^*$ & 
\shortstack{93.4 \\ {\scriptsize [88.5; 96.5]}} & 
\shortstack{95.6 \\ {\scriptsize [94.9; 96.6]}} & 
\shortstack{96.3 \\ {\scriptsize [96.0; 96.6]}} & 
\shortstack{39.0 \\ {\scriptsize [38.4; 39.4]}} & 
\shortstack{1.451 \\ {\scriptsize [1.445; 1.455]}} \\
\bottomrule
\end{tabular}

\end{table}


\section{Proof of concept: keyword spotting}\label{sec:kws_proof_of_concept}
We validate the co-design through a complete pipeline: the network is trained entirely in software, and the learned parameters are then mapped to transistor dimensions and bias currents for inference in transistor-level Cadence simulations. The target task is single-word (``yes'') keyword spotting using the Google Speech Commands dataset~\cite{warden2018speech} (Figure~\ref{fig:fig2}A, bottom), a standard benchmark for always-on edge applications requiring continuous audio monitoring with minimal power~\cite{lopez2022deep, hinton2012deep, chen2014small, arik2017convolutional}.

Audio preprocessing extracts 13-dimensional Mel-Frequency Cepstral Coefficients (MFCCs)~\cite{davis1980comparison} from 1-second clips at 100 frames per second, yielding input sequences of 101 timesteps. For this proof-of-concept demonstration, MFCC extraction occurs off-chip. MFCC front-ends operating at comparable power levels have been demonstrated~\cite{villamizar2021switchedcap, yang2019jssc_afe}, confirming that a complete signal chain from microphone to classification at microwatt-level total power is feasible. On-chip feature extraction remains a direction for future integration. The network architecture comprises an input projection layer mapping 13-dimensional MFCC features to the recurrent state dimension $d$, $N$ stacked FQ BMRU layers, and a binary output classifier (Figure~\ref{fig:fig2}A, top). Trained parameters, FC weights and BMRU thresholds, are then mapped to transistor width ratios and bias currents respectively for Cadence transistor-level simulation.

\begin{figure}[t!]
  \centering
  \includegraphics[width=0.94\linewidth]{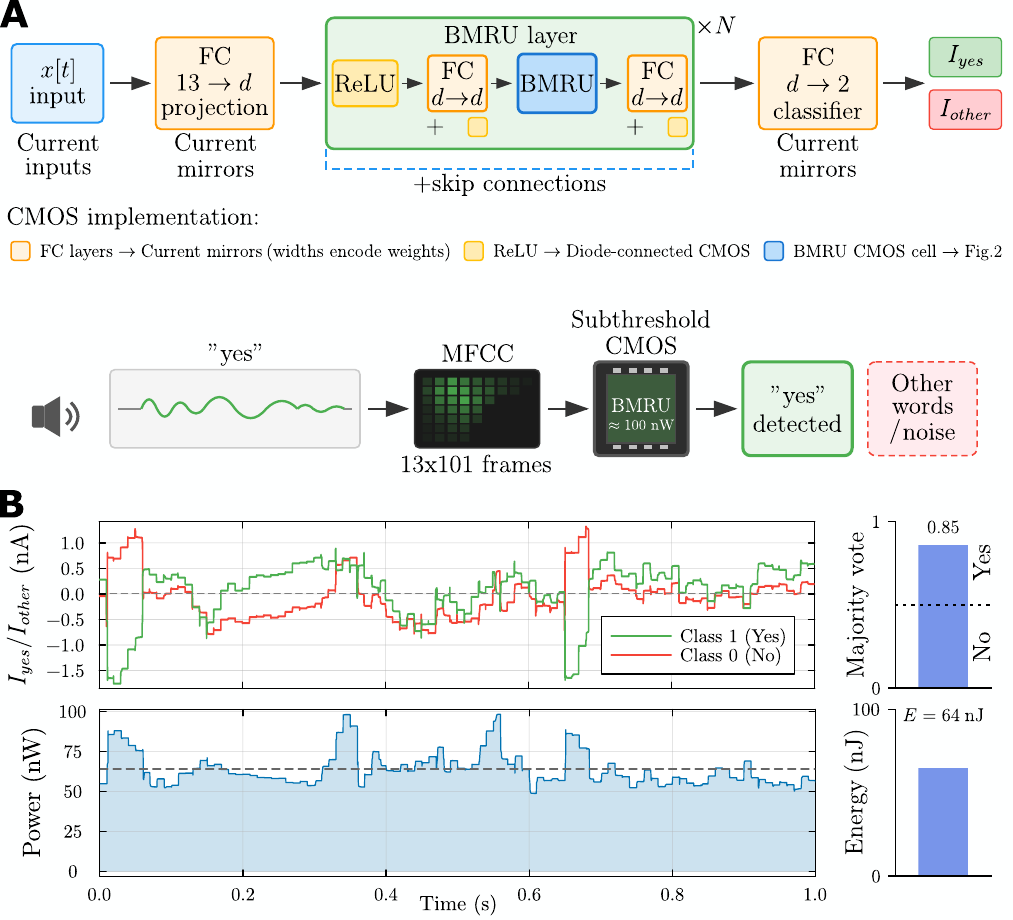}
  \caption{%
    \textbf{Analog CMOS implementation of a complete BMRU-based RNN for ``yes'' KWS.}
    \textbf{A.} Complete network architecture where all operations are computed using analog primitives whose behavior emerges from the physical properties of subthreshold transistors (top). For this proof of concept, a minimal configuration with $N=2$ layers and state dimension $d=4$ is implemented. KWS task for ``yes'' recognition. MFCC extraction occurs off-chip for simplicity; future implementations will integrate full inference from raw audio (bottom). The analog chip produces two output currents representing class logits.
    \textbf{B.} Positive inference example simulated in Cadence. Top: time evolution of output logit currents, with majority vote yielding correct classification. Bottom: instantaneous power consumption remaining below \SI{100}{\nano\watt} for the RNN core.
  }
  \label{fig:fig2}
\end{figure}

The complete analog circuit (768 transistors) is implemented in the \SI{180}{\nano\meter} CMOS X-FAB technology using Cadence Virtuoso~\cite{cadence2023virtuoso}. All simulations are conducted using Spectre with PRIMLIB \SI{180}{\nano\meter} models (ne/pe devices) at a supply voltage of \SI{1.8}{\volt}. Transient simulations validate inference on test samples, with input MFCC sequences converted to current waveforms and output class currents compared to determine classification decisions.

\subsection{Software baseline}\label{sec:software_baseline}
We evaluated FQ BMRU accuracy on balanced test sets comprising all 397 positive samples (``yes'') paired with 397 randomly selected negative samples (other words or noise) from the Google Speech Commands dataset (accuracy table and complete training procedure can be found in Appendix~\ref{sec:softwareKWS}). Even with a minimal $d=4$ configuration, the network achieves \SI{93.86}{\percent} accuracy. Performance improves steadily with increasing state dimension, reaching \SI{96.98}{\percent} at $d=8$ and \SI{98.08}{\percent} at $d=16$, before plateauing around \SI{97.5}{\percent} for $d \geq 64$. The consistently tight min-max ranges indicate stable training and good generalization despite the random sampling. The primary goal of this work is to demonstrate end-to-end co-design feasibility rather than maximizing accuracy; accordingly, hardware validation focuses on the minimal $d=4$ configuration to establish the complete pipeline from training to analog inference. Preliminary quantization experiments confirm that restricting all parameters to 4-bit precision degrades accuracy negligibly relative to full precision, even without quantization-aware training (Appendix~\ref{sec:quantization}).

\subsection{Hardware validation}\label{sec:hardware_validation}
Cadence simulations of the complete analog implementation with $d=4$ demonstrate successful inference. Figure~\ref{fig:fig2}B illustrates representative inference traces, showing the output class currents $I_\text{yes}$ and $I_\text{other}$ over time for a positive sample. The BMRU cells exhibit clean switching behavior, and the correct class consistently produces higher output current, enabling reliable discrimination. Across 50 randomly selected test samples, hardware predictions match software in 49 cases. The single discrepancy occurs on a sample where software confidence was minimal (51 vs.\ 50 timestep majority), a tie rather than a systematic error. Additional inference traces covering positive, negative, and boundary samples appear in Appendix~\ref{sec:inference_traces}.

Operating entirely in subthreshold, the circuit achieves approximately \SI{100}{\nano\watt} average power consumption during continuous inference for the RNN core at the $d=4$ proof of concept (Figure~\ref{fig:fig2}B bottom). A component-level power breakdown (Appendix~\ref{sec:power_breakdown}) confirms the approximately even split between BMRU cells and FC layers at $d=4$, with BMRU cells exhibiting substantially lower power variance across inferences, consistent with stable discrete-output dynamics.

We validate robustness to manufacturing and environmental variation through transistor-level Monte Carlo and corner simulations. Monte Carlo mismatch analysis (200 samples, $3\sigma$ on all transistors) confirms that misclassifications under mismatch occur only when the network is already uncertain under nominal conditions (Appendix~\ref{sec:mismatch_analysis}). The bistable dynamics provide inherent mismatch immunity: discrete thresholding ensures that small parameter perturbations do not propagate to output errors unless the input is already near the decision boundary. PVT (Process, Voltage, Temperature) corner analysis validates robustness across all five process corners (TT, FF, SS, FS, SF), the temperature range [$-27^\circ$C, $81^\circ$C], and $\pm 10\%$ supply voltage variation (Appendix~\ref{sec:pvt_corners}). Classification correctness is maintained across all conditions for representative inferences.

Beyond matching final predictions, the hardware implementation matches the software model at every intermediate stage. Appendix~\ref{sec:signal_agreement} provides a stage-by-stage comparison between software predictions and Cadence-simulated signals at the input projection, both FQ BMRU layer candidates and states, the skip connection, and the output logits. The close agreement throughout validates the one-to-one correspondence between learned parameters and circuit elements, and demonstrates that the FQ reformulation produces a faithful mapping between the algorithmic and physical domains.

The signal-level comparison also reveals the noise suppression mechanism in action. At the second recurrent layer, where accumulated analog errors are largest, the candidate signal $\hat{h}_t$ shows a mean absolute error of approximately \SI{60}{pA} relative to software predictions. After discrete thresholding at the cell boundary, this error drops to approximately \SI{3}{pA} on the output state $h_t$, a 20-fold suppression (Appendix~\ref{sec:signal_agreement}, Figure~\ref{fig:agreement_schematic}). The residual error is dominated by transistor subthreshold leakage when cells should output zero current, rather than by systematic signal corruption. This confirms that the candidate-to-state thresholding acts as a structural error-suppressing nonlinearity, preventing noise from accumulating through the recurrent feedback loop. A detailed stage-by-stage comparison is provided in Appendix~\ref{sec:signal_agreement}.

Finally, to confirm that the approach generalizes beyond binary classification, we evaluate the FQ BMRU on 11-class digit recognition (``zero'' through ``nine'' plus background noise). A $2\times 16$ architecture achieves competitive accuracy and substantially increases output margin separation between classes (Appendix~\ref{sec:multiclass_kws}), improving robustness to mismatch. Such networks remain within sub-microwatt consumption estimates (Appendix~\ref{sec:power_breakdown}, Table~\ref{tab:power_scaling}). Given the strong hardware-software agreement demonstrated at $d=4$, we expect similar behavior in hardware at larger dimensions.


\section{Noise immunity and power scaling}\label{sec:robustness_power}
The end-to-end agreement established in Section~\ref{sec:kws_proof_of_concept} demonstrates that the software model is a high-fidelity simulator of the physical circuit. We exploit this fidelity to conduct noise immunity and power scaling analyses on three benchmark tasks from Table~\ref{tab:benchmarks} (sMNIST, pMNIST, dKWS), sweeping over noise magnitudes and state dimensions that would be prohibitively expensive in transistor-level simulation. All three architectures are retrained using the hardware backbone (Appendix~\ref{sec:backbone_hardware}).

We first test network robustness to increasing levels of analog noise and mismatch. We extract worst-case noise levels from transistor-level simulations under $3\sigma$ mismatch and PVT variations (200 Monte Carlo samples across all building blocks), expressed as a worst-case fraction of nominal signal amplitude across multiple operating points, and inject this noise into the software simulator. Multiple noise levels are tested: 0.5$\times$, 1$\times$, 2$\times$, and beyond the measured analog noise magnitude. For each benchmark and test sample, we generate 10 noisy instantiations of the network for each inference.

Figure~\ref{fig:noise_analysis} presents the results for FQ BMRU, LRU, and minGRU, with noise injected at the same relative magnitude for fairness. At the measured analog noise level, both FQ BMRU and minGRU operate with no significant accuracy decrease. LRU exhibits catastrophic failure even at low noise levels, consistent with its purely linear recurrence offering no mechanism to attenuate accumulated perturbations. As noise increases beyond the measured level, FQ BMRU maintains robustness up to approximately 2$\times$ the analog noise before a sharp transition to failure. minGRU shows a similar pattern with slightly higher tolerance, which we attribute to multiplicative gating acting as an implicit noise regularizer. These results confirm that discrete-output dynamics provide sufficient noise immunity for scalable analog deployment, and that this immunity is a structural property of the architecture rather than an artifact of a specific circuit or task.

\begin{figure}[t!]
  \centering
  \includegraphics[width=\linewidth]{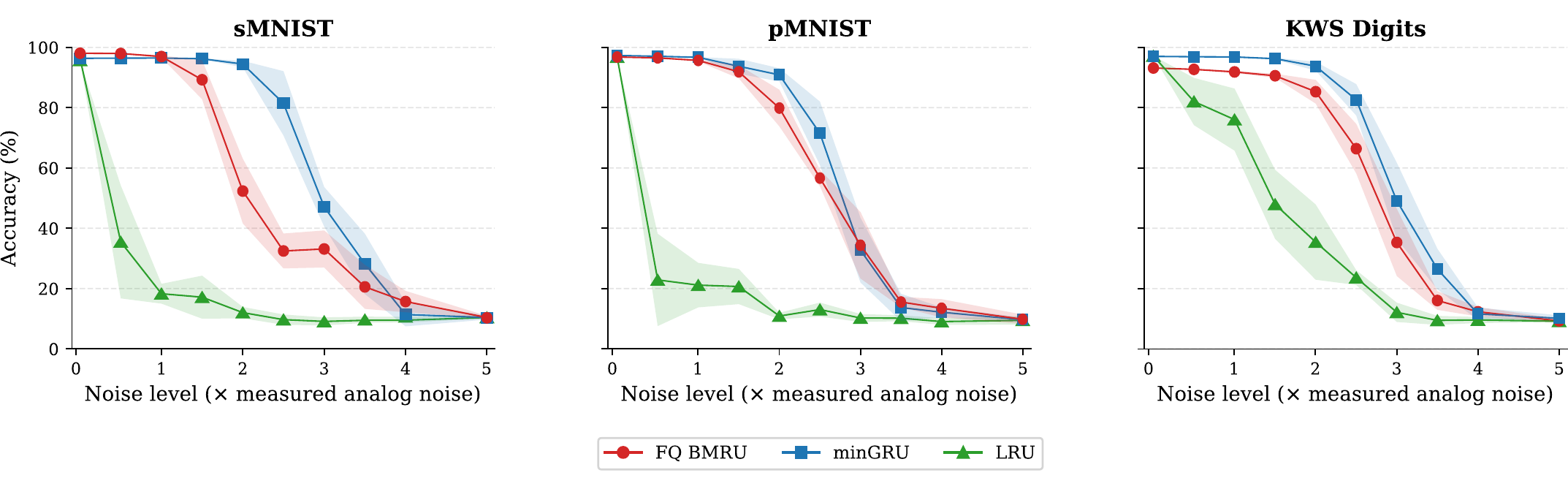}
  \caption{%
    \textbf{Large-scale noise robustness analysis across three benchmarks (sMNIST, pMNIST and dKWS).}
    Accuracy as a function of injected noise level (relative to measured analog noise from transistor-level simulations) for FQ BMRU, LRU, and minGRU. At analog noise level, FQ BMRU and minGRU maintain full accuracy, while LRU fails catastrophically. FQ BMRU exhibits robust performance up to approximately 2$\times$ the analog noise level across all benchmarks. Results computed over the full test set with 10 noisy instantiations per sample.
  }
  \label{fig:noise_analysis}
\end{figure}

We then estimated power scaling based on per-component measurements at $d=4$ from Cadence Spectre simulation, and extrapolated to higher state dimensions (Appendix~\ref{sec:power_breakdown}). At $d=32$, FC layers consume approximately 6$\times$ the power of BMRU cells, meaning recurrence adds less than 15\% overhead relative to a hypothetical feedforward-only analog network of the same dimension. This power structure has an important implication: analog feedforward inference using current mirrors and KCL is well-established, and the FC layers implementing it dominate total power at practical network sizes. The BMRU cells add recurrence, and with it, the ability to process temporal sequences rather than static inputs, at a marginal power cost that is linear and modest. 


\section{Perspectives and limitations}\label{sec:perspandlimit}
Several architectural extensions could enhance capability while preserving the co-design principles.

\textbf{Increased state dimension.} Scaling to larger state dimensions ($d = 8, 16, 32$) improves accuracy (Appendix~\ref{sec:softwareKWS}, Table~\ref{tab:accuracy}), with sub-microwatt operation remaining achievable up to $d=16$ where accuracy saturates (Appendix~\ref{sec:power_breakdown}, Table~\ref{tab:power_scaling}). Increasing state dimension also substantially increases output margin separation in multi-class settings, improving robustness to mismatch.

\textbf{Quantized parameter mapping.} Programmable parameters can be achieved through binary-weighted current mirror banks~\cite{allen2011cmos, razavi2001design} enabled through digital registers. This approach, analogous to programmable crossbar arrays~\cite{xia2019memristive, strukov2008missing, alibart2013pattern}, enables $B$-bit resolution with $B$ transistors per parameter. Preliminary experiments indicate that 4-bit quantization introduces negligible accuracy degradation compared to full-precision training, even without quantization-aware pretraining~\cite{gholami2022survey, courbariaux2015binaryconnect, hubara2016binarized, yin2019understanding}. Quantized accuracy on ``yes'' KWS is reported in Appendix~\ref{sec:quantization}. Estimated power and area overhead for the programmable version, including binary-weighted current mirrors, and shift registers is provided in Appendix~\ref{sec:programmable_overhead}.

\textbf{Limitations.}  The \SI{180}{\nano\meter} CMOS simulations, while validated in Cadence with Monte Carlo mismatch and PVT corner analysis (Appendices~\ref{sec:mismatch_analysis} and~\ref{sec:pvt_corners}), await physical fabrication and measurement, and focus on the binary KWS classification task. Multi-class KWS and more complex benchmarks are evaluated in software with realistic noise levels, but remain to be validated on hardware.

\section{Conclusion}
We presented the first fully analog RNN implementation that combines structural noise immunity with parallelizable training, through hardware-software co-design of the BMRU. By reformulating the BMRU to operate in the first quadrant with fixed thresholds, and by designing from first principles an ultra-low power current-mode bistable circuit whose behavior matches the reformulated cell, we established a one-to-one correspondence between each learned parameter and a programmable circuit element. Transistor-level simulations in \SI{180}{\nano\meter} CMOS show near-perfect agreement between software predictions and circuit-level dynamics at every stage of the network, turning the software model into a high-fidelity simulator of the physical hardware. We leveraged this fidelity to establish, across multiple benchmarks, that discrete-valued outputs provide structural noise immunity, breaking the noise accumulation barrier that has prevented analog recurrence, and that recurrence can be added to analog inference at linear marginal power cost relative to the quadratically-scaling feedforward backbone. End-to-end validation on keyword spotting achieves sub-microwatt power consumption for the RNN core. As always-on edge AI becomes increasingly prevalent, this co-design approach opens a path toward adding temporal intelligence to analog inference platforms at minimal additional cost. Broader impacts and code/data availability are discussed in Appendices~\ref{sec:broaderimpact} and~\ref{sec:softanddata}.

\section*{Acknowledgements}
A.F., J.B. and L.M. are, respectively, postdoctoral researcher (grant ASP-REN-40024838) and FRIA grantees (grants FRIA-40038025 and FRIA-B2-40029909) of the Fonds de la Recherche Scientifique -- FNRS. This work is supported by the Belgian Government, Federal Public Service Policy and Support, through grant NEMODEI2. Computational resources have been provided by the Consortium des Équipements de Calcul Intensif (CÉCI), funded by the FNRS under Grant No.\ 2.5020.11 and by the Walloon Region. The present research also benefited from computational resources made available on Lucia, the Tier-1 supercomputer of the Walloon Region, infrastructure funded by the Walloon Region under grant agreement No. 1910247. 

\section*{Disclosure}
This work has been the subject of patent applications under numbers EP26175243.0 and EP26175248.9.

\bibliographystyle{unsrtnat}
\bibliography{bib/bib}

\clearpage
\appendix

\section{Related work}\label{sec:related_work}

This appendix provides a comprehensive overview of related work across the domains bridged by our contribution: energy-efficient sequence models, neuromorphic and analog computing, and hardware-software co-design for neural networks.

\subsection{Sequence models for edge deployment}

\paragraph{Transformers and attention mechanisms.} Transformer architectures have achieved state-of-the-art performance across sequence modeling domains~\cite{vaswani2017attention}, but their quadratic complexity in sequence length and large memory footprint make them unsuitable for resource-constrained edge devices~\cite{fichtl2025end, tay2022efficient}. Various efficiency improvements have been proposed, including sparse attention patterns and linear attention variants, but these typically remain energy inefficient.

\paragraph{State-space models.} SSMs offer an efficient alternative through linear recurrence, enabling parallelizable training~\cite{gu2021efficiently, gu2024mamba, dao2024transformers, orvieto2023resurrecting, yue2024linear, wang2024mamba, gu2022parameterization}. However, their transient dynamics---states decaying exponentially toward equilibrium---prevent multistability~\cite{boyd1985fading}, limiting performance on tasks requiring persistent memory~\cite{vecoven2021bio, lambrechts2023warming}.

\paragraph{Recurrent neural networks.} Traditional RNNs provide multistability essential for long-term memory~\cite{marder1996memory} through nonlinear recurrent dynamics. LSTMs~\cite{hochreiter1997long} and GRUs~\cite{cho2014learning} have dominated sequence modeling for decades, though their complex gating mechanisms and inherently sequential nature create both training bottlenecks~\cite{bengio1994learning, pascanu2013difficulty} and hardware implementation challenges. Recent work on parallelizable RNN training~\cite{martin2017parallelizing, danieli2025pararnn} addresses the training bottleneck, and the Memory Recurrent Unit (MRU) family~\cite{BMRUref} combines instantaneous convergence with multistability, enabling parallel training via associative scans while maintaining persistent memory.

\subsection{TinyML and digital edge AI}

\paragraph{Application-specific integrated circuits.} ASICs achieve efficiency gains through parallelized operations and quantization~\cite{machupalli2022review, chen2019eyeriss}, but remain fundamentally digital with separated memory and computation~\cite{maass2002real}.

\paragraph{TinyML on microcontrollers.} The TinyML movement has extended neural network deployment to microcontrollers, typically enabling milliwatt-level inference~\cite{heydari2025tiny, banbury2021mlperf, warden2019tinyml, lin2020mcunet, fedorov2019sparse}. KWS has served as a primary benchmark for these systems~\cite{warden2018speech, lopez2022deep, hinton2012deep, chen2014small, arik2017convolutional, zhang2017hello, mazumder2022fast}. While significant progress has been made in model compression and efficient architectures~\cite{lin2020mcunet}, power consumption remains orders of magnitude above what analog approaches can achieve.

\paragraph{Quantization techniques.} Quantization-aware training and post-training quantization enable deployment on resource-constrained hardware~\cite{gholami2022survey, courbariaux2015binaryconnect, hubara2016binarized, yin2019understanding}. Our work demonstrates that the discrete-output nature of BMRUs provides inherent robustness to quantization, achieving minimal accuracy degradation with 4-bit parameter precision.

\subsection{Neuromorphic computing}

\paragraph{Spiking neural networks.} Digital neuromorphic chips like Intel Loihi 2 implement spiking neural networks (SNNs) with event-driven efficiency~\cite{davies2018loihi, orchard2021efficient, mead1990neuromorphic}. SNNs offer biological plausibility and potential energy benefits through sparse, event-driven computation. However, SNN performance still lags behind conventional RNNs on many sequence tasks~\cite{wu2024review, roy2019towards, lebow2021real, schuman2022opportunities}, and training remains challenging due to non-differentiable spike functions. Surrogate gradient methods~\cite{neftci2019surrogate, tian2023recent} have enabled gradient-based SNN training, techniques we also employ for training BMRUs with Heaviside nonlinearities.

\paragraph{Analog feedforward inference.} Analog crossbar arrays achieve in-memory computation through KCL~\cite{xia2019memristive, prezioso2015training, azghadi2020hardware, li2019long}. The Aspinity AML100~\cite{aspinity2022aml100} performs always-on event detection entirely in the analog domain at $\sim$30--100~$\mu$W using configurable analog blocks, representing the state of the art in programmable analog ML. However, all fully existing analog implementations---including crossbar arrays, the AML100, and flash-based approaches---are restricted to feedforward architectures. Extending fully analog computation to recurrent dynamics has been considered impractical due to noise accumulation through temporal feedback~\cite{semenova2022noise, kendall2020training}. Our work demonstrates that discrete-output BMRU dynamics overcome this barrier, enabling the first analog RNN.

\paragraph{FPGA implementations.} Digital FPGA implementations of recurrent architectures exist~\cite{chang2015recurrent}, offering flexibility and moderate efficiency gains over general-purpose processors, but lack the power efficiency of subthreshold analog circuits.

\subsection{Analog circuit design for neural networks}

\paragraph{Subthreshold analog computing.} Subthreshold operation exploits the exponential current-voltage relationship of MOSFETs below threshold for ultra-low power signal processing~\cite{soeleman2002robust, vittoz1977cmos, mead1989analog}. This regime enables computation at picoampere current levels---six orders of magnitude below typical digital switching currents. Our implementation builds on these principles, operating entirely in subthreshold with nanowatt power consumption.

\paragraph{Schmitt triggers.} The Schmitt trigger implements bistable thresholding with hysteresis, providing noise immunity through its characteristic hysteresis window~\cite{filanovsky1994cmos}. Traditional voltage-mode implementations operate with milliwatt power consumption~\cite{siripruchyanun2021fully}, unsuitable for ultra-low power applications. Our work leverages a current-mode Schmitt trigger design that achieves fully tunable thresholds and output amplitudes while operating in the nanowatt regime.

\paragraph{Current-mode circuits.} Current-mode signal processing offers advantages for analog neural networks, including natural summation via KCL and compatibility with subthreshold operation. Current mirrors with controlled width ratios implement weighted connections~\cite{allen2011cmos, razavi2001design}, with precision determined by layout matching~\cite{pelgrom1989matching, kinget2005device}, typically achieving 6--8 bit equivalent resolution in standard CMOS processes.

\subsection{Hardware-software co-design}

\paragraph{Co-design principles.} Hardware-software co-design seeks to jointly optimize algorithms and their physical implementations~\cite{wolf1994hardware}. Rather than adapting algorithms to existing hardware or designing hardware to implement existing algorithms, co-design identifies synergies where algorithmic structure aligns with hardware capabilities. Our work exemplifies this approach: the BMRU discrete outputs and hysteretic dynamics map directly to Schmitt trigger behavior, enabling efficiency gains unattainable by optimizing either domain independently.

\paragraph{Neural network accelerators.} Prior co-design efforts have focused primarily on digital accelerators~\cite{chen2019eyeriss, machupalli2022review}, optimizing dataflow and memory hierarchy for specific network architectures. Analog co-design remains less explored, particularly for recurrent networks where temporal feedback complicates noise management.

\subsection{Energy consumption in AI systems}

The computational demands of modern AI have grown exponentially, with AI systems now consuming significant portions of global energy resources~\cite{faiz2024llmcarbon, barbierato2024toward}. Energy efficiency in computing has been extensively studied~\cite{horowitz2014computing, sze2017efficient}, with particular attention to the memory wall and data movement costs inherent to Von Neumann architectures~\cite{sarpeshkar1998analog, sebastian2020memory, ielmini2018memory}. Always-on applications---environmental sensor networks, biomedical monitoring devices, and edge devices processing sequential data~\cite{lee2021neural, seo2013neural, sharifshazileh2021electronic}---face particularly acute power constraints that motivate our ultra-low power approach.

\section{Expressivity proofs}\label{sec:proofs}

This appendix establishes that the simplifications introduced in the FQ BMRU do not reduce expressivity compared to the original BMRU in an approximate sense. We prove two results: (1) restricting outputs to $\{0, \alpha\}$ instead of $\{-\alpha, +\alpha\}$ preserves expressivity exactly (Proposition~\ref{prop:output}), and (2) replacing input-dependent thresholds with fixed thresholds, when combined with a preceding ReLU MLP, can approximate the original BMRU arbitrarily well (Proposition~\ref{prop:threshold}). Together they show that the FQ BMRU is a universal approximator of the original BMRU function class.

\begin{definition}[Original BMRU]
The original BMRU cell computes:
\begin{align*}
    \hat{h}_t &= W_x x_t + b_x, \\
    \beta_t &= \left|W_\beta x_t + b_\beta\right|, \\
    z_t &= \mathcal{H}\bigl(|\hat{h}_t| - \beta_t\bigr), \\
    h_t &= z_t \odot S(\hat{h}_t) \odot \alpha + (1 - z_t) \odot h_{t-1},
\end{align*}
where $x_t \in \mathbb{R}^n$, $W_x, W_\beta \in \mathbb{R}^{d \times n}$, $b_x, b_\beta, \alpha \in \mathbb{R}^d$, $\mathcal{H}(\cdot)$ is the element-wise Heaviside step function, and $\odot$ denotes Hadamard product.
\end{definition}

\begin{definition}[FQ BMRU]
The FQ BMRU cell computes:
\begin{align*}
    \hat{h}_t &= \text{ReLU}\left(\tilde{W}_x x_t + \tilde{b}_x\right), \\
    z_{\mathrm{lo},t} &= \mathcal{H}\bigl(\beta_{\mathrm{lo}} - \hat{h}_t\bigr), \\
    z_{\mathrm{hi},t} &= \mathcal{H}\bigl(\hat{h}_t - \beta_{\mathrm{hi}}\bigr), \\
    h_t &= z_{\mathrm{hi},t} \odot \alpha + (1 - z_{\mathrm{lo},t}) \odot (1 - z_{\mathrm{hi},t}) \odot h_{t-1},
\end{align*}
where $x_t \in \mathbb{R}^n$, $\tilde{W}_x \in \mathbb{R}^{d \times n}$, $\tilde{b}_x, \beta_{\mathrm{lo}}, \beta_{\mathrm{hi}}, \alpha \in \mathbb{R}^d$, and $\beta_{\mathrm{hi}} > \beta_{\mathrm{lo}}$ element-wise.
\end{definition}

\subsection{Output range equivalence}

\begin{proposition}\label{prop:output}
Let $f$ be a function computed by a BMRU cell producing bipolar outputs $h^{(\pm)} \in \{-\alpha, +\alpha\}^d$, followed by a linear layer $y = W h^{(\pm)} + b$ with $W \in \mathbb{R}^{m \times d}$ and $b \in \mathbb{R}^m$. Then the same function is computed by a BMRU cell with unipolar outputs $h^{(+)} \in \{0, \alpha\}^d$ followed by $y = \tilde{W} h^{(+)} + \tilde{b}$, where $\tilde{W} = 2W$ and $\tilde{b} = b - W \alpha$.
\end{proposition}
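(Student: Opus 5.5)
The plan is to exhibit an explicit affine bijection between the bipolar and unipolar state spaces that commutes with the cell dynamics, and then absorb that affine map into the readout layer. Define, element-wise,
\begin{equation*}
\phi(v) = \tfrac{1}{2}\,(v + \alpha), \qquad \phi^{-1}(u) = 2u - \alpha,
\end{equation*}
so that $\phi$ sends $-\alpha_i \mapsto 0$ and $+\alpha_i \mapsto \alpha_i$ coordinate-wise. Hence $\phi$ restricts to a bijection $\{-\alpha,+\alpha\}^d \to \{0,\alpha\}^d$.

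First I fix precisely what the unipolar cell is. It uses the same candidate $\hat h_t$, the same threshold $\beta_t$ and the same gate $z_t$ as the original BMRU. The only change is that $S(\hat h_t)\odot\alpha$ is replaced by $\mathcal{H}(\hat h_t)\odot\alpha$ in the update:
\begin{equation*}
h^{(+)}_t = z_t \odot \mathcal{H}(\hat h_t)\odot\alpha + (1-z_t)\odot h^{(+)}_{t-1}.
\end{equation*}
Its initial state is taken to be $h^{(+)}_0 = \phi(h^{(\pm)}_0)$. This requires $h^{(\pm)}_0 \in \{-\alpha,+\alpha\}^d$, consistent with the hypothesis that the bipolar outputs lie in that set; I would state this assumption explicitly.

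The key step is the induction $h^{(+)}_t = \phi(h^{(\pm)}_t)$ for all $t$, and I expect it to be the only nontrivial point. It is an element-wise case analysis on the gate.
\begin{itemize}[leftmargin=*]
\item If $z_{t,i}=1$, the gate can only open when $|\hat h_{t,i}| > \beta_{t,i} \ge 0$, so $\hat h_{t,i}\neq 0$. Then $S(\hat h_{t,i})\in\{\pm1\}$, and $\phi\bigl(S(\hat h_{t,i})\alpha_i\bigr) = \mathcal{H}(\hat h_{t,i})\,\alpha_i$. The ambiguity of $S(0)$ never matters, regardless of the Heaviside convention at $0$.
\item If $z_{t,i}=0$, both cells copy their previous state, and the inductive hypothesis gives the claim.
\end{itemize}
Since $\phi$ is affine and applied coordinate-wise, the convex-combination structure of the update, with $z_t\in\{0,1\}$, is preserved.

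Finally, I substitute $h^{(\pm)}_t = \phi^{-1}(h^{(+)}_t) = 2h^{(+)}_t - \alpha$ into the readout:
\begin{equation*}
y_t = W h^{(\pm)}_t + b = 2W h^{(+)}_t + (b - W\alpha) = \tilde W h^{(+)}_t + \tilde b.
\end{equation*}
This identity holds at every timestep, so the input-output map of the two systems is identical. For the converse direction, the same map $\phi^{-1}$ shows that any unipolar network is realized by a bipolar one with $W = \tilde W/2$. Thus the function classes coincide exactly.
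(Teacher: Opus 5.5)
Your proof is correct and follows the paper's route. You use the same coordinate-wise affine bijection $h^{(+)} = \tfrac{1}{2}(h^{(\pm)}+\alpha)$, invert it, and substitute into the readout to get $\tilde W = 2W$ and $\tilde b = b - W\alpha$; your induction over the gate cases usefully makes explicit what the paper leaves implicit, namely that the unipolar cell's recurrent state actually equals $\phi(h^{(\pm)}_t)$ at every step.

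One minor caveat: the claim that an open gate forces $\hat h_{t,i}\neq 0$ ``regardless of the Heaviside convention'' fails if $\mathcal{H}(0)=1$ and $\hat h_{t,i}=\beta_{t,i}=0$. It is cleaner to define the unipolar output nonlinearity as $\tfrac{1}{2}\bigl(S(\cdot)+1\bigr)$, which makes $\phi$ commute with the update unconditionally.
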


\begin{proof}
Define the mapping $h^{(+)}_i = \frac{1}{2}(h^{(\pm)}_i + \alpha_i)$ for each dimension $i \in \{1, \ldots, d\}$. This is a bijection between $\{-\alpha_i, +\alpha_i\}$ and $\{0, \alpha_i\}$:
\begin{itemize}
    \item $h^{(\pm)}_i = +\alpha_i \implies h^{(+)}_i = \frac{1}{2}(\alpha_i + \alpha_i) = \alpha_i$,
    \item $h^{(\pm)}_i = -\alpha_i \implies h^{(+)}_i = \frac{1}{2}(-\alpha_i + \alpha_i) = 0$.
\end{itemize}

The inverse mapping is $h^{(\pm)} = 2 h^{(+)} - \alpha$. Substituting into the linear layer:
\begin{align*}
    y = W h^{(\pm)} + b &= W (2 h^{(+)} - \alpha) + b \\
    &= 2W h^{(+)} + (b - W\alpha).
\end{align*}

Thus, setting $\tilde{W} = 2W$ and $\tilde{b} = b - W\alpha$ yields $y = \tilde{W} h^{(+)} + \tilde{b}$. Since this parameter transformation is bijective, the two representations span identical function spaces.
\end{proof}

\subsection{Fixed threshold approximation}

\begin{proposition}[Fixed threshold approximation]\label{prop:threshold}
Let $\mathcal{X}\subset\mathbb{R}^n$ be a compact set. For any original BMRU cell and any $\epsilon>0$, there exists a ReLU MLP $\Phi:\mathbb{R}^n\to\mathbb{R}^d$ and fixed thresholds $\beta_{\mathrm{lo}},\beta_{\mathrm{hi}}\in\mathbb{R}^d$ with $\beta_{\mathrm{hi}}>\beta_{\mathrm{lo}}$ such that the FQ BMRU cell, with $\hat{h}_t = \Phi(x_t)$ (absorbing the cell's own ReLU layer into the MLP), reproduces the gating behaviour of the original cell on $\mathcal{X}$ except on a set of measure at most $\epsilon$. The resulting state update can therefore be approximated arbitrarily closely.
\end{proposition}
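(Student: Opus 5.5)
The plan is to make the fixed-threshold cell reproduce the three-way decision of the original cell on $\mathcal{X}$, per coordinate. For each coordinate $i$, the original cell sorts each input $x$ into one of three regions:
\[
A_i^+=\{x: \hat h_i(x)-\beta_i(x)>0,\ \hat h_i(x)>0\},
\]
where it sets the state to $+\alpha_i$, i.e.\ to $\alpha_i$ after the relabelling of Proposition~\ref{prop:output};
\[
A_i^-=\{x: -\hat h_i(x)-\beta_i(x)>0,\ \hat h_i(x)<0\},
\]
where it sets the state to $-\alpha_i$, i.e.\ to $0$; and the complement $M_i$, where it holds the previous state. Using the bijection of Proposition~\ref{prop:output}, the original and FQ updates then have identical state-update maps as soon as the FQ candidate satisfies
\[
\Phi_i(x)>\beta_{\mathrm{hi},i}\ \text{on } A_i^+,\qquad \Phi_i(x)<\beta_{\mathrm{lo},i}\ \text{on } A_i^-,\qquad \Phi_i(x)\in[\beta_{\mathrm{lo},i},\beta_{\mathrm{hi},i}]\ \text{on } M_i .
\]
Fix $\beta_{\mathrm{lo},i}=1$ and $\beta_{\mathrm{hi},i}=2$. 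It then suffices to approximate the piecewise-constant target
\[
g_i=\tfrac{5}{2}\,\mathbf 1_{A_i^+}+\tfrac12\,\mathbf 1_{A_i^-}+\tfrac32\,\mathbf 1_{M_i}
\]
within $1/2$ outside a small exceptional set.

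The key observation is that $g_i$ is determined by the two continuous, in fact piecewise-linear, functions
\[
u_i(x)=\hat h_i(x)-\beta_i(x),\qquad v_i(x)=-\hat h_i(x)-\beta_i(x).
\]
Since $\beta_i\ge 0$, the two regions are $A_i^+=\{u_i>0\}$ and $A_i^-=\{v_i>0\}$, and they are disjoint. Both $u_i$ and $v_i$ are exactly representable by ReLU networks, because $|t|=\mathrm{ReLU}(t)+\mathrm{ReLU}(-t)$. I would then build a continuous ramp surrogate. Let $\sigma_\delta(t)=\min(1,\max(0,t/\delta))$, which is a difference of two ReLUs, and set
\[
\Phi_i=\tfrac32+\sigma_\delta(u_i)-\sigma_\delta(v_i).
\]
This is an exact ReLU MLP with values in $[1/2,5/2]$, so it is nonnegative, and the cell's own ReLU layer acts as the identity on it and can be absorbed. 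The surrogate equals $5/2$ where $u_i\ge\delta$, equals $1/2$ where $v_i\ge\delta$, and equals $3/2$ where $u_i\le0$ and $v_i\le0$. The gate decision is therefore exact outside
\[
E_\delta=\bigcup_i\bigl(\{0<u_i<\delta\}\cup\{0<v_i<\delta\}\bigr)\cap\mathcal X .
\]

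The main obstacle is showing that the Lebesgue measure of $E_\delta$ tends to $0$. The sets $\{0<u_i<\delta\}$ decrease, as $\delta\to 0$, to the empty set, so continuity from above of the finite measure on the compact set $\mathcal X$ gives $|E_\delta|\to0$ with no need for the level sets $\{u_i=0\}$ to be null. This keeps the argument clean even in degenerate cases, for instance when $W_\beta=\pm W_x$ and $u_i$ vanishes on an open set. Choosing $\delta$ with $|E_\delta|\le\epsilon$ completes the gating claim. I would finally remark that, since the states take values in a finite set, agreement of the gates for every $t$ outside $E_\delta$ yields identical state trajectories there, which is the sense of the last sentence of the statement.
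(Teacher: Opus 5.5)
Your argument is correct. It shares the paper's broad idea: the preceding ReLU MLP absorbs the piecewise-linear decision structure, and fixed thresholds then separate the three regimes up to a thin disagreement set. You carry that idea out by a genuinely more concrete route.

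The paper argues existentially. An MLP can represent $\beta_i(x)$ and hence produce ``features that implicitly encode the needed decision functions''. It then recentres the fixed window via $\mu_i$ and $\sigma_i$ so that the half-width $\sigma_i$ ``plays the role of'' $\beta_i$. Finally it invokes ReLU approximation of signed distances to the decision boundaries.

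You instead put the input-dependent comparison directly into the candidate, through the exactly representable margins $u_i=\hat h_i-\beta_i$ and $v_i=-\hat h_i-\beta_i$. You then encode the three-way decision as three output levels straddling an arbitrary fixed window. This buys several things:
\begin{itemize}
\item an explicit shallow network (two hidden layers suffice) with explicit thresholds;
\item no recentring step. That step is not just unneeded but problematic: a fixed half-width $\sigma_i$ cannot by itself stand in for the input-dependent $\beta_i(x)$. The comparison with $\beta_i(x)$ has to be carried by the candidate, as your $u_i,v_i$ do;
\item exact agreement on the whole hold region $M_i$, with disagreement confined to the one-sided strips $\{0<u_i<\delta\}$ and $\{0<v_i<\delta\}$;
\item a measure bound by continuity from above that survives degenerate parameters where $\{u_i=0\}$ has positive measure. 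The paper's signed-distance appeal tacitly treats the boundaries as null sets.
\end{itemize}
The paper's phrasing instead stresses the general principle that any continuous piecewise-linear decision structure can be absorbed upstream. Your construction is a clean instance of that principle.

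One small caveat. The strict inequalities defining $A_i^\pm$ fix the convention $\mathcal{H}(0)=0$ for the original gate, which is consistent with the paper's word ``exceeds''. Your degenerate-case remark depends on this convention. Under $\mathcal{H}(0)=1$, replace $\sigma_\delta(u_i)$ and $\sigma_\delta(v_i)$ by $\sigma_\delta(u_i+\delta)$ and $\sigma_\delta(v_i+\delta)$. The same argument then goes through with the strips moved to the other side of the level sets.
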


\begin{proof}
The original BMRU gate for dimension $i$ compares $\hat{h}_i(x)=[W_x]_i x+[b_x]_i$ against an input-dependent threshold $\beta_i(x)=|[W_\beta]_i x+[b_\beta]_i|$. Because $\beta_i(x)$ is piecewise-linear, an MLP before the cell can directly absorb this computation: a ReLU MLP can represent any continuous piecewise-linear function on a compact set, so it can learn to produce features that implicitly encode the needed decision functions.

A second, geometric difference is that the original hold region $\{x:|\hat{h}_i(x)|<\beta_i(x)\}$ is symmetric around zero after thresholding, while the FQ BMRU hold region $\{x:\beta_{\mathrm{lo}}<\hat{h}_i<\beta_{\mathrm{hi}}\}$ is not necessarily centered. However, this asymmetry can be compensated by a simple affine transformation inside the MLP: let $\mu_i=(\beta_{\mathrm{hi}}+\beta_{\mathrm{lo}})/2$ and $\sigma_i=(\beta_{\mathrm{hi}}-\beta_{\mathrm{lo}})/2$. Then shifting the MLP output by $\mu_i$ and scaling by $1/\sigma_i$ maps the FQ hold region to a symmetric interval $(-1,1)$. In other words, the preceding MLP can learn to offset the input so that the effective hysteretic window becomes centered, with half-width $\sigma_i$ playing exactly the role of the original $\beta_i$.

Consequently, an MLP can first project the input $x$ into a space where the required piecewise-linear boundaries are encoded, and then shift and scale the projections so that fixed thresholds $\beta_{\mathrm{lo}},\beta_{\mathrm{hi}}$ separate the ``set high'', ``hold'' and ``set low'' regions in the same way as the original boundaries, up to an arbitrarily narrow gap where the original touching regions would meet. Because ReLU MLPs can approximate the signed distance to these boundaries, the measure of the disagreement region can be made smaller than any given $\epsilon$, and the resulting gate outputs (and thus the hidden state update) approximate the original BMRU uniformly on $\mathcal{X}$ except on that small set.
\end{proof}

\subsection{Combined result}

\begin{theorem}[FQ BMRU Universal Approximation]\label{thm:main}
The FQ BMRU with fixed thresholds and unipolar outputs $\{0, \alpha\}$, when preceded by a ReLU MLP and followed by linear projection layers, can approximate any network based on the original BMRU (with input-dependent thresholds and bipolar outputs) arbitrarily closely on compact subsets of the input space.
\end{theorem}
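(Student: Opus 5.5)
The plan is to combine Proposition~\ref{prop:output} and Proposition~\ref{prop:threshold} cell by cell, then propagate the approximation through the stacked network by induction on depth. Fix a network built from original BMRU layers interleaved with linear/MLP layers, a compact input set $\mathcal{X}$, a finite horizon $T$, and a tolerance $\epsilon>0$.

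\textbf{Step 1: single cell.} For one original BMRU cell, use Proposition~\ref{prop:threshold} to build a ReLU MLP $\Phi$ and fixed thresholds $\beta_{\mathrm{lo}},\beta_{\mathrm{hi}}$ so that the FQ gates reproduce the original ``set high'', ``hold'' and ``set low'' decisions on $\mathcal{X}$ outside a set $E$ of measure at most $\epsilon'$. The original gate has three outcomes: $z=1$ with $S(\hat h)=+1$, $z=1$ with $S(\hat h)=-1$, and $z=0$. These correspond to the FQ outcomes $z_{\mathrm{hi}}=1$, $z_{\mathrm{lo}}=1$, and hold. Under this correspondence the FQ state equals $h^{(+)}=\tfrac12(h^{(\pm)}+\alpha)$ whenever all inputs $x_1,\dots,x_t$ avoid $E$. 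This follows by induction on $t$, with the initial state $h_0^{(\pm)}=-\alpha$ mapped to $0$, or with any initial state mapped through the same affine bijection. Proposition~\ref{prop:output} then lets us absorb the affine map into the next linear layer via $\tilde W=2W$ and $\tilde b=b-W\alpha$. The composed output therefore agrees exactly off the bad event $\bigcup_t\{x_t\in E\}$.

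\textbf{Step 2: depth.} Any MLP or linear layer that originally consumed $h^{(\pm)}$ is rewritten to consume $h^{(+)}$ by the same substitution. The FQ cell's preceding ReLU MLP $\Phi$ is composed with the previous layer's linear readout, which is still a ReLU MLP. Proceeding layer by layer, the FQ network coincides exactly with the original one except when some layer's input lands in its disagreement set. One subtlety is that layer $\ell$'s inputs range over the image of the earlier layers, not over $\mathcal{X}$. This image is again compact, since it is a finite set of discrete states pushed through continuous affine maps, so Proposition~\ref{prop:threshold} applies with that compact set.

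\textbf{Step 3: accumulate errors.} A union bound over $L$ layers, $d$ units and $T$ timesteps bounds the measure of the disagreement region by $LdT\epsilon'$. Choosing $\epsilon'=\epsilon/(LdT)$ gives agreement outside a set of measure $\epsilon$. Off that set the error is zero. On it the outputs are bounded, because states lie in $\{0,\alpha\}$ and the layers are continuous on compacts. Hence the $L^p$ error, for any finite $p$, tends to zero.

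\textbf{Main obstacle.} The hardest part is the measure-theoretic bookkeeping in Step 2, for two reasons. First, a uniform (sup-norm) statement is false in general, because of the thin gap at the boundaries where the hysteresis decision flips. The theorem therefore has to be read as ``arbitrarily close except on sets of arbitrarily small measure'' or in $L^p$, and I would make this explicit. Second, deeper layers receive inputs whose distribution is a pushforward, not Lebesgue measure on $\mathcal{X}$. My first attempt would be to pull the disagreement sets back to the sequence space $\mathcal{X}^T$. The original gate boundaries are zero sets of nonzero piecewise-linear functions, so they have measure zero, and their $\delta$-neighbourhoods have measure tending to zero as $\delta\to0$. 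Letting the MLP's gap width $\delta\to0$ in each layer then controls the bad set directly in $\mathcal{X}^T$. Degenerate cases, such as a gate function that vanishes on a set of positive measure, would be handled by noting that the original gate is then constant there and can be matched exactly.
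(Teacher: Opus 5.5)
Your proof follows the same route as the paper's. You apply Proposition~\ref{prop:threshold} cell by cell to trade the input-dependent gate for a ReLU MLP with fixed thresholds. You then use Proposition~\ref{prop:output} to absorb $h^{(\pm)}=2h^{(+)}-\alpha$ into the next affine layer, and compose through the network. The paper does this in two sentences and concludes uniform approximation on compacta. Your additions are what that composition actually requires: induction over $t$ (one gating disagreement persists through the hold branch), induction over depth, and a union bound over layers, units and timesteps. Your measure/$L^p$ reading of the conclusion is the one Proposition~\ref{prop:threshold} supports, since that proposition only gives agreement off a small-measure set.

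Two points to tighten. First, as you note yourself, the claim in Step~2 that Proposition~\ref{prop:threshold} ``applies with that compact set'' does not suffice. Deeper-layer inputs follow a push-forward law that need not be absolutely continuous; without skip connections they take only finitely many values. A Lebesgue-measure guarantee in that layer's input space can then control nothing, so the union bound in Step~3 is not justified as written. Your pull-back fix is the right one, but it needs a construction-level version of the proposition rather than its statement. An explicit construction also makes your null-boundary and degenerate-case discussion unnecessary. View $p=\hat h-\beta$ and $q=-\hat h-\beta$ as functions of the layer input and take $f=\min(2,\max(0,p/\delta))-\min(2,\max(0,q/\delta))$, $\Phi=\mathrm{ReLU}(f+3)$, $\beta_{\mathrm{lo}}=2$, $\beta_{\mathrm{hi}}=4$. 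This is an exact ReLU MLP, and with $\mathcal{H}(0)=0$ its gates differ from the original ones only on $\{0<p\le\delta\}\cup\{0<q\le\delta\}$. These sets decrease to $\emptyset$ as $\delta\downarrow 0$, so their mass vanishes under any finite measure. In particular it vanishes under the push-forward of Lebesgue measure on $\mathcal{X}^T$ through the \emph{original} network's layer inputs, which do not depend on the chosen $\delta$'s.

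Second, the thin gap only shows that one FQ unit cannot uniformly replace one original unit. A single unit's regions $\{\Phi>\beta_{\mathrm{hi}}\}$ and $\{\Phi<\beta_{\mathrm{lo}}\}$ have disjoint closures, while $\{p>0\}$ and $\{q>0\}$ touch where $\hat h=\beta=0$. It does not show that no FQ network can approximate uniformly at a fixed horizon. For $T=1$, two zero-initialized FQ units that set on $\{p>0\}$ and on $\{q>0\}$ already reproduce the cell's output exactly after a linear readout. So state your result as the measure/$L^p$ version rather than asserting that the uniform one is false.
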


\begin{proof}
Given an original BMRU network, first apply Proposition~\ref{prop:threshold} to replace each original cell with a fixed-threshold FQ cell plus a preceding ReLU MLP; the approximation error in the gating can be made arbitrarily small. Next, apply Proposition~\ref{prop:output} to convert the bipolar outputs of the original network to unipolar outputs via an affine reparameterization of the subsequent linear layers; this conversion is exact and bijective. Composing the two steps yields an FQ BMRU network that approximates the original network's function uniformly on compacta with arbitrarily small error.
\end{proof}

\section{Training and software inference details}\label{sec:softwareKWS}
This appendix details the implementation for training and evaluating the BMRU and FQ BMRU in software, including benchmark task descriptions, dataset characteristics, and evaluation protocols.

\subsection{Benchmark task descriptions}\label{sec:taskdetails}
\subsubsection{Sequential MNIST (sMNIST)}
Sequential MNIST presents handwritten digit images ($28 \times 28$ pixels) as sequences of 784 grayscale pixel values in raster scan order (left-to-right, top-to-bottom). The task requires classifying the digit (0--9) based solely on the sequential pixel stream.

\paragraph{Task structure.} Each input sequence $\{x_t\}_{t=1}^{784}$ consists of normalized pixel intensities $x_t \in [0,1]$. The model processes the sequence and produces a classification decision at the final timestep. Ground truth labels $y \in \{0,1,\ldots,9\}$ provide supervision at all steps.

\paragraph{Dataset.} We use the standard MNIST dataset~\cite{lecun1998gradient}, comprising 70,000 images partitioned into approximately \SI{80}{\percent}/\SI{10}{\percent}/\SI{10}{\percent} training/validation/test splits.

\paragraph{Evaluation metric.} Classification accuracy on the test set. Random-guess baseline is \SI{10}{\percent}.

\subsubsection{Permuted MNIST (pMNIST)}
Permuted MNIST applies a fixed random permutation to the pixel ordering of MNIST images, destroying spatial correlations while preserving information content.

\paragraph{Task structure.} Given a fixed permutation $\pi: \{1,\ldots,784\} \to \{1,\ldots,784\}$, each image is presented as the sequence $\{x_{\pi(t)}\}_{t=1}^{784}$. The permutation is identical across all samples and remains fixed during training and evaluation.

\paragraph{Dataset and evaluation.} Identical to sMNIST.





\subsubsection{ListOps}

ListOps evaluates compositional reasoning over hierarchically structured sequences of operations \cite{tay2021long}.

\paragraph{Task structure.} Sequences represent nested operations (MAX, MIN, MEDIAN, SUM\_MOD) applied to integer arguments. Sequences are linearized in prefix notation with special tokens marking brackets and operations. Each token is mapped to a learned embedding, and the model processes sequences with mean pooling to produce the final numerical result.

\paragraph{Dataset.} We use the standard ListOps dataset \cite{nangia2018listops} with sequences up to length 2,000. The training set consists of 96,000 samples, while both the validation and test sets contain 2,000 samples each.

\paragraph{Evaluation metric.} Classification accuracy on the discrete output values (10 possible output classes).

\subsubsection{Keyword spotting: digits (dKWS)}
This task evaluates spoken digit recognition using the Google Speech Commands dataset (v0.02)~\cite{warden2018speech}, containing 105,829 one-second audio clips of 35 spoken words recorded at \SI{16}{kHz}.

\paragraph{Task structure.} 11-way classification distinguishing spoken digits (``zero'' through ``nine'') and background noise extracted from ambient recordings in the dataset.

\paragraph{Feature extraction.} Raw audio is processed using \texttt{librosa}~\cite{mcfee2015librosa} to extract 13-dimensional Mel-frequency cepstral coefficients (MFCCs) with the following parameters: sample rate \SI{16}{kHz}, FFT window size 512 samples, hop length 160 samples (\SI{10}{ms}), yielding 101 frames per one-second clip. Features are normalized per coefficient to zero mean and unit variance. MFCC extraction is assumed to occur off-chip; only the extracted features are processed by the analog network.

\paragraph{Dataset.} We use the official dataset partitioning, yielding approximately \SI{80}{\percent}/\SI{10}{\percent}/\SI{10}{\percent} training/validation/test splits with no speaker overlap between partitions. Negative examples (background noise) are sampled to match positive examples, ensuring balanced training.

\paragraph{Evaluation metric.} Classification accuracy on the test set. Random-guess baseline is \SI{9.09}{\percent}.

\subsubsection{Shakespeare Language Modeling}
Character-level language modeling on the complete works of William Shakespeare evaluates a model ability to capture long-range linguistic dependencies in English text.

\paragraph{Task structure.}
Given a sequence of characters $\{c_1, c_2, \dots, c_T\}$ drawn from the text, the model receives the embedding of $c_t$ at timestep $t$ and must predict a probability distribution over the next character $c_{t+1}$. The vocabulary consists of $V = 65$ distinct characters (lowercase and uppercase letters, punctuation, and whitespace). The loss is the mean cross-entropy across all prediction steps.

\paragraph{Dataset.}
We use the complete works of Shakespeare, a standard benchmark for character-level language modeling~\cite{karpathy2015unreasonable}. The corpus contains approximately 5 million characters, which we split by plays into training (\SI{80}{\percent}), validation (\SI{10}{\percent}), and test (\SI{10}{\percent}) sets, preserving document boundaries.

\paragraph{Evaluation metric.}
Mean cross-entropy loss on the test set. For comparability with prior work, bits-per-character (BPC) can be obtained as $\text{BPC} = \mathcal{L} / \ln 2$, where $\mathcal{L}$ is the average cross-entropy. A uniform random predictor over the 65-character vocabulary yields $\text{BPC} = \log_2(65) \approx 6.02$ (or equivalently a cross-entropy loss of $\ln 65 \approx 4.17$).

\subsubsection{Keyword spotting: yes versus others (``yes'' KWS)}
This binary classification task represents a wake-word detection scenario, distinguishing a target keyword from confusable alternatives and background noise.

\paragraph{Task structure.} Binary classification detecting the word ``yes'' against negative words: ``no'', ``up'', ``down'', ``left'', ``right'', and background noise.

\paragraph{Feature extraction and dataset.} Identical to dKWS. Negative examples are sampled equally from each negative category to match the number of positive examples.

\paragraph{Evaluation metric.} Classification accuracy on the test set, computed over 100 random pairings of all 397 positive samples with 397 randomly selected negative samples across 5 random seeds, reporting mean and min-max range. Random-guess baseline is \SI{50}{\percent}. Table~\ref{tab:accuracy} presents classification accuracy as a function of state dimension $d$ for the FQ BMRU trained in software. 

\begin{table}[t!]
\caption{%
  Binary ``yes'' KWS accuracy (Google Speech Commands dataset).
  Results averaged over 5 training seeds; brackets show [min; max] across seeds.
  Each seed is evaluated on 100 balanced test sets (397 ``yes'' samples paired with 397 random negatives), with accuracy min-max ranges capturing variability across these pairings.
}
\label{tab:accuracy}
\centering
\small
\setlength{\tabcolsep}{3pt}
\begin{tabular}{lccccccc}
\toprule
State dimension & 4 & 8 & 16 & 32 & 64 & 128 & 256 \\
\midrule
Accuracy (\%) &
\shortstack{93.86 \\ {\scriptsize [91.47; 94.98]}} &
\shortstack{96.98 \\ {\scriptsize [96.00; 98.19]}} &
\shortstack{98.08 \\ {\scriptsize [97.85; 98.44]}} &
\shortstack{98.16 \\ {\scriptsize [97.71; 98.69]}} &
\shortstack{97.51 \\ {\scriptsize [96.97; 97.93]}} &
\shortstack{97.65 \\ {\scriptsize [97.48; 97.81]}} &
\shortstack{97.78 \\ {\scriptsize [97.48; 98.10]}}  \\
\bottomrule
\end{tabular}
\end{table}

\subsection{Training procedure}\label{sec:training}
We use standard training practices throughout; no architectural modifications or hyperparameter tuning specific to the first-quadrant formulation are required.

\subsubsection{Additional recurrent cell implementations}
This section provides detailed specifications of the recurrent cells evaluated in our experiments. For the BMRU and FQ BMRU architectures, we refer to Equations~\eqref{eq:bmru_candidate}--\eqref{eq:fq_update} in the main text.

\paragraph{Linear Recurrent Unit (LRU).} 
The LRU \cite{orvieto2023resurrecting} implements a diagonal complex-valued state-space model with strictly decaying exponentially parameterized eigenvalues for numerical stability. The state update follows:
\begin{align}
    x_t &= \Lambda \odot x_{t-1} + B u_t\,, \label{eq:LRU1}\\
    y_t &= \mathrm{Re}(C x_t) + D u_t\,,\label{eq:LRU2}
\end{align}
where $x_t \in \mathbb{C}^d$ is the complex-valued hidden state, $u_t \in \mathbb{R}^m$ is the input, $y_t \in \mathbb{R}^d$ is the output, and $\Lambda \in \mathbb{C}^d$ is a diagonal matrix of eigenvalues. The eigenvalues are parameterized as:
\begin{equation}
    \Lambda = \exp(-\exp(\nu) + i\exp(\theta))\,,
\end{equation}
where $\nu, \theta \in \mathbb{R}^d$ are learned parameters initialized such that eigenvalue magnitudes lie in $[r_{\min}, r_{\max}]$ with uniformly distributed phases in $[0, 2\pi]$. The input matrix $B \in \mathbb{C}^{d \times m}$ is normalized by $\gamma = \sqrt{1 - |\Lambda|^2}$ to maintain consistent input scaling across different eigenvalue magnitudes. The output matrices $C \in \mathbb{C}^{d \times d}$ and $D \in \mathbb{R}^{d \times m}$ are learned parameters.

The LRU is fully parallelizable via associative scan over the linear recurrence.

\paragraph{Minimal Gated Recurrent Unit (minGRU).}

The minGRU \cite{feng2024minGRU} simplifies the traditional GRU by removing hidden state dependencies from the gating mechanism, enabling full parallelization while maintaining competitive performance. The update rule is:
\begin{align}
    z_t &= \sigma(W_z x_t + b_z)\,, \label{eq:minGRU1}\\
    \tilde{h}_t &= W_h x_t + b_h\,, \label{eq:minGRU2}\\
    h_t &= (1 - z_t) \odot h_{t-1} + z_t \odot \tilde{h}_t\,,\label{eq:minGRU3}
\end{align}
where $h_t \in \mathbb{R}^d$ is the hidden state, $x_t \in \mathbb{R}^m$ is the input, $z_t \in [0,1]^d$ is the update gate, $\tilde{h}_t \in \mathbb{R}^d$ is the candidate hidden state, and $\sigma(\cdot)$ denotes the sigmoid function.

A critical architectural property is that the sigmoid gate ensures $z_t \in (0, 1)$ (open interval), meaning the values $z_t = 0$ and $z_t = 1$ are only approached asymptotically but never attained.

The update equation is a linear recurrence in $h_t$ and can be parallelized via associative scan.

\subsubsection{Backbone architecture for software inference (Table~\ref{tab:benchmarks})}\label{sec:backbone_software}

The benchmark experiments in Table~\ref{tab:benchmarks} use the following parallelizable RNN backbone. Only the recurrent cell varies across rows of Table~\ref{tab:benchmarks}; the encoder, blocks, decoder, and training procedure are shared. The model dimension is fixed at $m = 256$. Classification rows use $r = 2$ stacked recurrent blocks with state dimension $d = 64$; the Shakespeare row uses depth $r = 6$ and $d = m = 256$.

\paragraph{Input encoder.} Raw task-specific inputs $\hat{x}_t \in \mathbb{R}^{d_{\text{task}}}$ are first projected to the model dimension and then refined by a residual MLP:
\begin{equation}
    x_t = \tilde{x}_t + \MLP{\tilde{x}_t}, \qquad \tilde{x}_t = \Linear{\hat{x}_t} \in \mathbb{R}^m.
\end{equation}

\paragraph{Positional encoding.} For all tasks of Table~\ref{tab:benchmarks}, a 32-dimensional sinusoidal Positional Encoding (PE)~\cite{zheng2021rethinking} is concatenated to $x_t$ at each timestep and projected back to $\mathbb{R}^m$ through an additional $\Linear{\cdot}$.

\paragraph{Block structure.} The backbone consists of $r$ stacked blocks. Each block alternates a recurrent sub-layer and a point-wise MLP sub-layer, both wrapped in a pre-norm residual scheme:
\begin{equation}
    y = \upsilon \odot x + \SubLayer{\Norm{x}},
\end{equation}
where $\upsilon \in \mathbb{R}^m$ is a learnable scaling vector initialized to one, $\Norm{\cdot}$ denotes LayerNorm~\cite{ba2016layernormalization}, and $\SubLayer{\cdot}$ is either the recurrent sub-layer or the MLP sub-layer.

\paragraph{Recurrent sub-layer.} The recurrent sub-layer maps the input sequence to a hidden state sequence $h_{1:T} = \Cell{x_{1:T}; h_0}$ with $h_t \in \mathbb{R}^{d}$, then projects it back to $\mathbb{R}^m$ through an input-gated normalized projection:
\begin{equation}
    \CellB{x_{1:T}} = \Norm{\Linear{\Cell{x_{1:T}; h_0}}} \odot \sigma\!\left(\Linear{x_{1:T}}\right),
\end{equation}
where $\sigma(\cdot)$ is the element-wise sigmoid. The inner normalization keeps the magnitude of the recurrent output bounded even for cells with unitary eigenvalues, and the input-dependent gate decouples information storage from information consumption. The recurrent cell varies across rows of Table~\ref{tab:benchmarks}: BMRU and FQ BMRU follow Equations~\eqref{eq:bmru_candidate}--\eqref{eq:fq_update}; LRU follows Equations~\eqref{eq:LRU1}--\eqref{eq:LRU2} with diagonal complex states, exponentially parameterized eigenvalues initialized in $[r_{\min}, r_{\max}] = [0.9, 0.999]$ with phases in $[0, 2\pi]$, and input matrix normalized by $\gamma = \sqrt{1 - |\Lambda|^2}$; minGRU follows Equations~\eqref{eq:minGRU1}--\eqref{eq:minGRU3} with a single input-dependent sigmoid gate and no reset gate. All four cells admit parallel evaluation through associative scans~\cite{martin2017parallelizing, blelloch2002scans}.

\paragraph{MLP sub-layer.} The MLP sub-layer is point-wise, with input and output dimensions $m$, hidden dimension $4m$, and GLU activation~\cite{dauphin2017languagemodelinggatedconvolutional}:
\begin{equation}
    \MLP{x} = \Linear{\Dropout{\GLU{\Linear{x}}}}.
\end{equation}

\paragraph{Sequence pooling.} For classification tasks, the cross-entropy loss is averaged over all timesteps during training, and predictions at inference use majority voting across timesteps (except for ListOps). Shakespeare uses per-token cross-entropy with no pooling.

\paragraph{Output decoder.} The pooled representation $y_{\text{pool}} \in \mathbb{R}^m$ is transformed back to the task output dimension by a residual MLP symmetric to the input encoder:
\begin{equation}
    \hat{y} = \tilde{y} + \MLP{\tilde{y}}, \qquad \tilde{y} = \Linear{y_{\text{pool}}}.
\end{equation}

\subsubsection{Backbone architecture for hardware inference (Figures~\ref{fig:fig2} and \ref{fig:noise_analysis})}\label{sec:backbone_hardware}

The proof-of-concept network targeted by transistor-level simulation uses a deliberately simplified backbone, illustrated in Figure~\ref{fig:fig2}A. Components incompatible with current-mode subthreshold CMOS (LayerNorm, GLU, sigmoid gates, residual encoders) are removed, and every remaining operation maps directly to a circuit primitive available in the analog library: current mirrors, diode-connected transistors, and the FQ BMRU cell of Section~\ref{sec:circuit}. The configuration used in Sections~\ref{sec:kws_proof_of_concept} and ~\ref{sec:robustness_power} (as well as all subsequent appendices) is $N = 2$ stacked FQ BMRU layers with state dimension $d = 4$.

\paragraph{Input encoder.} Raw 13-dimensional MFCC features $\hat{x}_t$ are projected to the state dimension by a single linear layer:
\begin{equation}
    x_t = \Linear{\hat{x}_t} \in \mathbb{R}^d.
\end{equation}
No residual MLP and no normalization are used. The projection is realized in hardware by the current-mirror FC layer of Appendix~\ref{sec:FClayers}.

\paragraph{Recurrent layers.} The backbone contains $N = 2$ stacked FQ BMRU layers separated by an inter-layer FC transformation followed by ReLU. Each FQ BMRU cell follows Equations~\eqref{eq:fq_candidate}--\eqref{eq:fq_update}, with parameters $\alpha$, $\beta_\text{lo}$, and $\beta_\text{hi}$ realized as bias currents (Appendix~\ref{sec:CMOSBMRU}). Skip connections~\cite{he2016deep} bypass each FQ BMRU layer and are realized in the current domain by routing positive and negative branches separately (Appendix~\ref{sec:skip}).

\paragraph{Activation domain.} All signals throughout the backbone are non-negative currents. The FQ reformulation of Section~\ref{sec:FQ_BMRU} ensures unipolar BMRU outputs; FC layers use ReLU activation realized as a diode-connected PMOS (Appendix~\ref{sec:FClayers}); skip connections preserve the unipolar constraint by separating positive and negative branches.

\paragraph{Positional encoding.} No positional encoding is used. Keyword detection on a fixed 1-second window is position-invariant, and adding sinusoidal PE would require time-dependent reference currents incompatible with the static-bias design philosophy and is not necessary.

\paragraph{Output decoder.} The final FQ BMRU layer feeds an output FC layer producing class currents (two for binary ``yes'' classification, eleven for the multi-class extension of Appendix~\ref{sec:multiclass_kws}). Classification is read out directly from the output currents.

\paragraph{Sequence pooling.} Classification on a 1-second clip uses majority voting across the 101 timesteps. At each timestep the highest output current votes for its class, and the class with the most votes over the sequence is the prediction. This avoids any temporal aggregation circuitry that would itself need to be implemented in analog.

\subsubsection{Recurrent cell configuration}
Each (FQ) BMRU cell uses the following configuration:
\begin{itemize}
    \item Dropout rate of 0.1 during training~\cite{srivastava2014dropout};
    \item All parameters initialized using a default uniform distribution, constrained to positive values for $\alpha$, $\beta_\text{lo}$, and $\delta$ ($= \beta_\text{hi} - \beta_\text{lo}$);
    \item Initial state sampled uniformly in $[0, 1]$ and binarized via surrogate Heaviside during training; zero initialization at inference for the FQ BMRU in hardware.
\end{itemize}

\subsubsection{Training generalities}
All experiments follow a unified training procedure with hyperparameters that are not optimized per task. The objective is to observe the robustness of the cells to standard hyperparameter values, while acknowledging that task-specific optimization could improve absolute performance metrics.

\paragraph{Optimizer and schedule.} Optimization is performed with AdamW~\cite{loshchilov2019decoupled} using the following configuration:
\begin{itemize}
    \item Initial learning rate: $10^{-3}$;
    \item Weight decay: $10^{-4}$;
    \item Cosine decay schedule~\cite{loshchilov2017sgdr} with linear warm-up over \SI{1}{\percent} of training steps;
    \item Gradient clipping: global norm capped at $1.0$.
\end{itemize}

\paragraph{Training budget.} A batch size of 64 is used for all experiments. The maximum number of training iterations varies per task to ensure convergence within reasonable computational limits: 70,000 for MNIST tasks, 100,000 for the language modeling and ListOps tasks, and 35,000 for KWS tasks. Models are evaluated every 64 iterations on 20 batches sampled from the validation set, with the best-performing checkpoint saved for final evaluation. All experiments are repeated across 5 random seeds (1, 2, 3, 4, 5).

\paragraph{Loss function.} Cross-entropy loss is computed over all timestep predictions for all tasks:
\begin{equation}
    \mathcal{L} = -\frac{1}{T}\sum_{t=1}^{T} \sum_{c} y_{c} \log\!\left(\hat{y}_{t,c}\right),
\end{equation}
where $y_{c}$ is the one-hot target (target depends on $t$ for Shakespeare), $\hat{y}_{t,c}$ is the softmax output at timestep $t$ for class $c$, and $T$ denotes the sequence length.

\paragraph{Evaluation protocol.} For sMNIST, pMNIST, ListOps and digit KWS tasks, test accuracy is reported as mean across 5 training seeds. For the Shakespeare task, mean cross-entropy loss is reported. For the ``yes'' KWS task, accuracy is computed over 100 random pairings of all 397 positive samples with 397 randomly selected negative samples across 5 seeds, reporting mean and min-max ranges to account for variability in negative sample selection.

\subsubsection{Gradient estimation}
\paragraph{Surrogate gradients.} The Heaviside nonlinearity in the (FQ) BMRU cell is non-differentiable. We employ surrogate gradients~\cite{neftci2019surrogate} with the approximation:
\begin{equation}
    \frac{\mathrm{d}\mathcal{H}}{\mathrm{d}x} \stackrel{\text{backward}}{\approx} \frac{1}{1+(\pi x)^2}.
\end{equation}

\paragraph{Decaying cumulative update for improved gradient flow.}

For the (FQ) BMRU, during training the state update equation is augmented as:
\begin{equation}
    h_t = f_\theta(x_t, h_{t-1}) + \varepsilon \odot h_{t-1},
\end{equation}
where $f_\theta(x_t, h_{t-1})$ corresponds to Equation~\eqref{eq:bmru_update} for BMRU and Equation~\eqref{eq:fq_update} for FQ BMRU. Training with $\varepsilon = 1$ provides superior gradient flow and convergence stability~\cite{cBMRUref}, while $\varepsilon = 0$ recovers the original cell dynamics that map directly onto Schmitt trigger circuits for ultra-low power deployment. We adopt an intermediate approach: training initially with $\varepsilon = 1$, then annealing $\varepsilon$ toward zero to produce models compatible with the target hardware. Specifically, $\varepsilon$ remains at 1 for the first 5\% of training, decays linearly to 0 over the next 70\%, then stays at 0 for the final 25\%. Best-loss checkpointing is performed only after $\varepsilon$ reaches zero, ensuring that inference uses the original cell dynamics. This annealing procedure is specific to BMRU and FQ BMRU; the LRU and minGRU baselines are trained with their standard formulations and do not require it.

\subsubsection{Implementation}
Training is implemented in JAX~\cite{jax2018github}/Flax~\cite{flax2020github} using parallel associative scans for efficient (FQ) BMRU state computation across the sequence dimension.

\subsection{Post-training quantization}\label{sec:quantization}
To assess hardware deployment feasibility, we evaluate post-training quantization without retraining. All network parameters (weights, biases, thresholds $\beta_\text{lo}$, widths $\delta$, and amplitudes $\alpha$) are uniformly quantized to $n$ bits by mapping to $2^n$ discrete levels within each parameter dynamic range~\cite{gray1998quantization}:
\begin{equation}
\begin{aligned}
    w_q
    &= \text{round}\!\left(
        \frac{w - w_{\min}}{w_{\max} - w_{\min}}
        \cdot (2^n - 1)
    \right) \\
    &\quad \cdot \frac{w_{\max} - w_{\min}}{2^n - 1}
    + w_{\min}
\end{aligned}
\end{equation}

Table~\ref{tab:quantization} summarizes accuracy under different quantization levels across five training seeds. Four-bit quantization incurs minimal degradation (typically $<$\SI{4}{\percent}), supporting the programmable hardware approach described in Section~\ref{sec:perspandlimit} of the main paper. Two-bit quantization causes significant accuracy loss, indicating that at least 4 bits are required for this application. Interestingly, 6-bit and 8-bit quantization occasionally yield slight accuracy improvements over full precision, likely due to implicit regularization effects.

\begin{table*}[ht!]
\centering
\caption{Post-training quantization results. Accuracy (\%) on the ``yes'' KWS task for various state dimensions and bit widths, averaged over 5 training seeds. FP32 stands for 32-bit Floating Point.}
\label{tab:quantization}
\begin{tabular}{cccccc}
\toprule
State dim $d$ & 2-bit & 4-bit & 6-bit & 8-bit & FP32 \\
\midrule
4   & 51.20 & 89.31  & 92.75  & 93.16 & 93.86 \\
8   & 67.56  & 93.29  & 96.65  & 96.67  & 96.98 \\
16  & 61.04  & 97.48  & 97.75  & 97.60  & 98.08 \\
32  & 63.77  & 98.25  & 98.23  & 98.25  & 98.16 \\
64  & 59.69  & 97.17  & 97.27  & 97.38  & 97.51 \\
\bottomrule
\end{tabular}
\end{table*}

The consistent robustness to 4-bit quantization across all state dimensions validates the co-design approach: networks can be trained in full precision and subsequently quantized for hardware implementation using binary-weighted current mirrors with 4-bit programmability. Further accuracy gains could be achieved through fine-tuning the quantized network or by adopting quantization-aware training techniques.

\subsection{Compute resources} 
All training experiments were performed on an academic HPC cluster, using a single NVIDIA Tesla A100 GPU (40 GB VRAM) per run on a node with AMD EPYC Milan CPUs and up to 512 GB of system RAM. Training a single seed takes approximately a few hours for all tasks; software inference completes in minutes. The total compute footprint of the project, including preliminary experiments not reported in this paper, remains modest by modern deep learning standards.

\section{Circuit schematics}\label{sec:circuits}
This supplementary section provides detailed circuit schematics for the CMOS implementation of the FQ BMRU described in the main paper as well as for the FC layers.

\paragraph{Technology and operating point.} All circuits are implemented in a \SI{180}{\nano\meter} CMOS X-FAB technology in Cadence Virtuoso with a supply voltage of \SI{1.8}{V}, operating in the subthreshold regime. All simulations are conducted using Spectre with PRIMLIB \SI{180}{\nano\meter} models (ne/pe devices). A value of ``1'' in the software model maps to a current of \SI{1}{nA} in hardware, establishing the correspondence between digital and analog representations. Transistor dimensions are sized to a few square micrometers to mitigate Early voltage effects and minimize process-induced mismatch. The topology used in this work uses transistor stacking, motivating the implementation of this design in the \SI{180}{\nano\meter} CMOS technology from X-FAB as it allows for a larger voltage headroom.

All transistor-level simulations were run on a workstation equipped with an AMD Ryzen Threadripper 3970X (32 cores, 3.7 GHz) and 256 GB of RAM, inside a virtual machine. A single transient inference simulation completes in under one minute; Monte Carlo (200 samples) and PVT corner analyses scale linearly with the number of samples and corners.

\subsection{Current mirror weight implementation}\label{sec:currentmirror}
Figure~\ref{fig:current_mirror} illustrates the fundamental current mirror topology used to implement weighted connections. In subthreshold operation, a diode-connected input transistor converts an input current $I_x$ into a gate voltage $V_x = V_y$, which is shared with the output transistor. Since both transistors operate at identical gate-source voltages, their drain currents are primarily determined by their width ratio:
\begin{equation}
    I_y \approx \frac{W_\text{out}}{W_\text{in}} I_x.
\end{equation}

In practice, the relationship deviates from ideal scaling due to drain-source voltage dependence and other second-order effects. To account for this, we perform exhaustive characterization sweeping $W_\text{out}$ and $I_x$ for fixed $W_\text{in}$, measuring the resulting $I_y$. This calibration produces a lookup table mapping desired output currents to optimal width selections, ensuring accurate weight implementation despite non-ideal transistor behavior. PMOS mirrors source current from the supply voltage, while NMOS mirrors sink current to ground---both polarities are required for implementing signed weights in FC layers.

\begin{figure*}[ht!]
    \centering
    \scalebox{0.95}{\begin{circuitikz}[american, transform shape]
    \draw (0,6) node[pmos, xscale=-1] (pm1) {\ctikzflipx{$\text{P}_{\text{in}},  W_{\text{in}}$}};
    \draw (pm1.source) node[rground, yscale=-1] {};
    \draw (pm1.gate) |- (pm1.drain);
    
    \draw (3,6) node[pmos] (pm2) {$\text{P}_{\text{out}},  W_{\text{out}}$};
    \draw (pm2.source) node[rground, yscale=-1] {};
    \draw (pm2.gate) -- ++(-0.5,0) |- (pm1.gate);
    
    \node[anchor=south] at (1.5,4.7) {$V_x = V_y$};
    
    \draw[<-] ($(pm1.drain) + (0,-0.2)$) -- ++(0,-0.7);
    \draw (pm1.drain) -- ++(0,-0.2);
    \node[anchor=west] at ($(pm1.drain) + (0.1,-0.7)$) {$I_{x}$};
    \draw[<-] ($(pm2.drain) + (0,-0.2)$) -- ++(0,-0.7);
    \draw (pm2.drain) -- ++(0,-0.2);
    \node[anchor=west] at ($(pm2.drain) + (0.1,-0.7)$) {$I_{y} \approx \frac{W_{\text{out}}}{W_{\text{in}}} I_{x}$};
    
    \node[anchor=east, font=\bfseries] at (-0.5,7) {PMOS};
    
    \draw (0,1) node[nmos, xscale=-1] (nm1) {\ctikzflipx{$\text{N}_{\text{in}},  W_{\text{in}}$}};
    \draw (nm1.source) -- ++(0,0.2) node[ground] {};
    \draw (nm1.gate) |- (nm1.drain);
    
    \draw (3,1) node[nmos] (nm2) {$\text{N}_{\text{out}}, W_{\text{out}}$};
    \draw (nm2.source) -- ++(0,0.2) node[ground] {};
    \draw (nm2.gate) -- ++(-0.5,0) |- (nm1.gate);
    
    \node[anchor=south] at (1.5,1.7) {$V_x = V_y$};
    
    \draw[<-] ($(nm1.drain) + (0,0.2)$) -- ++(0,0.7);
    \node[anchor=west] at ($(nm1.drain) + (0.1,0.7)$) {$I_{x}$};
    \draw (nm1.drain) -- ++(0,0.2);
    \draw[<-] ($(nm2.drain) + (0,0.2)$) -- ++(0,0.7);
    \draw (nm2.drain) -- ++(0,0.2);
    \node[anchor=west] at ($(nm2.drain) + (0.1,0.7)$) {$I_{y} \approx \frac{W_{\text{out}}}{W_{\text{in}}} I_{x}$};
    
    \node[anchor=east, font=\bfseries] at (-0.5,2) {NMOS};
\end{circuitikz}}
    \caption{\textbf{Current mirror topologies for weight implementation.}
    Top: PMOS current mirror (sources current from supply).
    Bottom: NMOS current mirror (sinks current to ground).}
    \label{fig:current_mirror}
\end{figure*}

Programmable weight implementation is achieved using binary-quantized current mirrors controlled by a shift register, as illustrated for the PMOS case in Figure~\ref{fig:binary_current_mirror}.

\begin{figure*}[ht!]
    \centering
    \scalebox{0.95}{\begin{tikzpicture}[transform shape]
	\node[pmos, xscale=-1] at (4.375, 2.855){};
	\node[rground, yscale=-1] at (4.375, 3.625){};
	\node[pmos] at (7.125, 2.855){};
	\draw (5.355, 2.855) -- (6.145, 2.855);
	\draw (4.375, 2.085) -| (4.375, 1.375);
	\draw (5.625, 2.875) |- (4.375, 1.875);
	\node[rground, yscale=-1] at (7.125, 3.625){};
	\node[currarrow] at (4.625, 3.125){};
	\node[currarrow, xscale=-1] at (6.875, 3.125){};
	\node[currarrow, rotate=-90] at (4.375, 1.625){};
	\node[shape=rectangle, minimum width=1.715cm, minimum height=0.715cm] at (3.5, 2.875){} node[anchor=north, align=center, text width=1.327cm, inner sep=6pt] at (3.5, 3.25){$\text{P}_\mathrm{in}, \text{W}_\mathrm{in}$};
	\node[shape=rectangle, minimum width=1.34cm, minimum height=0.715cm] at (7.812, 2.875){} node[anchor=north, align=center, text width=0.952cm, inner sep=6pt] at (7.812, 3.25){$\frac{\text{W}_\mathrm{out}}{2}$};
	\node[shape=rectangle, minimum width=0.715cm, minimum height=0.715cm] at (4, 1.625){} node[anchor=north, align=center, text width=0.327cm, inner sep=6pt] at (4, 2){$I_\mathrm{in}$};
	\node[shape=rectangle, minimum width=12.84cm, minimum height=0.715cm] at (9.062, 4.375){} node[anchor=north, align=center, text width=12.452cm, inner sep=6pt] at (9.062, 4.75){\textbf{Binary-weighted PMOS current mirror}};
	\node[shape=rectangle, minimum width=0.715cm, minimum height=0.715cm] at (5.75, 3.25){} node[anchor=north, align=center, text width=0.327cm, inner sep=6pt] at (5.75, 3.625){$V_\mathrm{g}$};
	\node[nmos] at (7.125, 1.315){};
	\draw (7.125, 0.545) -| (7.125, -0.165);
	\node[currarrow, rotate=-90] at (7.125, 0.085){};
	\node[shape=rectangle, minimum width=1.735cm, minimum height=0.715cm] at (5.26, 1.315){} node[anchor=north, align=center, text width=2.6cm, inner sep=6pt] at (5.26, 1.69){$b_{1} = 0\, \text{or}\,1$};
	\node[pmos] at (10.23, 2.875){};
	\node[rground, yscale=-1] at (10.23, 3.645){};
	\node[currarrow, xscale=-1] at (9.98, 3.145){};
	\node[nmos] at (10.23, 1.335){};
	\draw (10.23, 0.565) -| (10.23, -0.145);
	\node[currarrow, rotate=-90] at (10.23, 0.105){};
	\node[shape=rectangle, minimum width=1.235cm, minimum height=0.715cm] at (10.865, 2.875){} node[anchor=north, align=center, text width=0.847cm, inner sep=6pt] at (10.865, 3.25){$\frac{\text{W}_\mathrm{out}}{4}$};
	\node[shape=rectangle, minimum width=0.715cm, minimum height=0.715cm] at (8.875, 3.25){} node[anchor=north, align=center, text width=0.327cm, inner sep=6pt] at (8.875, 3.625){$V_\mathrm{g}$};
	\node[shape=rectangle, minimum width=1.715cm, minimum height=0.715cm] at (8.375, 1.335){} node[anchor=north, align=center, text width=2.6cm, inner sep=6pt] at (8.375, 1.71){$b_{2} = 0\, \text{or}\,1$};
	\node[currarrow] at (6.834, 1.045){};
	\node[currarrow] at (9.961, 1.068){};
	\node[shape=rectangle, minimum width=2.985cm, minimum height=1.465cm] at (11.74, 2.065){} node[anchor=center, align=center, text width=2.597cm, inner sep=6pt] at (11.74, 2.065){\Huge ...};
	\node[pmos] at (13.974, 2.857){};
	\node[rground, yscale=-1] at (13.974, 3.627){};
	\node[currarrow, xscale=-1] at (13.724, 3.127){};
	\node[nmos] at (13.974, 1.317){};
	\draw (13.974, 0.547) -| (13.974, -0.163);
	\node[currarrow, rotate=-90] at (13.974, 0.087){};
	\node[shape=rectangle, minimum width=1.241cm, minimum height=0.715cm] at (14.612, 2.875){} node[anchor=north, align=center, text width=0.853cm, inner sep=6pt] at (14.612, 3.25){$\frac{\text{W}_\mathrm{out}}{2^B}$};
	\node[shape=rectangle, minimum width=0.715cm, minimum height=0.715cm] at (12.619, 3.232){} node[anchor=north, align=center, text width=0.327cm, inner sep=6pt] at (12.619, 3.607){$V_\mathrm{g}$};
	\node[shape=rectangle, minimum width=1.959cm, minimum height=0.715cm] at (11.997, 1.317){} node[anchor=north, align=center, text width=3cm, inner sep=6pt] at (11.997, 1.692){$b_{B} = 0\, \text{or}\,1$};
	\node[currarrow] at (13.705, 1.049){};
	\draw (7.125, -0.165) -| (10.23, -0.145);
	\draw (13.974, -0.163) |- (10.23, -0.165);
	\draw (11, -0.668) -- (11, -0.168);
	\node[shape=rectangle, minimum width=3.715cm, minimum height=0.715cm] at (11, -1.043){} node[anchor=north, align=center, text width=6.6cm, inner sep=6pt] at (11, -0.668){$I_\mathrm{out} \approx \sum_{i=1}^B b_i\frac{\text{W}_\mathrm{out}}{2^i\text{W}_\mathrm{in}}I_\mathrm{in}$};
	\node[currarrow, rotate=-90] at (11, -0.381){};
\end{tikzpicture}}
    \caption{
    \textbf{Binary-weighted PMOS current mirror for programmable weight implementation.}
    The effective output current is set by enabling combinations of binary-scaled mirror branches, allowing discrete (quantized) weight tuning via a shift register.
    }
    \label{fig:binary_current_mirror}
\end{figure*}

\subsection{FC layers}\label{sec:FClayers}
The FC layer computes 
\begin{equation}
    y_j = \text{ReLU}\left(\sum_{i=1}^d w_{ij} x_i + b_j\right) \quad \forall j\in [1, d].
\end{equation} 

This is realized entirely in the current domain using parallel current mirror banks (Figure~\ref{fig:FC_ReLU}). Each input voltage $V_i$ drives both PMOS and NMOS mirrors: PMOS transistors with width ratios $W_{i}^-/W_j$ implement negative weights by sourcing current to the upper summation node $\Sigma^-$, while NMOS transistors with ratios $W_{i}^+/W_j$ implement positive weights by sinking current from the lower node $\Sigma^+$. A vertical wire connects both nodes, where KCL computes the weighted sum. Bias currents $I_{b,j}^+$ and $I_{b,j}^-$ add the affine term. A diode-connected PMOS at the output conducts only when net current flows from the supply, implementing ReLU activation and providing output voltage $V_{\text{out},j}$ for subsequent stages.

\begin{figure*}[ht!]
    \centering
    \scalebox{0.95}{\begin{circuitikz}[american, transform shape]

    \draw (0,6) node[pmos] (pn1) {$\text{P}_{1}^-$};
    \draw (pn1.source) node[rground, yscale=-1] {};
    \draw (pn1.gate) node[circ, label={above left:$V_{1}$}] {};
    
    \node at (2,6) {\Large $\cdots$};
    
    \draw (5,6) node[pmos] (pnd) {$\text{P}_{d}^-$};
    \draw (pnd.source) node[rground, yscale=-1] {};
    \draw (pnd.gate) node[circ, label={above left:$V_{d}$}] {};

    \draw (7.5,6) node[pmos] (pnb) {$\text{P}_{b}^-$};
    \draw (pnb.source) node[rground, yscale=-1] {};
    \draw (pnb.gate) node[circ, label={above left:$V_{b,j}^-$}] {};

    \draw (0,-2) node[nmos] (nn1) {$\text{N}_{1}^+$};
    \draw (nn1.source) -- ++(0,0.2) node[ground] {};
    \draw (nn1.gate) node[circ, label={below left:$V_{1}$}] {};
    
    \node at (2,-2) {\Large $\cdots$};
    
    \draw (5,-2) node[nmos] (nnd) {$\text{N}_{d}^+$};
    \draw (nnd.source) -- ++(0,0.2) node[ground] {};
    \draw (nnd.gate) node[circ, label={below left:$V_{d}$}] {};

    \draw (7.5,-2) node[nmos] (nnb) {$\text{N}_{b}^+$};
    \draw (nnb.source) -- ++(0,0.2) node[ground] {};
    \draw (nnb.gate) node[circ, label={below left:$V_{b,j}^+$}] {};

    \coordinate (sumtop) at (0.25,4);
    
    \draw (pn1.drain) |- (sumtop);
    \draw (pnd.drain) |- (sumtop);
    \draw (pnb.drain) |- (sumtop);

    \coordinate (sumbot) at (0.25,0);
    
    \draw (nn1.drain) |- (sumbot);
    \draw (nnd.drain) |- (sumbot);
    \draw (nnb.drain) |- (sumbot);

    \draw (sumtop) -- (sumbot);
    \draw (2.5,2) node[pmos] (prelu) {$\text{P}_\text{ReLU}$};
    \draw (prelu.source) node[rground, yscale=-1] {};
    \draw (prelu.gate) -- ++(-0.2,0) |- (prelu.drain);
    \draw (prelu.drain) |- (0.25,0.5);
    \node[circ] at (0.25,0.5) {};

    \node[circ] at (sumtop) {};
    \node[circ] at (sumbot) {};

    \draw (pn1.drain) to[short, i_<=$w_{1}^- I_{1}$] ++(0,0.4);
    \draw (pnd.drain) to[short, i_<=$w_{d}^- I_{d}$] ++(0,0.4);
    \draw (pnb.drain) to[short, i_<=$I_{b,j}^-$] ++(0,0.4);
    \draw (nn1.drain) to[short, i=$w_{1}^+ I_{1}$] ++(0,-0.4);
    \draw (nnd.drain) to[short, i=$w_{d}^+ I_{d}$] ++(0,-0.4);
    \draw (nnb.drain) to[short, i=$I_{b,j}^+$] ++(0,-0.4);
    \draw (prelu.drain) to[short, i_<=$\text{ReLU}\left(\sum_{i=1}^d\left(w_{i}^+ I_{i}-w_{i}^- I_{i}\right)+I_{b,j}^+ -I_{b,j}^-\right)$] ++(0,0);

    \draw (prelu.gate) ++(-0.2,0) node[circ, label={left:$V_{\text{out},j}$}] {};
    
    \node[above right=0.05 and 0.2 of sumtop] {$\Sigma^-$};
    \node[below right=0.05 and 0.2 of sumbot] {$\Sigma^+$};

    \node[anchor=east, font=\footnotesize] at ($(pn1.gate) + (0,-0.3)$) {$\frac{W_{1}^-}{W_j}$};
    \node[anchor=east, font=\footnotesize] at ($(pnd.gate) + (0,-0.3)$) {$\frac{W_{d}^-}{W_j}$};
    \node[anchor=east, font=\footnotesize] at ($(nn1.gate) + (0,0.3)$) {$\frac{W_{1}^+}{W_j}$};
    \node[anchor=east, font=\footnotesize] at ($(nnd.gate) + (0,0.3)$) {$\frac{W_{d}^+}{W_j}$};

\end{circuitikz}}
    \caption{\textbf{FC layer with ReLU activation.}
    PMOS mirrors (top) implement negative weights; NMOS mirrors (bottom) implement positive weights. The diode-connected PMOS harvests net positive current.}
    \label{fig:FC_ReLU}
\end{figure*}

For layers requiring anti-ReLU activation, the output transistor is replaced with a diode-connected NMOS that sinks current to ground when the net sum is negative (Figure~\ref{fig:FC_anti_ReLU}), implementing $y = \min(0, x)$.

\begin{figure*}[ht!]
    \centering
    \scalebox{0.95}{\begin{circuitikz}[american, transform shape]

    \draw (0,6) node[pmos] (pn1) {$\text{P}_{1}^-$};
    \draw (pn1.source) node[rground, yscale=-1] {};
    \draw (pn1.gate) node[circ, label={above left:$V_{1}$}] {};
    
    \node at (2,6) {\Large $\cdots$};
    
    \draw (5,6) node[pmos] (pnd) {$\text{P}_{d}^-$};
    \draw (pnd.source) node[rground, yscale=-1] {};
    \draw (pnd.gate) node[circ, label={above left:$V_{d}$}] {};

    \draw (7.5,6) node[pmos] (pnb) {$\text{P}_{b}^-$};
    \draw (pnb.source) node[rground, yscale=-1] {};
    \draw (pnb.gate) node[circ, label={above left:$V_{b,j}^-$}] {};

    \draw (0,-2) node[nmos] (nn1) {$\text{N}_{1}^+$};
    \draw (nn1.source) -- ++(0,0.2) node[ground] {};
    \draw (nn1.gate) node[circ, label={below left:$V_{1}$}] {};
    
    \node at (2,-2) {\Large $\cdots$};
    
    \draw (5,-2) node[nmos] (nnd) {$\text{N}_{d}^+$};
    \draw (nnd.source) -- ++(0,0.2) node[ground] {};
    \draw (nnd.gate) node[circ, label={below left:$V_{d}$}] {};

    \draw (7.5,-2) node[nmos] (nnb) {$\text{N}_{b}^+$};
    \draw (nnb.source) -- ++(0,0.2) node[ground] {};
    \draw (nnb.gate) node[circ, label={below left:$V_{b,j}^+$}] {};

    \coordinate (sumtop) at (0.25,4);
    
    \draw (pn1.drain) |- (sumtop);
    \draw (pnd.drain) |- (sumtop);
    \draw (pnb.drain) |- (sumtop);

    \coordinate (sumbot) at (0.25,0);
    
    \draw (nn1.drain) |- (sumbot);
    \draw (nnd.drain) |- (sumbot);
    \draw (nnb.drain) |- (sumbot);

    \draw (sumtop) -- (sumbot);
    \draw (2.5,2) node[nmos] (nrelu) {$\text{N}_\text{ReLU}$};
    \draw (nrelu.source) node[ground, yscale=1] {};
    \draw (nrelu.gate) -- ++(-0.2,0) |- (nrelu.drain);
    \draw (nrelu.drain) |- (0.25,3.5);
    \node[circ] at (0.25,3.5) {};

    \node[circ] at (sumtop) {};
    \node[circ] at (sumbot) {};

    \draw (pn1.drain) to[short, i_<=$w_{1}^- I_{1}$] ++(0,0.4);
    \draw (pnd.drain) to[short, i_<=$w_{d}^- I_{d}$] ++(0,0.4);
    \draw (pnb.drain) to[short, i_<=$I_{b,j}^-$] ++(0,0.4);
    \draw (nn1.drain) to[short, i=$w_{1}^+ I_{1}$] ++(0,-0.4);
    \draw (nnd.drain) to[short, i=$w_{d}^+ I_{d}$] ++(0,-0.4);
    \draw (nnb.drain) to[short, i=$I_{b,j}^+$] ++(0,-0.4);
    \draw (nrelu.drain) to[short, i_<=$-\text{ReLU}\left(\sum_{i=1}^d\left(w_{i}^- I_{i}-w_{i}^+ I_{i}\right)+I_{b,j}^- -I_{b,j}^+\right)$] ++(0,0);

    \draw (nrelu.gate) ++(-0.2,0) node[circ, label={left:$V_{\text{out},j}$}] {};
    
    \node[above right=0.05 and 0.2 of sumtop] {$\Sigma^-$};
    \node[below right=0.05 and 0.2 of sumbot] {$\Sigma^+$};

    \node[anchor=east, font=\footnotesize] at ($(pn1.gate) + (0,-0.3)$) {$\frac{W_{1}^-}{W_j}$};
    \node[anchor=east, font=\footnotesize] at ($(pnd.gate) + (0,-0.3)$) {$\frac{W_{d}^-}{W_j}$};
    \node[anchor=east, font=\footnotesize] at ($(nn1.gate) + (0,0.3)$) {$\frac{W_{1}^+}{W_j}$};
    \node[anchor=east, font=\footnotesize] at ($(nnd.gate) + (0,0.3)$) {$\frac{W_{d}^+}{W_j}$};

\end{circuitikz}}
    \caption{\textbf{FC layer with anti-ReLU activation.}
    Same structure as Figure~\ref{fig:FC_ReLU}, with diode-connected NMOS at output.}
    \label{fig:FC_anti_ReLU}
\end{figure*}

All NMOS transistors are sized with a width of \SI{5}{\micro\meter} and a length of \SI{5}{\micro\meter}, while PMOS transistors are sized with a width of \SI{5.5}{\micro\meter} and a length of \SI{5}{\micro\meter}. This sizing ensures robustness against mismatch and mitigates undesired short-channel effects. An exception is made for current mirrors implementing weight multiplication, where output transistor widths are selected from a precomputed lookup table. This table was generated by exhaustively simulating all combinations of input currents and transistor widths, mapping each desired weight-current product to the optimal width selection.

\subsection{Skip connections}\label{sec:skip}
Skip connections, which bypass intermediate layers to add earlier activations directly to later ones, are implemented naturally in the current domain using KCL. Since the circuit operates exclusively with positive currents, skip connections are separated into positive and negative branches. This separation preserves the unipolar constraint throughout the network while enabling residual-style architectures. No additional active components are required---skip connections reduce to wire routing that joins current paths at the appropriate summation nodes.

\subsection{CMOS FQ BMRU}\label{sec:CMOSBMRU}
The proposed circuit is based on a dual-Heaviside feedback structure, conceptually illustrated in Figure~\ref{fig:CMOSBMRU} (top left panel). It consists of two interdependent nonlinear thresholding elements:
\begin{itemize}
    \item A primary Heaviside element that activates when the input current $I_{\text{in}}$ exceeds a threshold $I_{\text{thresh}}$ (the high threshold of the Schmitt trigger), producing an output current $I_{\text{gain}}$ (the high output value).
    \item A secondary Heaviside element in feedback, triggered by the output of the first stage. It activates to inject a feedback current $I_{\text{width}}$ at the input of the circuit.
\end{itemize}

\begin{figure*}[ht!]
    \centering
    \scalebox{0.95}{\pgfdeclarelayer{background}
\pgfsetlayers{background,main}
\begin{circuitikz}[american, transform shape]
    \node[sum] (sum1) {+};
    \node[sblock, right=1. of sum1, label={[anchor=north west]north west:$\mathcal{H}_1$}, label={[anchor=north east]north east:$I_\text{gain}$}, label={[anchor=south]south:$I_\text{thresh}$}] (h1) {};
    \node[sblock, below=of h1, label={[anchor=north west]north west:$\mathcal{H}_2$}, label={[anchor=north east]north east:$I_\text{width}$}] (h2) {};
    
    \node[left=0.5 of sum1] (Iapp) {$I_\text{in}$};
    \coordinate[right=1. of h1, label=above:{$I_\text{out}$}] (Iout) {};
    \coordinate[below left=0.8 and 0.5 of sum1, label=below:{$I_\text{feedback}$}] (Ifb) {};
    
    \draw[signal] (Iapp) -- (sum1) node[pos=1, above] (sig1) {};
    \draw[signal] (sum1) -- (h1) node[midway, above] (sig2) {};
    \draw[signal] (h1) -- (Iout) -- ++(0.6,0) node[midway, right] (sig3) {};
    \draw[signal] (Iout) |- (h2.east) node[pos=0.25, right] (sig4) {};
    \draw[signal] (h2.west) -| (sum1.south) node[pos=0.5, left] (sig5) {};
    \coordinate (leftmargin) at ($(Iout) + (0.7,0)$);

    \draw ($(sum1) +(8,0)$) node[pmos] (q1) {Q1};
    \draw ($(sum1) +(8,-4)$) node[nmos] (q2) {Q2};
    
    \draw (q1.drain) -- (q2.drain) coordinate[midway,circ,label={left:$V_\text{0}$}] (v1);
    \draw (v1) -- ++(2.,0) node[nmos, anchor=G] (q3) {Q3};
    \draw (q2.source) -- ++(0,0.2) node[ground] {};
    \draw (q1.source) node[rground, yscale=-1] {};
    
    \draw ($(q3) +(0,2)$) node[pmos, xscale=-1] (q4) {\ctikzflipx{Q4}};
    \draw (q3.drain) -- (q4.drain);
    \draw (q4.gate) |- (q4.drain);
    \draw (q4.source) node[rground, yscale=-1] {};
    
    \draw ($(q3) -(0,2)$) node[nmos] (q5) {Q5};
    \draw (q3.source) -- (q5.drain);
    \draw (q5.source) -- ++(0,0.2) node[ground] {};
    
    \draw (q2.gate) node[circ,label={above left:$V_\text{thresh}$}] {};
    \draw (q2.drain) to[short, i=$I_\text{thresh}$] ++(0,-0.4);
    
    \draw (q1.gate) node[circ,label={above left:$V_\text{in}$}] {};
    \draw (q1.drain) to[short, i_<=$I_\text{in}$] ++(0,0.4);
    
    \draw (q5.gate) node[circ,label={above left:$V_\text{gain}$}] {};
    \draw (q5.drain) to[short, i=$I_\text{gain}$] ++(0,-0.4);
    
    \draw (q3.drain) to[short, i_<=$I_\text{out}$] ++(0,0.4);
    \draw (q4.gate) node[circ,label={above right:$V_\text{out}$}] {};

    \draw ($(sum1) +(4.2,-7)$) node[pmos] (m1) {M1};
    \draw ($(sum1) +(4.2,-11)$) node[nmos] (m2) {M2};

    \draw (m1.drain) -- (m2.drain) coordinate[midway,circ,label={left:$V_\text{0}$}] (v1_2);
    \draw (v1_2) -- ++(2.,0) node[nmos, anchor=G] (m3) {M3};
    \draw (m2.source) -- ++(0,0.2) node[ground] {};
    \draw (m1.source) node[rground, yscale=-1] {};

    \draw ($(m3) +(0,2)$) node[pmos, xscale=-1] (m4) {\ctikzflipx{M4}};
    \draw (m3.drain) -- (m4.drain);
    \draw (m4.gate) |- (m4.drain);
    \draw (m4.source) node[rground, yscale=-1] {};

    \draw ($(m3) -(0,2)$) node[nmos] (m5) {M5};
    \draw (m3.source) -- (m5.drain);
    \draw (m5.source) -- ++(0,0.2) node[ground] {};

    \draw (m3) ++(2.,0) node[nmos, anchor=G] (m6) {M6};

    \draw ($(m6) +(0,2)$) node[pmos, xscale=-1] (m7) {\ctikzflipx{M7}};
    \draw (m6.drain) -- (m7.drain);
    \draw (m7.gate) |- (m7.drain);
    \draw (m7.source) node[rground, yscale=-1] {};

    \draw ($(m6) -(0,2)$) node[nmos] (m8) {M8};
    \draw (m6.source) -- (m8.drain);
    \draw (m8.source) -- ++(0,0.2) node[ground] {};

    \draw ($(m1) -(2.5,0)$) node[pmos] (m9) {M9};
    \draw (m9.source) node[rground, yscale=-1] {};
    \draw (m9.drain) |- ($(m1.drain) -(0,0.4)$);

    \draw (m2.gate) node[circ,label={above left:$V_\text{thresh}$}] {};
    \draw (m2.drain) to[short, i=$I_\text{thresh}$] ++(0,-0.4);
    
    \draw (m1.gate) node[circ,label={above left:$V_\text{in}$}] {};
    \draw (m1.drain) to[short, i_<=$I_\text{in}$] ++(0,0.4);

    \draw (m5.gate) node[circ,label={above left:$V_\text{gain}$}] {};
    \draw (m5.drain) to[short, i=$I_\text{gain}$] ++(0,-0.4);
    
    \draw (m3.drain) to[short, i_<=$I_\text{out}$] ++(0,0.4);
    \draw (m4.gate) node[circ,label={above right:$V_\text{out}$}] {};

    \draw (m6.gate) node[circ,label={above left:$V_\text{0}$}] {};

    \draw (m6.drain) to[short, i_<=$I_\text{feedback}$] ++(0,0.4);
    \draw (m7.gate) node[circ,label={above right:$V_\text{feedback}$}] {};

    \draw (m8.gate) node[circ,label={above left:$V_\text{width}$}] {};
    \draw (m8.drain) to[short, i=$I_\text{width}$] ++(0,-0.4);

    \draw (m9.gate) node[circ,label={above left:$V_\text{feedback}$}] {};
    \draw (m9.drain) to[short, i_<=$I_\text{feedback}$] ++(0,0.4);

    \coordinate (hbox-nw) at ($(q1) + (-2.,1.3)$);
    \coordinate (hbox-se) at ($(q5) + (2.,-1.3)$);
    
    \node[draw=blue, thick, rounded corners, fit=(hbox-nw)(hbox-se), label=above:{\footnotesize Implements $\mathcal{H}_1$}] (hbox) {};
    
    \coordinate (fbox-nw) at ($(m9) + (-2.5,1.3)$);
    \coordinate (fbox-se) at ($(m8) + (2.5,-1.3)$);
    
    \node[draw=red, thick, rounded corners, fit=(fbox-nw)(fbox-se), label=below:{\footnotesize Implements full diagram}] (fbox) {};

    \draw[->, blue, thick] ($(h1.north east) + (0.2,0.1)$) to[out=45, in=135] ($(hbox.north west) + (-0.2,0.2)$);
    
    \draw[->, red, thick] ($(h2.south) + (-0.1,-0.1)$) to[out=-90, in=90] ($(fbox.north) + (-0.5,0.2)$);

    \begin{pgfonlayer}{background} 
      \node[
        fill=red!10,
        rounded corners,
        inner sep=2pt,
        fit=(sum1)(h1)(h2)(Iapp)(Iout)(Ifb)(sig1)(sig2)(sig3)(sig4)(sig5)(leftmargin)
      ] (redbox) {};

      \node[
        fill=blue!10,
        rounded corners,
        inner sep=0pt,
        fit=(h1)
      ] {}; 
    \end{pgfonlayer}
\end{circuitikz}}
    \caption{\textbf{CMOS implementation of the FQ BMRU cell.}
    Top left: conceptual dual-Heaviside feedback architecture. Top right (blue): single Heaviside element $\mathcal{H}_1$ using 5 transistors. Bottom (red): complete Schmitt trigger with feedback, using 9 transistors total.}
    \label{fig:CMOSBMRU}
\end{figure*}

The hysteresis behavior arises naturally from the feedback current $I_{\text{width}}$, which lowers the effective input threshold to $I_{\text{thresh}} - I_{\text{width}}$, thus enabling bistability. The three control parameters $I_{\text{thresh}}$, $I_{\text{gain}}$, and $I_{\text{width}}$ are all independently adjustable via bias voltages, allowing straightforward tuning. Because the circuit operates only with positive currents, the constraint $I_{\text{thresh}} > I_{\text{width}}$ must be met to preserve bistability, ensuring that the positive feedback does not override the comparator threshold.

The top right panel (blue) of Figure~\ref{fig:CMOSBMRU} shows a minimal CMOS circuit that implements a fully tunable unipolar Heaviside function ($\mathcal{H}_1$) in current mode. The input current is provided through a voltage $V_\text{in}$ at the output of a PMOS current mirror (Q1), while the threshold current $I_{\text{thresh}}$ is set through a voltage $V_\text{thresh}$ by an NMOS transistor (Q2). Together, Q1 and Q2 form a switching current comparator. If $I_{\text{in}} < I_{\text{thresh}}$, then $V_0 \approx 0$ and Q3 is off, resulting in a zero output current $I_{\text{out}}$. As soon as $I_{\text{in}} > I_{\text{thresh}}$, $V_0$ increases to the supply voltage and Q3 turns on, allowing current to flow in the output branch. At that point, the current flowing through this branch corresponds to the bias current $I_{\text{gain}}$, also set through a voltage $V_\text{gain}$ by an NMOS transistor (Q5). The output current flows through the diode-connected transistor Q4 and can be mirrored to other stages via a PMOS transistor driven by $V_\text{out}$.

\begin{figure*}[t!]
  \vskip 0.1in
  \begin{center}
    \centerline{\includegraphics[width=\linewidth]{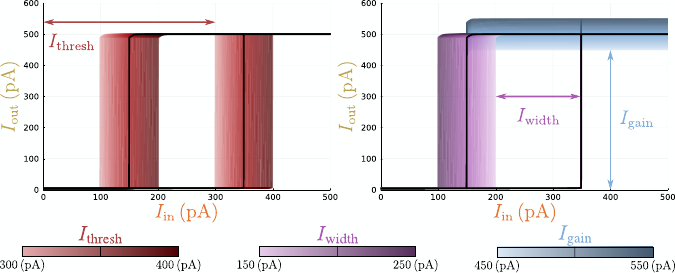}}
    \caption{
      \textbf{Tunability of the CMOS implementation of the FQ BMRU cell.}
      Input-output current relationship of the CMOS implementation of the FQ BMRU cell. All thresholds and output gain are independently tunable via bias currents.
    }
    \label{fig:CMOSBMRU_colorbar}
  \end{center}
\end{figure*}

The bottom panel (red) of Figure~\ref{fig:CMOSBMRU} represents the feedback interconnection of the two Heaviside functions, forming a current-mode unipolar Schmitt trigger capable of operating with \si{\nano\watt}-level supply power. $\mathcal{H}_1$ (Q1–Q5) is implemented through transistors M1–M5. For $\mathcal{H}_2$ to switch concurrently with $\mathcal{H}_1$, they share the same switching voltage $V_0$, removing the need for a separate comparator branch (like M1–M2). This means that if $I_{\text{in}} < I_{\text{thresh}}$, M6 is off and $I_\text{feedback} = I_\text{out} = 0$. Once $I_{\text{in}} > I_{\text{thresh}}$, M6 turns on and $I_\text{feedback}$ reaches $I_\text{width}$, the bias current of the feedback branch set through $V_\text{width}$ by an NMOS transistor (M8). This feedback current is mirrored through a PMOS current mirror (M7 and M9), and injected back into the comparator branch of $\mathcal{H}_1$ (M1, M2, and M9), thereby increasing the input current by $I_\text{feedback}$.

These three parameters correspond directly to the FQ BMRU formulation: $I_\text{thresh}$ implements $\beta_\text{hi}$, $I_\text{thresh} - I_\text{width}$ implements $\beta_\text{lo}$, and $I_\text{gain}$ implements $\alpha$. Figure~\ref{fig:CMOSBMRU_colorbar} represents independent tunability of these three parameters.

All NMOS transistors of the cell are sized with a width of \SI{5}{\micro\meter} and a length of \SI{5}{\micro\meter}, while PMOS transistors of the cell are sized with a width of \SI{5.5}{\micro\meter} and a length of \SI{5}{\micro\meter}. This sizing ensures robustness against mismatch and mitigates undesired short-channel effects. An exception was made for M3 and M6, which operate primarily in a digital regime and were sized with a width of \SI{2}{\micro\meter} and minimal length.

\subsection{Simulation results of the CMOS FQ BMRU}\label{sec:simulation_results}

All schematic-level simulations were conducted using Spectre with X-FAB \SI{180}{\nano\meter} models at a supply voltage of \SI{1.8}{V}. Baseline parameters were selected using the lookup table approach described in Section~\ref{sec:currentmirror}: $I_\text{gain} = \SI{486}{pA}$, $I_\text{thresh} = \SI{368}{pA}$, and $I_\text{width} = \SI{216}{pA}$, yielding switching thresholds of \SI{150}{pA} and \SI{350}{pA} with a high output state of \SI{500}{pA}.

\begin{figure*}[ht!]
    \centering
    \includegraphics[width=\linewidth]{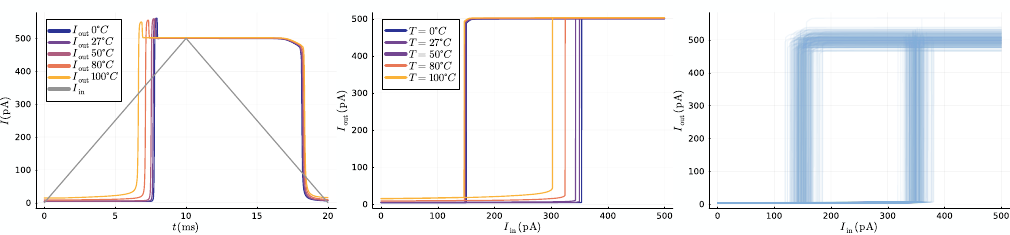}
    \caption{\textbf{CMOS FQ BMRU cell simulation results.}
    Transient simulation under triangular input current for different operating temperatures (left). DC sweep demonstrating hysteretic behavior for different operating temperatures (middle). Monte Carlo analysis with $3\sigma$ process variation at room temperature (right). Baseline parameters: $I_\text{gain} = \SI{486}{pA}$, $I_\text{thresh} = \SI{368}{pA}$, $I_\text{width} = \SI{216}{pA}$.}
    \label{fig:transient}
\end{figure*}

\begin{figure*}[ht!]
    \centering
    \includegraphics[width=\linewidth]{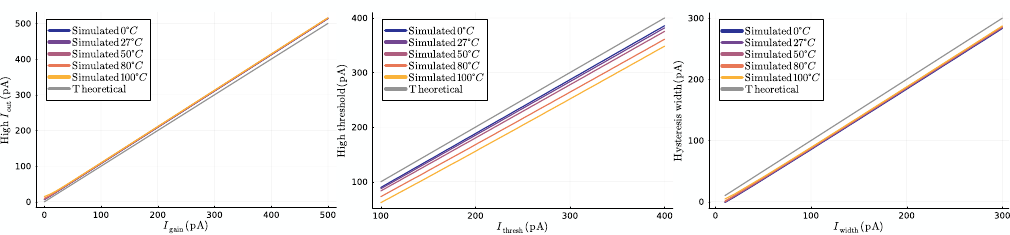}
    \caption{\textbf{Tunability of CMOS FQ BMRU cell parameters.}
    $I_\text{gain}$ sweep from 0 to \SI{500}{pA} for different operating temperatures  (left). $I_\text{thresh}$ sweep from \SI{100}{pA} to \SI{400}{pA} with $I_\text{width} = \SI{50}{pA}$ (middle). $I_\text{width}$ sweep from \SI{10}{pA} to \SI{300}{pA} (right).
    Baseline parameters as in Figure~\ref{fig:transient}.}
    \label{fig:params}
\end{figure*}

\paragraph{Transient response and DC characteristics.}
To evaluate hysteresis, the input current $I_{\text{in}}$ was swept linearly from 0 to \SI{500}{pA} and back using a triangular waveform. As shown in Figure~\ref{fig:transient} left for different operating temperatures, the circuit exhibits clear bistable behavior with sharp transitions. The switching points align with the expected levels of $I_{\text{thresh}} - I_{\text{width}}$ and $I_{\text{thresh}}$, and the high-state output current closely follows $I_\text{gain}$. A small ($\sim$\SI{10}{\percent}) overshoot is observed at transitions, attributed to the positive feedback inherent in the dual-Heaviside topology. DC simulations confirm the existence of two stable states and well-defined hysteresis (Figure~\ref{fig:transient} middle). Temperature mainly affects the upper switching point, determined by the M1--M2 comparator. However, since the circuit operates in the ultra-low power regime, typical operating conditions remain well below \SI{80}{\celsius}, ensuring robust operation.

\paragraph{Mismatch analysis.}
Monte Carlo simulations were performed under $3\sigma$ process variation to evaluate robustness to device mismatch. The circuit consistently maintained correct bistable switching behavior, with only minor deviations---typically a few tens of \si{pA}---in the threshold currents, hysteresis width, and high output level (Figure~\ref{fig:transient} right). These variations remain well within functional margins. The observed mismatches arise primarily from variations in current mirrors and biasing transistors, and could be further reduced by increasing device dimensions at the cost of silicon area.

\paragraph{Tuning properties.}
Systematic sweeps of each bias current confirm that $I_\text{gain}$, $I_\text{thresh}$, and $I_\text{width}$ independently control their corresponding functions (output level, threshold, and hysteresis width) in a linear manner across operating temperatures (Figure~\ref{fig:params}). Small offsets due to subthreshold leakage currents result in relative errors of \SI{2.8}{\percent}, \SI{4.5}{\percent}, and \SI{5.25}{\percent} for $I_\text{gain}$, $I_\text{thresh}$, and $I_\text{width}$ respectively at room temperature and highest tested currents. These predictable deviations do not impair functionality and can be compensated through calibration via lookup tables, as described in Section~\ref{sec:currentmirror}.

\section{Power breakdown analysis}\label{sec:power_breakdown}
 
\begin{figure*}[ht!]
  \begin{center}
    \centerline{\includegraphics[width=0.6\linewidth]{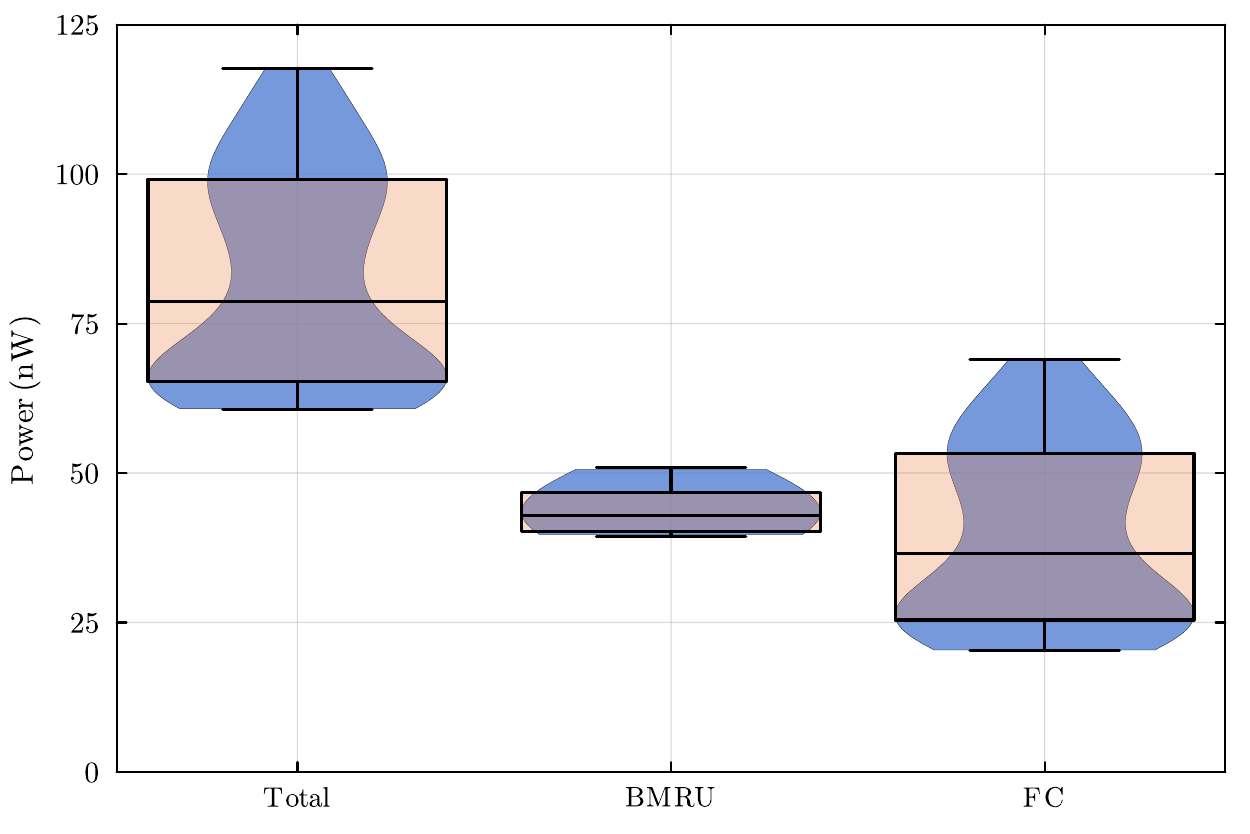}}
    \caption{
      \textbf{Component-level power breakdown across 50 inferences.}
      Power consumption of FQ BMRU cells versus FC layers for 50 inference samples. The approximately even split at $d=4$ indicates that both components contribute comparably to efficiency. FQ BMRU cells exhibit substantially lower power variance, consistent with stable discrete-output dynamics.
    }
    \label{fig:power_breakdown}
  \end{center}
\end{figure*}
 
To isolate the contribution of each building block, separate supply nets are assigned to BMRU cells and FC layers in Cadence Spectre simulation. Figure~\ref{fig:power_breakdown} presents the resulting power distribution across the 50 inferences reported in the main text. At $d=4$, power splits approximately evenly between BMRU cells (${\sim}40$~nW median) and FC layers (${\sim}30$~nW median). Notably, the power variance across inferences is substantially lower for BMRU cells than for FC layers, consistent with the observation that discrete-output dynamics yield stable, input-independent power consumption.

Component-level power analysis reveals a fundamental asymmetry in the network architecture: FQ BMRU cells scale linearly with state dimension $d$ (each cell is independent), while FC layers scale quadratically ($d \times d$ weight matrices require $d^2$ current mirrors). Estimated power scaling based on per-component measurements at $d=4$ from Cadence Spectre simulation is used to extrapolate to higher state dimensions (Table~\ref{tab:power_scaling}).

\begin{table}[t!]
\caption{%
  \textbf{Estimated power scaling with state dimension.} BMRU power increases linearly with the state dimension ($\propto d$), whereas FC and skip connections power grows quadratically ($\propto d^2$), leading to a decreasing relative cost of recurrence as $d$ increases. Estimates are extrapolated from per-component power measured at $d=4$.
}
\label{tab:power_scaling}
\centering
\small
\begin{tabular}{cccccc}
\toprule
 & $d=4$ & $d=8$  & $d=16$ & $d=32$ & $d=64$\\
\midrule
BMRU ($\sim$nW) & 40 (57\%) & 80 (40\%) & 160 (25\%) & 320 (14\%) & 640 (8\%) \\
FC and skip connections ($\sim$nW) & 30 (43\%) & 120 (60\%) & 480 (75\%)& 1920 (86\%) & 7680 (92\%)\\
\bottomrule
\end{tabular}
\end{table}

\section{Additional inference traces}\label{sec:inference_traces}
This appendix presents additional Cadence Spectre transient simulations that validate the hardware implementation against software predictions. Figures~\ref{fig:figS1}--\ref{fig:figS10} show ten representative inference examples, each displaying the two output logit currents (class~0: ``background''; class~1: ``yes'') and instantaneous power consumption. The samples were drawn from the test set to cover diverse input conditions (positive keyword, negative words, background noise), and include one case of hardware--software discrepancy as well as one misclassification common to both implementations. The main-text figure (Figure~\ref{fig:fig2}) corresponds to seed~51 and is omitted here to avoid redundancy.

Classification is determined by majority voting over all timesteps. Across 50 test samples, hardware predictions match software in 49 cases. The single discrepancy (Figure~\ref{fig:figS8}) occurs on a sample where the software prediction was maximally uncertain, with 51 timesteps predicting ``yes'' and 50 predicting ``background'', effectively a tie. This confirms that hardware--software mismatches arise only at the decision boundary rather than from systematic circuit errors. Power consumption remains consistently around \SI{100}{\nano\watt} across all samples for the RNN core, with minor input-dependent fluctuations reflecting switching activity.

\section{PVT corner validation}\label{sec:pvt_corners}

PVT (Process, Voltage, Temperature) corner analysis assesses robustness to manufacturing and environmental variation. Figures~\ref{fig:pvt_corners0} and~\ref{fig:pvt_corners1} present results across all five process corners (TT, FF, SS, FS, SF), the temperature range [$-27^\circ$C, $27^\circ$C, $81^\circ$C], and $\pm 10\%$ supply voltage variation. Classification correctness is preserved across all conditions for both representative inferences.

\section{Monte Carlo mismatch analysis}\label{sec:mismatch_analysis}

Monte Carlo mismatch analysis (200 samples, $3\sigma$ on all transistors) was performed on all 11 inferences shown in the manuscript. The results (Figures~\ref{fig:monte_carlo0}--\ref{fig:monte_carlo10}) confirm that misclassifications under mismatch occur exclusively on samples for which the network is already uncertain (or wrong) under nominal conditions. The bistable dynamics confer inherent mismatch immunity: discrete thresholding prevents small parameter perturbations from propagating to output errors unless the input already lies near the decision boundary.

\begin{figure*}[ht!]
  \begin{center}
    \centerline{\includegraphics[width=0.7\linewidth]{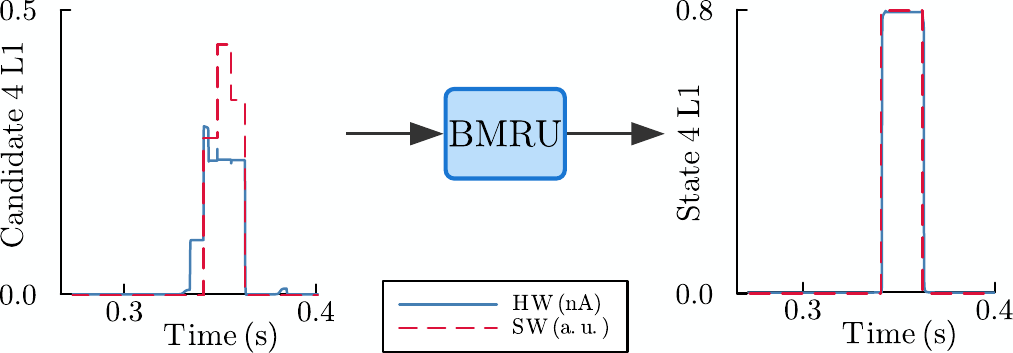}}
    \caption{
    \textbf{Illustration of $20\times$ error suppression at BMRU cell boundaries.}
    During inference for the sample in Figure~\ref{fig:fig2} (layer 1, fourth dimension), candidate signals at the BMRU inputs (left) exhibit discrepancies between software predictions and circuit simulations. After processing by the BMRU, the resulting states (right) show significantly improved agreement, demonstrating both strong noise attenuation and effective error suppression.
    }
    \label{fig:agreement_schematic}
  \end{center}
\end{figure*}

\section{Multi-class KWS evaluation}\label{sec:multiclass_kws}

To verify that the approach generalizes beyond binary classification, the FQ BMRU is evaluated on 11-class digit recognition (``zero'' through ``nine'' plus background noise) from the Google Speech Commands dataset. A $2\times 16$ architecture achieves competitive accuracy and substantially widens the output margin between classes (Figure~\ref{fig:multiclass}), thereby improving robustness to mismatch. Networks of this size remain within the sub-microwatt consumption estimates of Table~\ref{tab:power_scaling}. Given the strong hardware--software agreement demonstrated at $d=4$, comparable behavior is expected in hardware at larger dimensions.

\section{Signal agreement between software and hardware}\label{sec:signal_agreement}
 
Figures~\ref{fig:signal_agreement0}--\ref{fig:signal_agreement11} provide a detailed comparison of intermediate signals between software predictions and Cadence Spectre simulation at every stage of the network. The close agreement validates the end-to-end co-design methodology and confirms that the FQ reformulation produces a faithful mapping between learned parameters and circuit behavior. The $20\times$ error suppression at BMRU cell boundaries is clearly visible: candidate signals accumulate errors on the order of tens of picoamperes, while cell outputs exhibit errors of only a few picoamperes. Figure~\ref{fig:agreement_schematic} illustrates this $20\times$ error suppression at BMRU cell boundaries for the sample shown in Figure~\ref{fig:fig2}, at layer 1 and in the fourth dimension.

\section{Programmable hardware overhead}\label{sec:programmable_overhead}
 
Based on Cadence simulations of the non-programmable circuit (total area ${\sim}20{,}000$~$\mu$m$^2$) together with characterization of programmable building blocks, the following overhead is estimated for a fully programmable version employing binary-weighted current mirrors and digital shift registers.
 
\textbf{Binary-weighted current mirrors (4-bit resolution).} Inactive mirror branches consume negligible leakage, so the power overhead is insignificant. The area overhead is estimated at ${\sim}5{,}000$~$\mu$m$^2$ and scales linearly with the number of parameters.
 
\textbf{Shift registers ($d=4$, 4-bit quantization).} Power overhead is approximately ${\sim}100$~nW; area overhead is ${\sim}60{,}000$~$\mu$m$^2$. Both scale linearly with the number of parameters.
 
\textbf{Bias generation.} Drawing on current bias generators from the literature~\cite{yang2012iscas_bias, richter2024nce_bias}, extended with temperature compensation, an upper bound of ${\sim}50$~nW at 4-bit resolution is estimated. This figure is conservative, as bias currents can be shared across cells.
 
\textbf{Bias routing.} Routing overhead is minimal, as the xh018 process provides nine metal layers.
 
In total, the programmable system would remain within the sub-microwatt envelope for networks up to $d=16$.

\section{Broader impact}\label{sec:broaderimpact}
This work aims to advance machine learning through ultra-low power analog hardware implementations. The central contribution of this work is demonstrating that recurrent temporal processing can be added to analog inference platforms at linear marginal power cost. Since feedforward analog computation using current mirrors and KCL is already well established, this result implies that any existing or future analog feedforward accelerator could, in principle, gain the ability to process temporal sequences by integrating BMRU cells with modest additional overhead. This architectural composability extends further: because the FQ BMRU provides persistent bistable memory while other recurrent architectures (SSMs, gated RNNs) provide transient dynamics, hybrid designs combining both could cover a broader range of temporal processing requirements while remaining within ultra-low power budgets. The practical implications span domains where continuous temporal inference under extreme power constraints is currently infeasible: always-on environmental and structural monitoring over years-long deployments, implantable biomedical devices operating without frequent recharging, and distributed sensor networks where battery replacement is impractical or impossible. More broadly, by narrowing the gap between analog and digital inference capabilities, this work contributes to reducing the energy footprint of edge AI at a time when the computational demands of AI are growing rapidly. As with any advance in always-on sensing, deployment should be accompanied by appropriate consent and governance frameworks, particularly for audio-processing applications.

\section{Software and data}\label{sec:softanddata}
All code required to reproduce the experiments is publicly available on GitHub%
\footnote{\url{https://github.com/arthur-fyon/Fyon_CoDesign_2026}}. This software is not open source. It is protected by patents and is made available under a limited evaluation license only. The repository includes:
\begin{itemize}[topsep=2pt, parsep=0pt, partopsep=0pt]
    \item \textbf{Software training (Table~\ref{tab:benchmarks}):} JAX/Flax implementations of the BMRU, FQ BMRU, LRU, and minGRU, with both the software and hardware backbones, along with training scripts and evaluation code for all benchmarks in Table~\ref{tab:benchmarks}.
    \item \textbf{Noise robustness analysis (Figure~\ref{fig:noise_analysis}):} Complete inference and noise injection pipeline used to produce Figure~\ref{fig:noise_analysis}.
    \item \textbf{``yes'' KWS (Section~\ref{sec:kws_proof_of_concept}):} Complete training pipeline and export scripts for mapping learned parameters to circuit elements for analog inference.
    \item \textbf{Datasets:} The MNIST, Google Speech Commands, and Long Range Arena (ListOps) datasets~\cite{lecun1998gradient, warden2018speech, tay2021long} are publicly available. Preprocessing scripts and train/validation/test splits follow standard practices and are included in the repository.
    \item \textbf{Post-processing:} Julia scripts for parsing Cadence simulation outputs (logit currents, power traces, and all additional figures) and generating transistor width lookup tables from device characterization data.
\end{itemize}
Cadence Virtuoso design files (schematics, symbols, and Maestro configurations) are not included due to software licensing constraints and version compatibility issues, but are available upon reasonable request to the corresponding author for researchers with access to Cadence tools.

\begin{figure*}[ht!]
  \vskip 0.2in
  \begin{center}
    \centerline{\includegraphics[width=\linewidth]{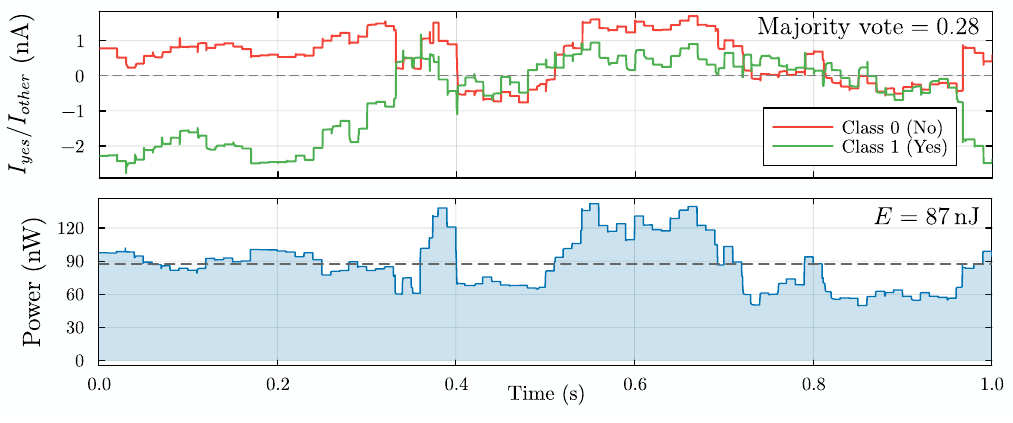}}
    \caption{\textbf{Hardware inference traces from Cadence Spectre simulation (seed 45).} Each panel shows output logit currents (top: $I_{\text{yes}}$ in green, $I_{\text{no}}$ in red) and power consumption (bottom) over the 101-frame input sequence. Spoken word: ``down''. Hardware prediction via majority vote: ``background''. Software prediction: ``background''.}
    \label{fig:figS1}
  \end{center}
\end{figure*}

\begin{figure*}[ht!]
  \vskip 0.2in
  \begin{center}
    \centerline{\includegraphics[width=\linewidth]{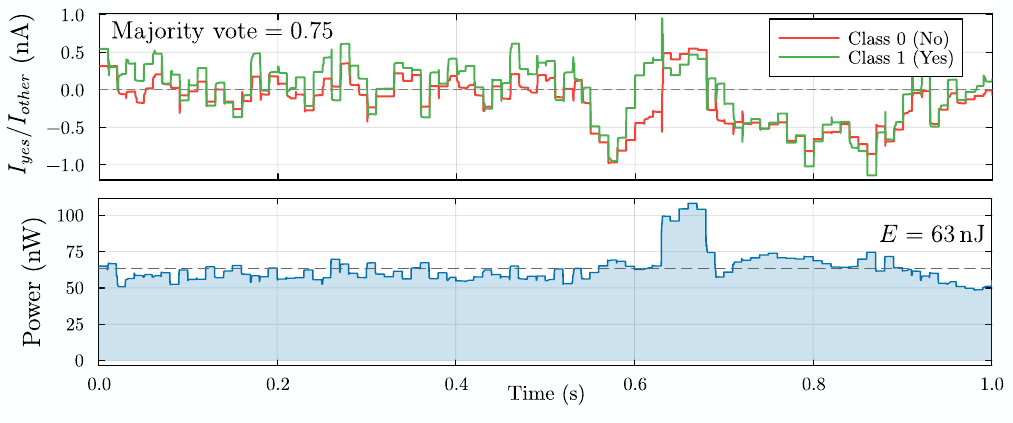}}
    \caption{\textbf{Hardware inference traces from Cadence Spectre simulation (seed 47).} Each panel shows output logit currents (top: $I_{\text{yes}}$ in green, $I_{\text{no}}$ in red) and power consumption (bottom) over the 101-frame input sequence. Spoken word: ``yes''. Hardware prediction via majority vote: ``yes''. Software prediction: ``yes''.}
    \label{fig:figS2}
  \end{center}
\end{figure*}

\begin{figure*}[ht!]
  \vskip 0.2in
  \begin{center}
    \centerline{\includegraphics[width=\linewidth]{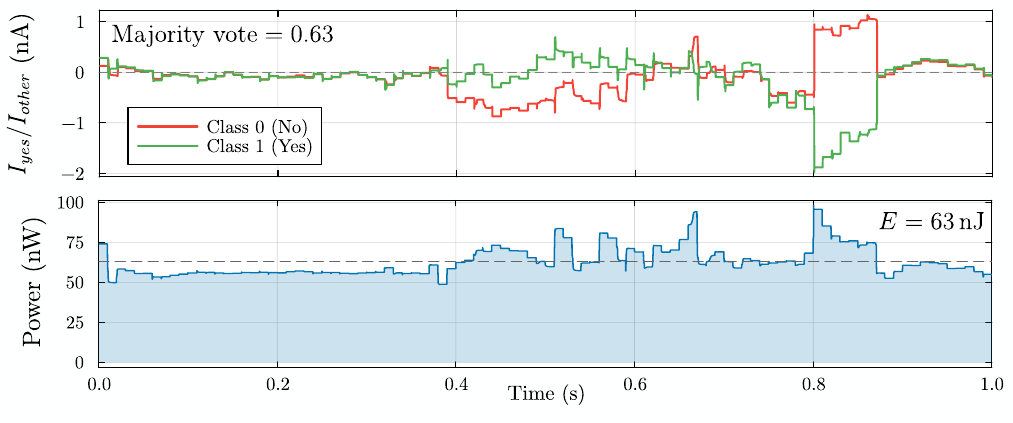}}
    \caption{\textbf{Hardware inference traces from Cadence Spectre simulation (seed 48).} Each panel shows output logit currents (top: $I_{\text{yes}}$ in green, $I_{\text{no}}$ in red) and power consumption (bottom) over the 101-frame input sequence. Spoken word: ``yes''. Hardware prediction via majority vote: ``yes''. Software prediction: ``yes''.}
    \label{fig:figS3}
  \end{center}
\end{figure*}
\clearpage

\begin{figure*}[ht!]
  \vskip 0.2in
  \begin{center}
    \centerline{\includegraphics[width=\linewidth]{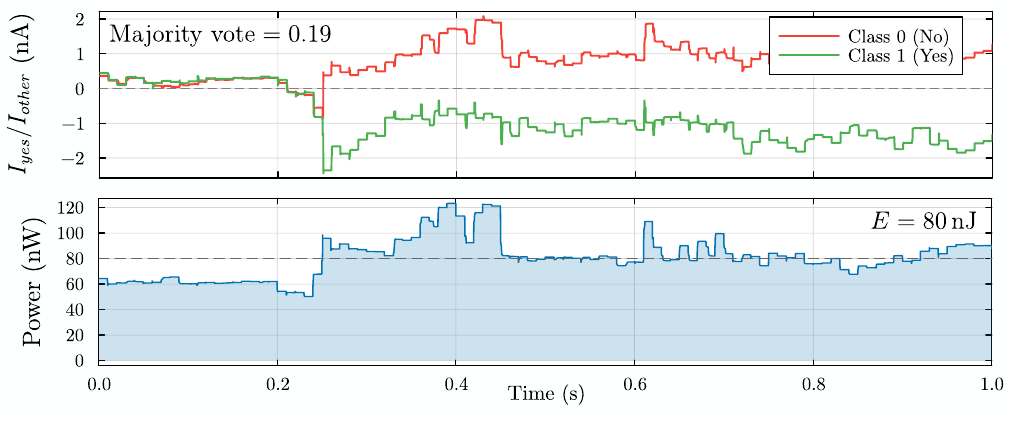}}
    \caption{\textbf{Hardware inference traces from Cadence Spectre simulation (seed 49).} Each panel shows output logit currents (top: $I_{\text{yes}}$ in green, $I_{\text{no}}$ in red) and power consumption (bottom) over the 101-frame input sequence. Spoken word: ``yes''. Hardware prediction via majority vote: ``background''. Software prediction: ``background''. In this case, both implementations misclassify the sample. Note that the spoken ``yes'' has an elongated ``s'' sound, reflected in the output logit dynamics.}
    \label{fig:figS4}
  \end{center}
\end{figure*}

\begin{figure*}[ht!]
  \vskip 0.2in
  \begin{center}
    \centerline{\includegraphics[width=\linewidth]{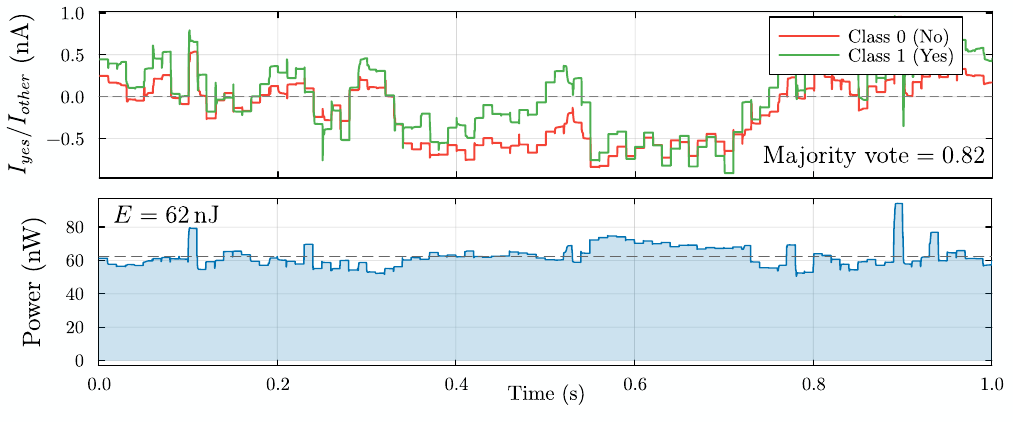}}
    \caption{\textbf{Hardware inference traces from Cadence Spectre simulation (seed 50).} Each panel shows output logit currents (top: $I_{\text{yes}}$ in green, $I_{\text{no}}$ in red) and power consumption (bottom) over the 101-frame input sequence. Spoken word: ``yes''. Hardware prediction via majority vote: ``yes''. Software prediction: ``yes''.}
    \label{fig:figS5}
  \end{center}
\end{figure*}
\clearpage

\begin{figure*}[ht!]
  \vskip 0.2in
  \begin{center}
    \centerline{\includegraphics[width=\linewidth]{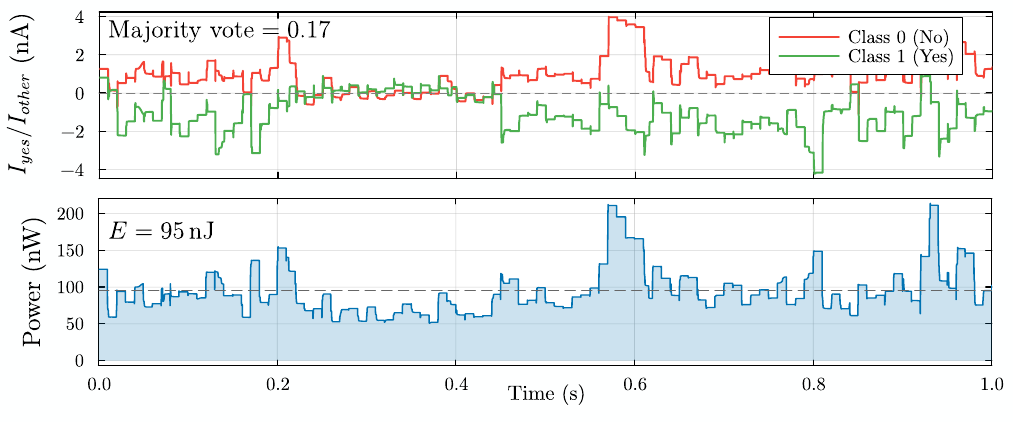}}
    \caption{\textbf{Hardware inference traces from Cadence Spectre simulation (seed 52).} Each panel shows output logit currents (top: $I_{\text{yes}}$ in green, $I_{\text{no}}$ in red) and power consumption (bottom) over the 101-frame input sequence. Spoken word: background noise (no speech). Hardware prediction via majority vote: ``background''. Software prediction: ``background''.}
    \label{fig:figS6}
  \end{center}
\end{figure*}

\begin{figure*}[ht!]
  \vskip 0.2in
  \begin{center}
    \centerline{\includegraphics[width=\linewidth]{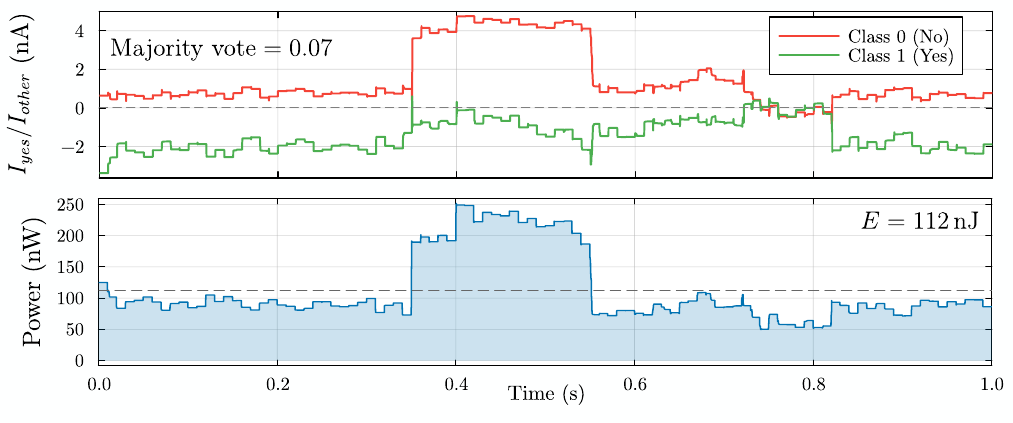}}
    \caption{\textbf{Hardware inference traces from Cadence Spectre simulation (seed 66).} Each panel shows output logit currents (top: $I_{\text{yes}}$ in green, $I_{\text{no}}$ in red) and power consumption (bottom) over the 101-frame input sequence. Spoken word: ``up''. Hardware prediction via majority vote: ``background''. Software prediction: ``background''.}
    \label{fig:figS7}
  \end{center}
\end{figure*}
\clearpage

\begin{figure*}[ht!]
  \vskip 0.2in
  \begin{center}
    \centerline{\includegraphics[width=\linewidth]{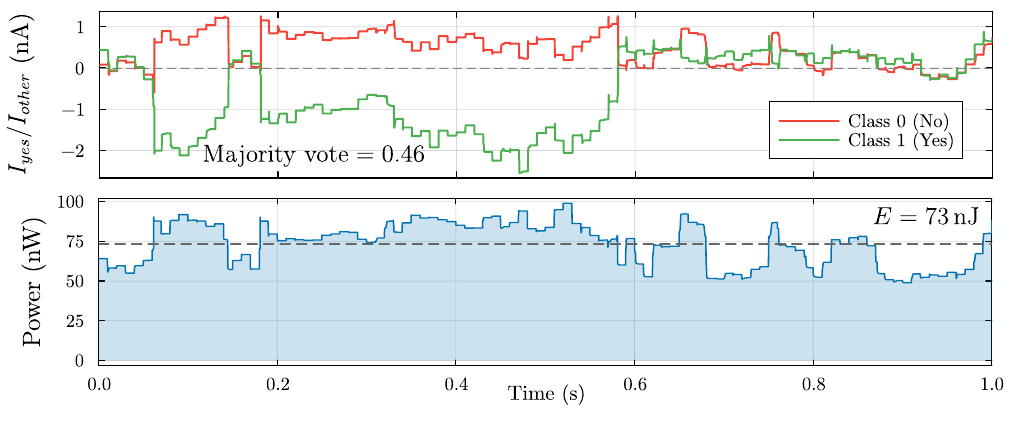}}
    \caption{\textbf{Hardware inference traces from Cadence Spectre simulation (seed 67).} Each panel shows output logit currents (top: $I_{\text{yes}}$ in green, $I_{\text{no}}$ in red) and power consumption (bottom) over the 101-frame input sequence. Spoken word: ``yes''. Hardware prediction via majority vote: ``background''. Software prediction: ``yes''. This is the only case across 50 test samples where hardware and software predictions differ.}
    \label{fig:figS8}
  \end{center}
\end{figure*}

\begin{figure*}[ht!]
  \vskip 0.2in
  \begin{center}
    \centerline{\includegraphics[width=\linewidth]{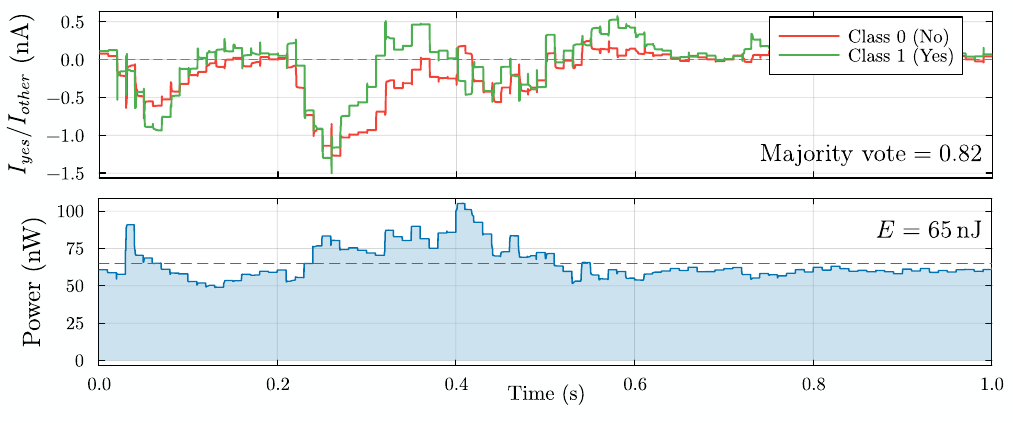}}
    \caption{\textbf{Hardware inference traces from Cadence Spectre simulation (seed 68).} Each panel shows output logit currents (top: $I_{\text{yes}}$ in green, $I_{\text{no}}$ in red) and power consumption (bottom) over the 101-frame input sequence. Spoken word: ``yes''. Hardware prediction via majority vote: ``yes''. Software prediction: ``yes''.}
    \label{fig:figS9}
  \end{center}
\end{figure*}
\clearpage

\begin{figure*}[ht!]
  \vskip 0.2in
  \begin{center}
    \centerline{\includegraphics[width=\linewidth]{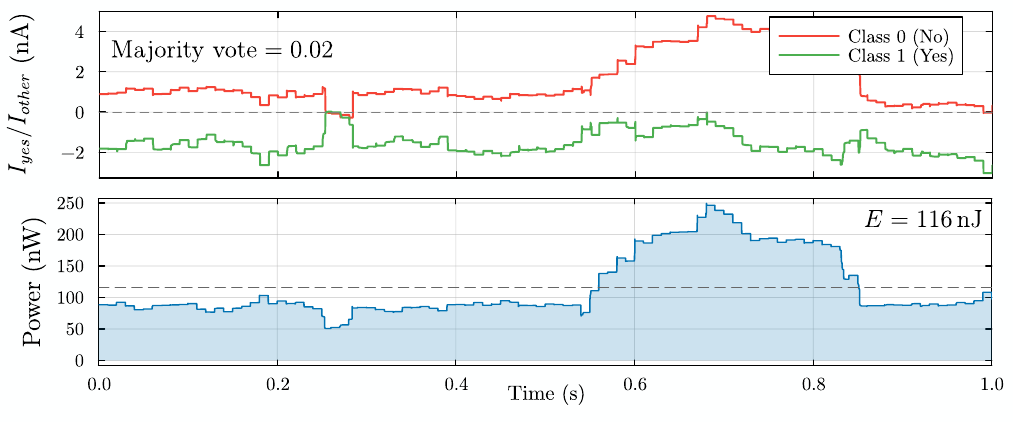}}
    \caption{\textbf{Hardware inference traces from Cadence Spectre simulation (seed 61).} Each panel shows output logit currents (top: $I_{\text{yes}}$ in green, $I_{\text{no}}$ in red) and power consumption (bottom) over the 101-frame input sequence. Spoken word: ``right''. Hardware prediction via majority vote: ``background''. Software prediction: ``background''.}
    \label{fig:figS10}
  \end{center}
\end{figure*}
 
\begin{figure*}[ht!]
  \begin{center}
    \centerline{\includegraphics[width=\linewidth]{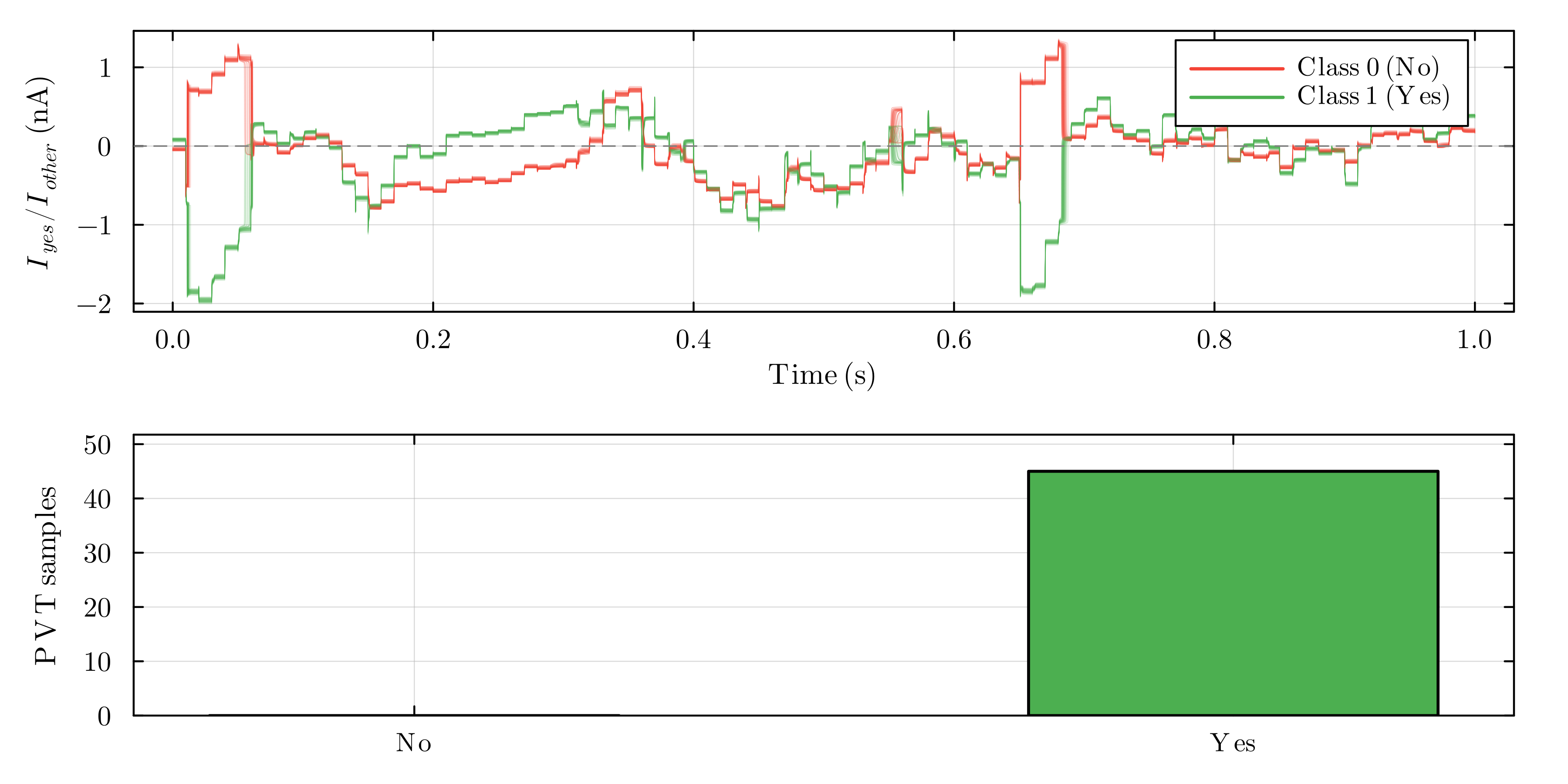}}
    \caption{
      \textbf{PVT corner validation from Cadence Spectre simulation (seed 51).}
      Each panel shows output logit currents (top: $I_{\text{yes}}$ in green, $I_{\text{no}}$ in red) over the 101-frame input sequence and the corresponding prediction for each PVT condition (bottom). All five process corners (TT, FF, SS, FS, SF), three temperatures ($-27^\circ$C, $27^\circ$C, $81^\circ$C), and $\pm 10\%$ supply voltage variation are evaluated. Spoken word: ``yes''. Correct classification is maintained across all conditions.
    }
    \label{fig:pvt_corners0}
  \end{center}
\end{figure*}
\clearpage

\begin{figure*}[ht!]
  \begin{center}
    \centerline{\includegraphics[width=\linewidth]{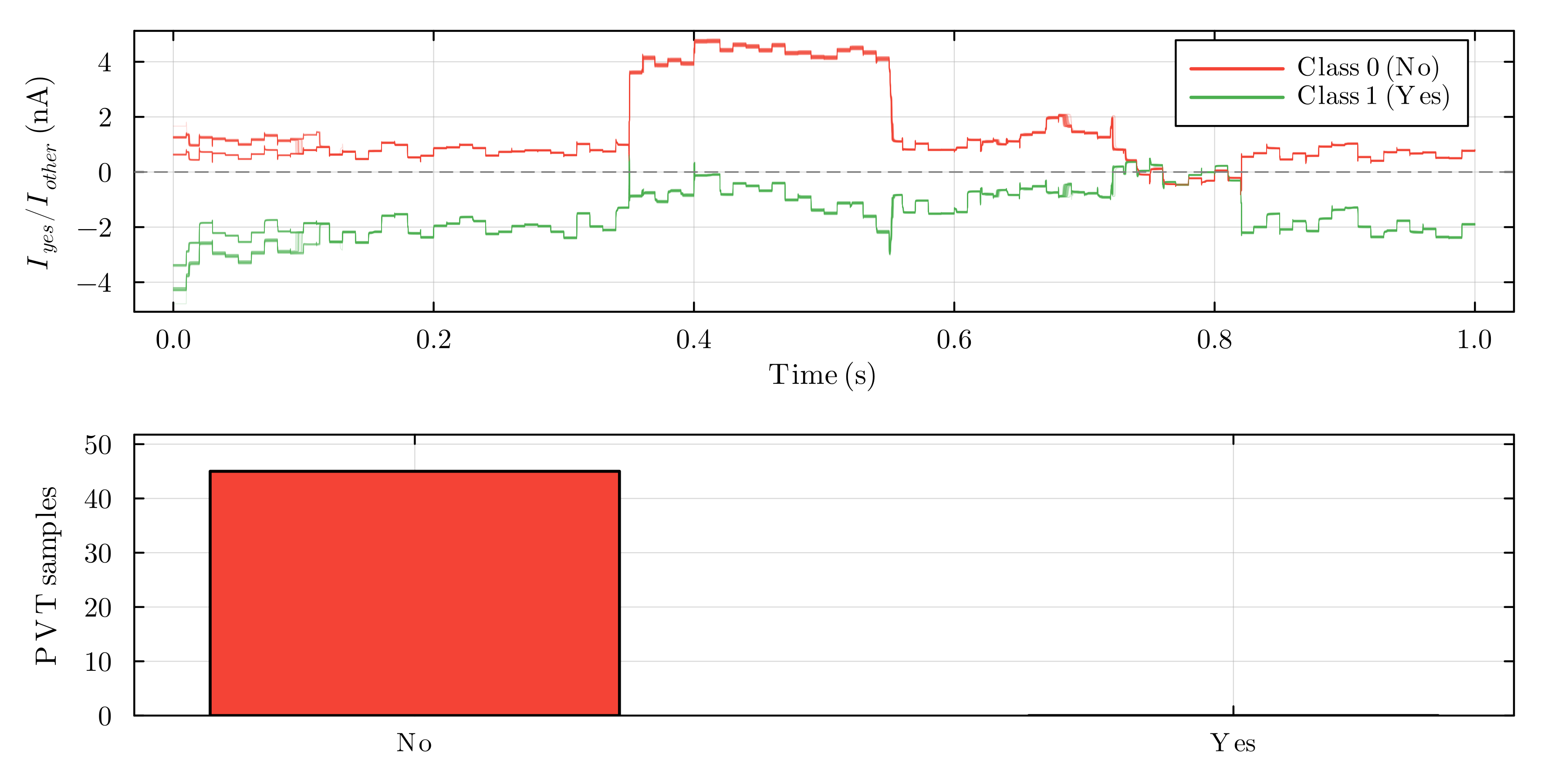}}
    \caption{
      \textbf{PVT corner validation from Cadence Spectre simulation (seed 66).}
      Each panel shows output logit currents (top: $I_{\text{yes}}$ in green, $I_{\text{no}}$ in red) over the 101-frame input sequence and the corresponding prediction for each PVT condition (bottom). All five process corners (TT, FF, SS, FS, SF), three temperatures ($-27^\circ$C, $27^\circ$C, $81^\circ$C), and $\pm 10\%$ supply voltage variation are evaluated. Input: background noise. Correct classification is maintained across all conditions.
    }
    \label{fig:pvt_corners1}
  \end{center}
\end{figure*}

\begin{figure*}[ht!]
  \begin{center}
    \centerline{\includegraphics[width=\linewidth]{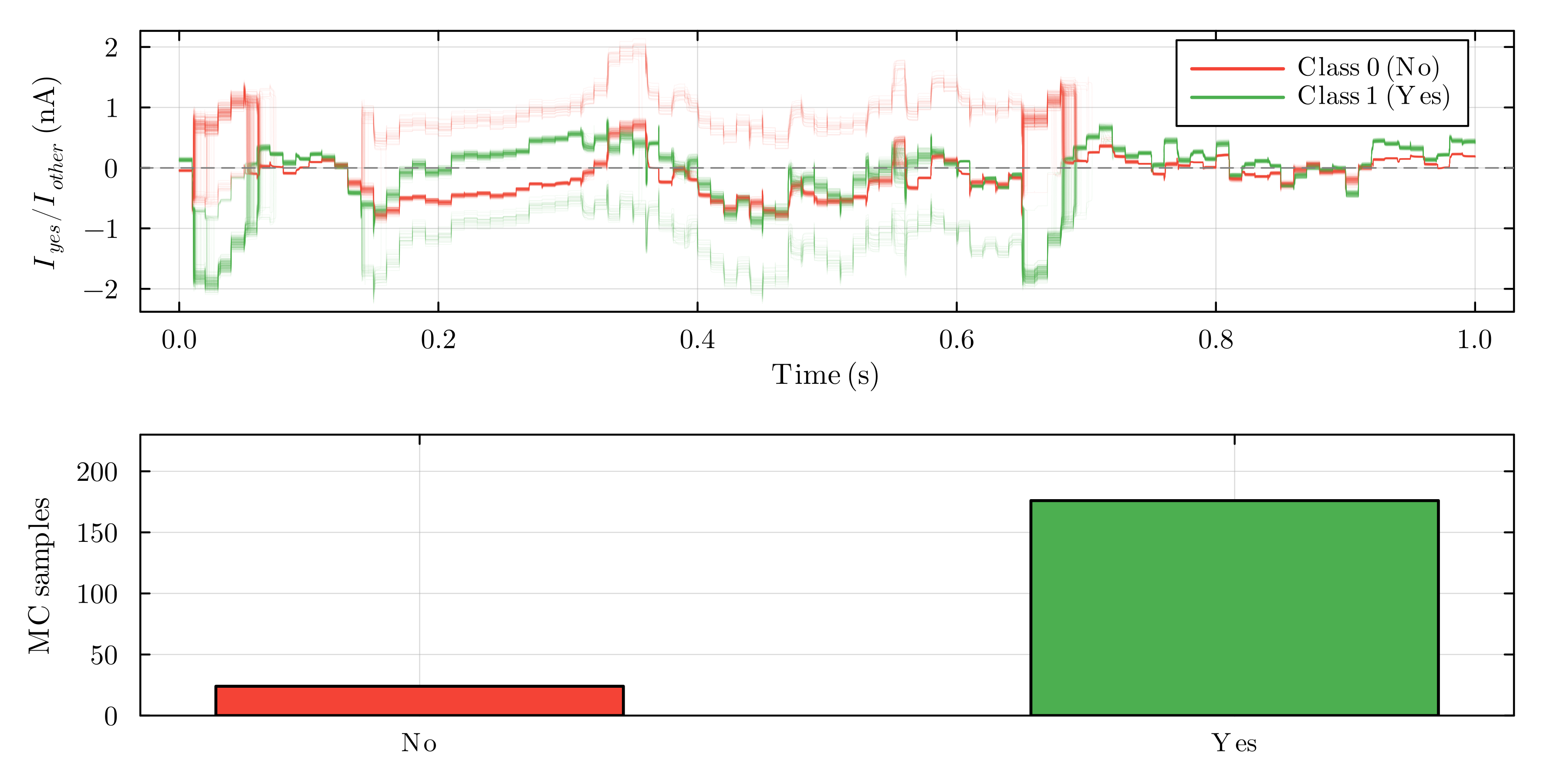}}
    \caption{
      \textbf{Monte Carlo mismatch analysis from Cadence Spectre simulation (seed 51).}
      Each panel shows output logit currents (top: $I_{\text{yes}}$ in green, $I_{\text{no}}$ in red) over the 101-frame input sequence and the corresponding prediction for each Monte Carlo sample (bottom). Analysis performed with $3\sigma$ mismatch variation on all transistors (200 samples). Spoken word: ``yes''. Nominal prediction: ``yes''. Impaired sample rate: 11.5\%.
    }
    \label{fig:monte_carlo0}
  \end{center}
\end{figure*}
\clearpage

\begin{figure*}[ht!]
  \vskip 0.2in
  \begin{center}
    \centerline{\includegraphics[width=\linewidth]{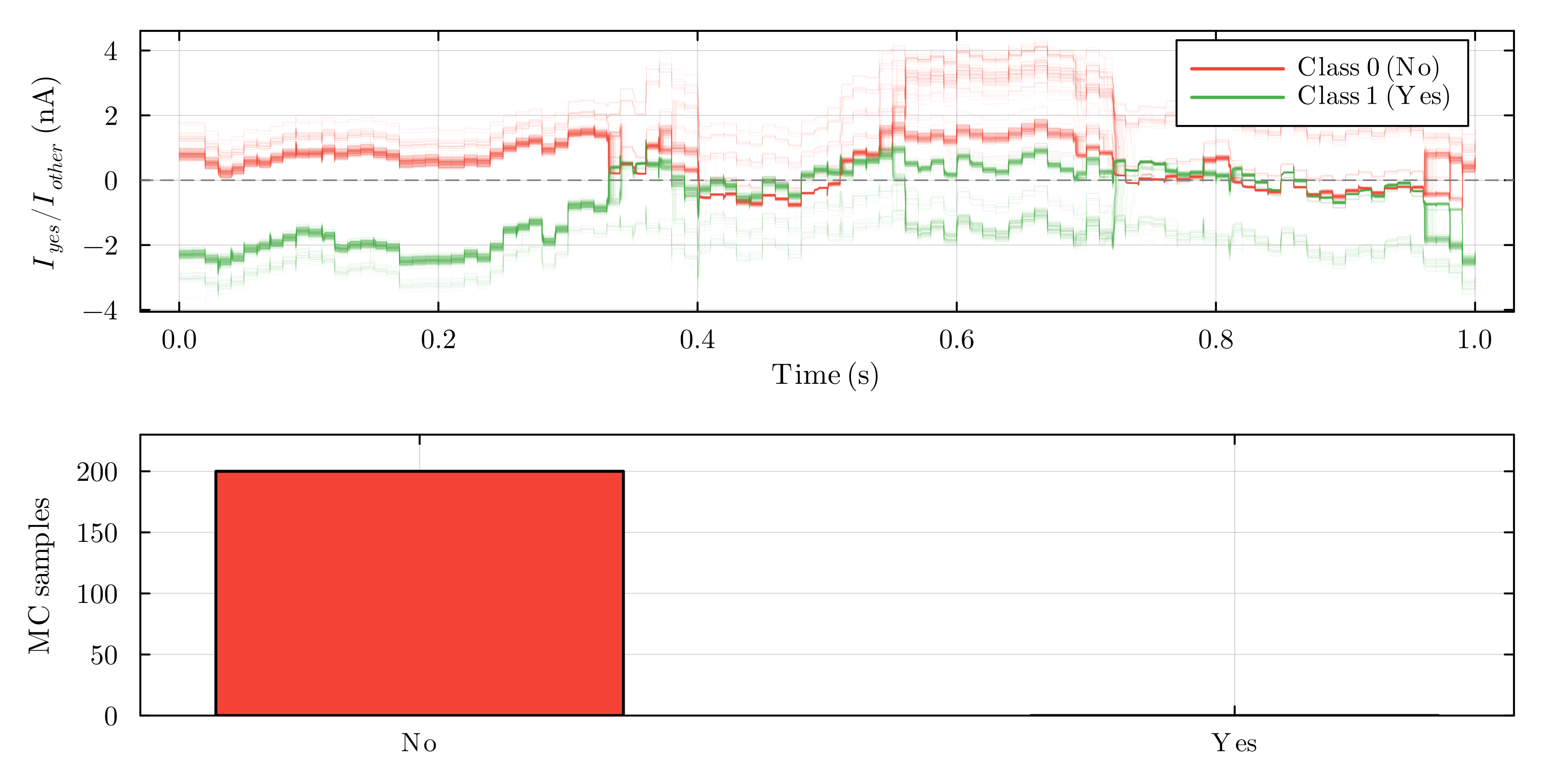}}
    \caption{
      \textbf{Monte Carlo mismatch analysis from Cadence Spectre simulation (seed 45).}
      Each panel shows output logit currents (top: $I_{\text{yes}}$ in green, $I_{\text{no}}$ in red) over the 101-frame input sequence and the corresponding prediction for each Monte Carlo sample (bottom). Analysis performed with $3\sigma$ mismatch variation on all transistors (200 samples). Spoken word: ``down''. Nominal prediction: ``background''. Impaired sample rate: 0\%.
    }
    \label{fig:monte_carlo1}
  \end{center}
\end{figure*}

\begin{figure*}[ht!]
  \vskip 0.2in
  \begin{center}
    \centerline{\includegraphics[width=\linewidth]{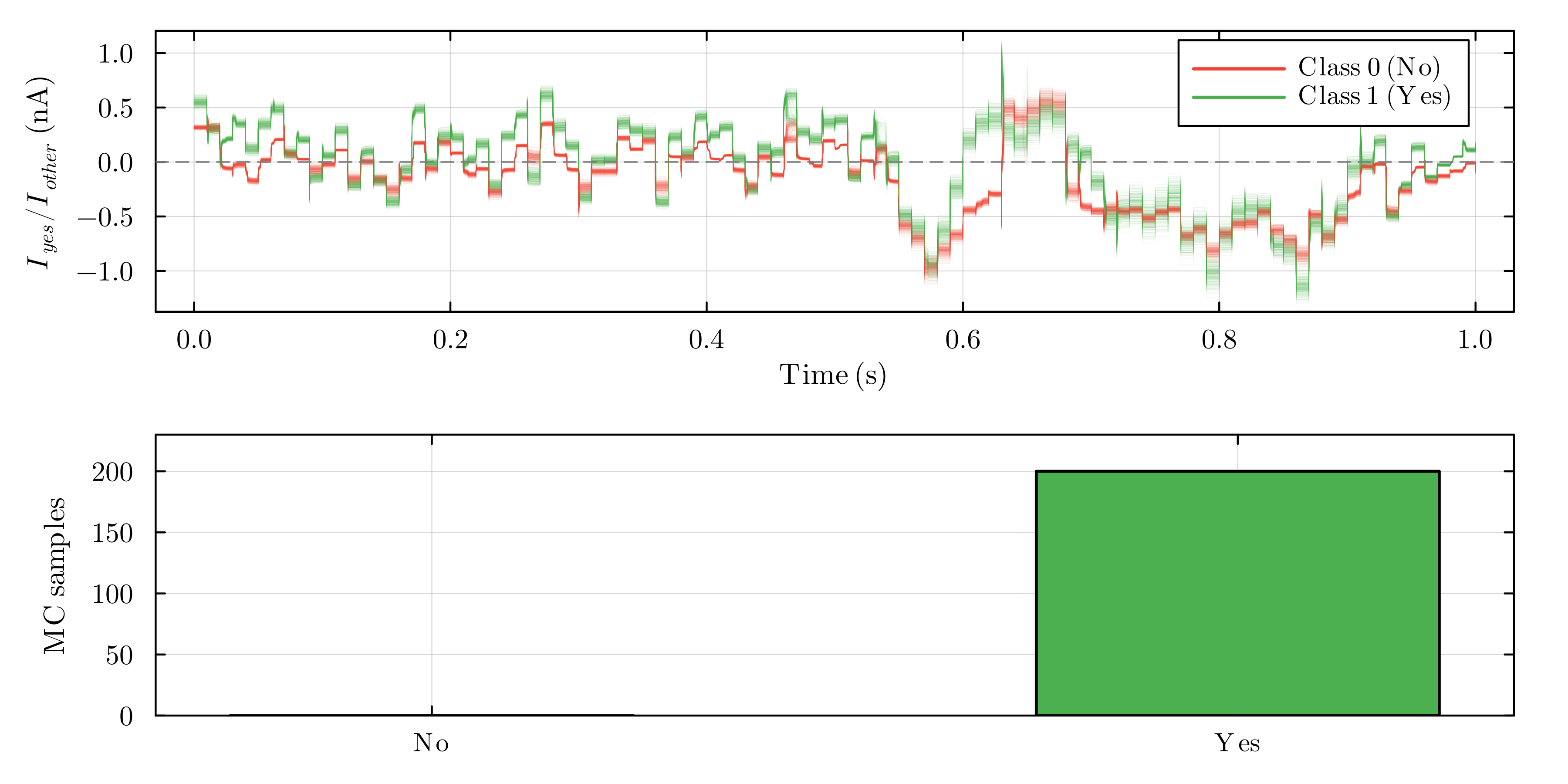}}
    \caption{
      \textbf{Monte Carlo mismatch analysis from Cadence Spectre simulation (seed 47).}
      Each panel shows output logit currents (top: $I_{\text{yes}}$ in green, $I_{\text{no}}$ in red) over the 101-frame input sequence and the corresponding prediction for each Monte Carlo sample (bottom). Analysis performed with $3\sigma$ mismatch variation on all transistors (200 samples). Spoken word: ``yes''. Nominal prediction: ``yes''. Impaired sample rate: 0\%.
    }
    \label{fig:monte_carlo2}
  \end{center}
\end{figure*}
\clearpage
 
\begin{figure*}[ht!]
  \vskip 0.2in
  \begin{center}
    \centerline{\includegraphics[width=\linewidth]{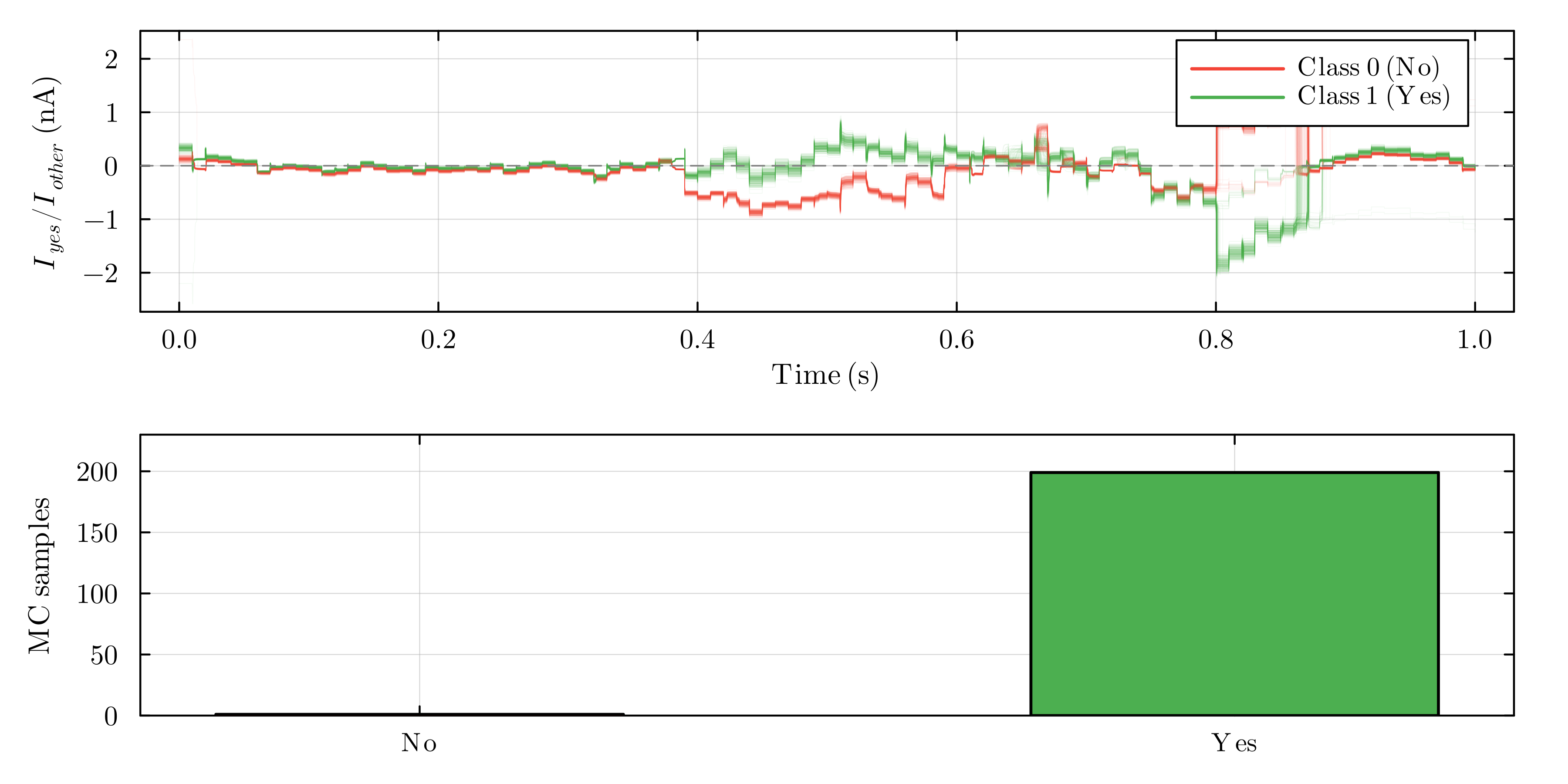}}
    \caption{
      \textbf{Monte Carlo mismatch analysis from Cadence Spectre simulation (seed 48).}
      Each panel shows output logit currents (top: $I_{\text{yes}}$ in green, $I_{\text{no}}$ in red) over the 101-frame input sequence and the corresponding prediction for each Monte Carlo sample (bottom). Analysis performed with $3\sigma$ mismatch variation on all transistors (200 samples). Spoken word: ``yes''. Nominal prediction: ``yes''. Impaired sample rate: 0.5\%.
    }
    \label{fig:monte_carlo3}
  \end{center}
\end{figure*}

\begin{figure*}[ht!]
  \vskip 0.2in
  \begin{center}
    \centerline{\includegraphics[width=\linewidth]{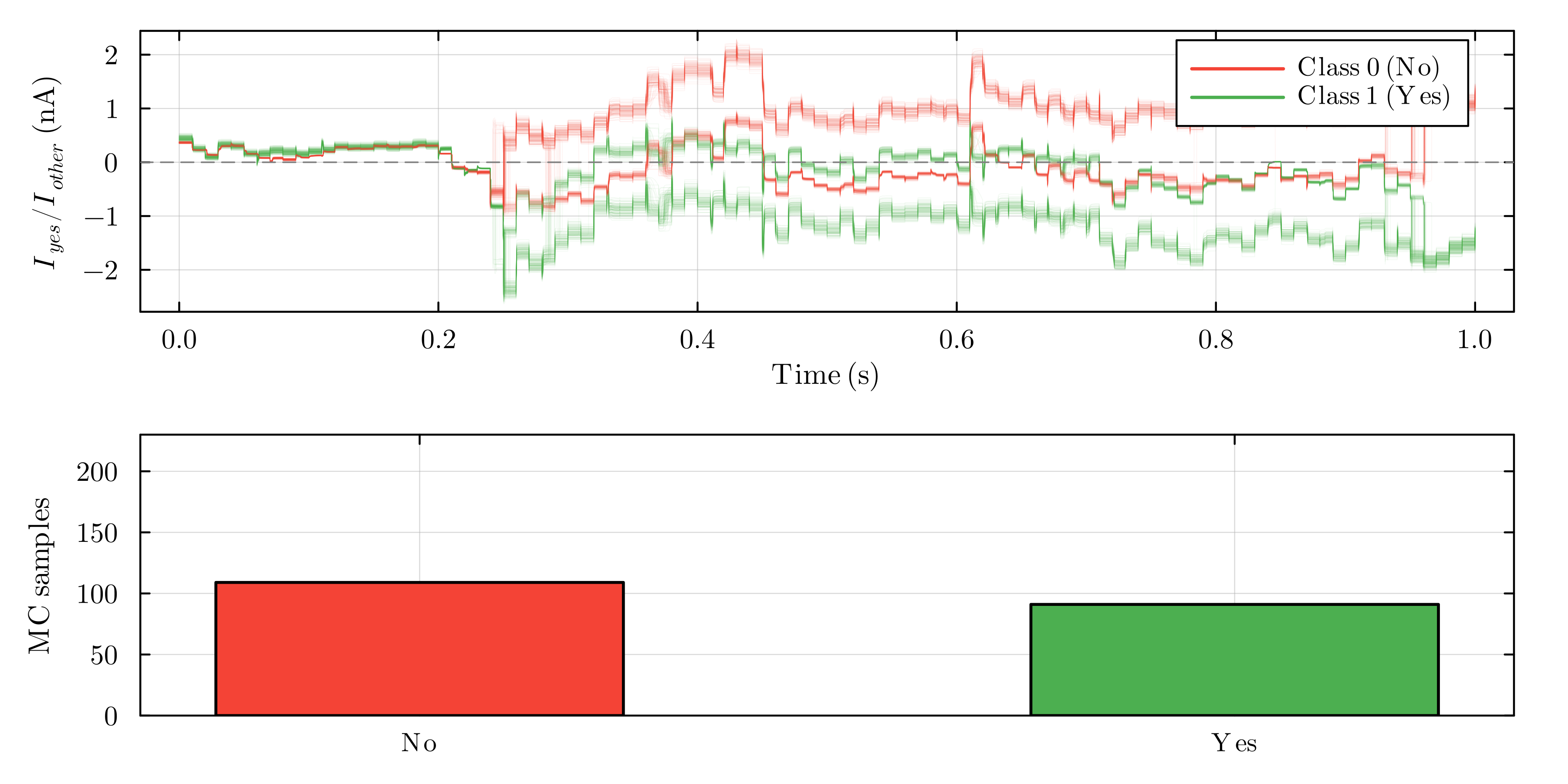}}
    \caption{
      \textbf{Monte Carlo mismatch analysis from Cadence Spectre simulation (seed 49).}
      Each panel shows output logit currents (top: $I_{\text{yes}}$ in green, $I_{\text{no}}$ in red) over the 101-frame input sequence and the corresponding prediction for each Monte Carlo sample (bottom). Analysis performed with $3\sigma$ mismatch variation on all transistors (200 samples). Spoken word: ``yes''. Nominal prediction: ``background'' (misclassified under nominal conditions). Impaired sample rate: 45.5\%.
    }
    \label{fig:monte_carlo4}
  \end{center}
\end{figure*}
\clearpage
 
\begin{figure*}[ht!]
  \vskip 0.2in
  \begin{center}
    \centerline{\includegraphics[width=\linewidth]{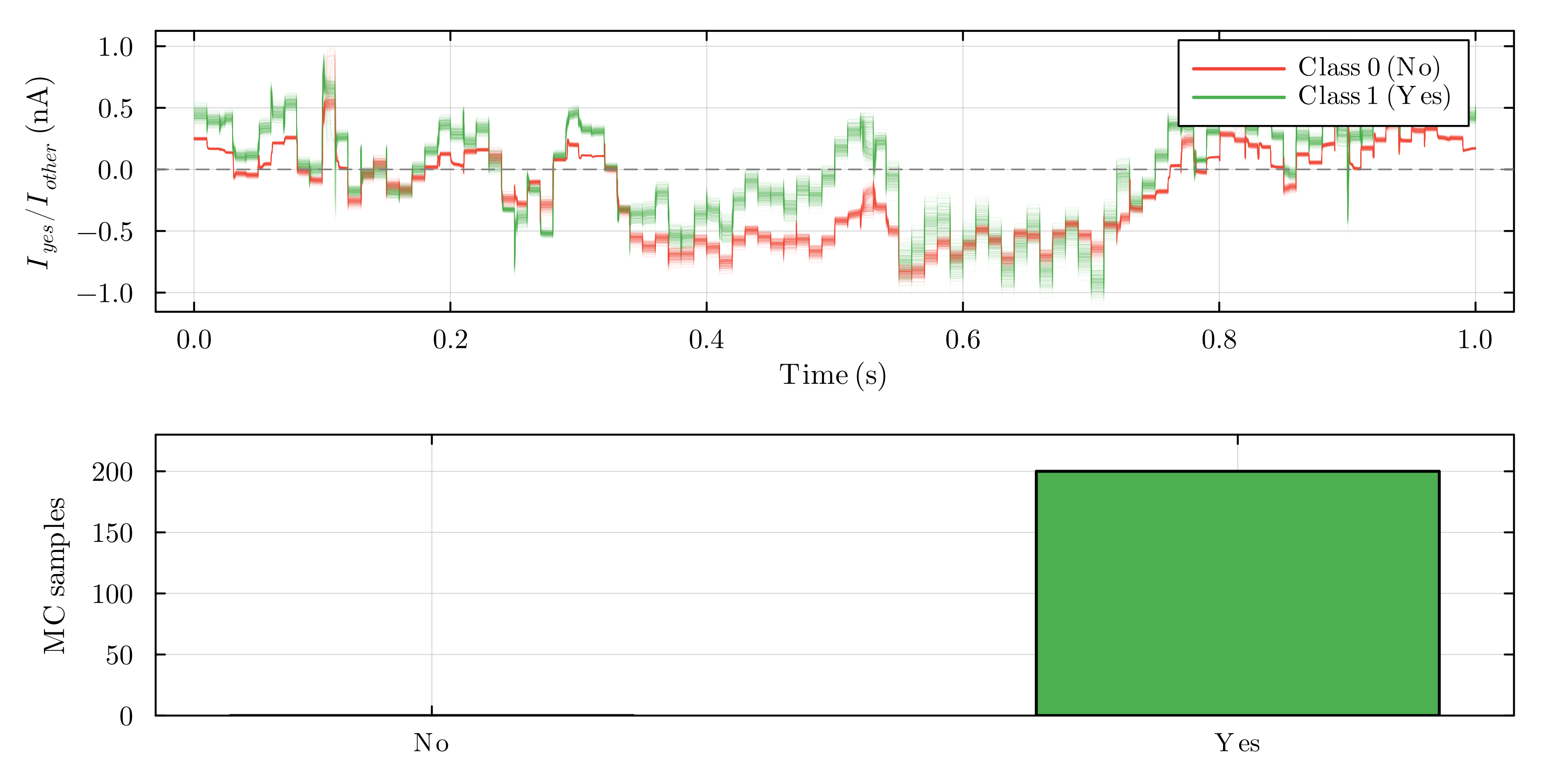}}
    \caption{
      \textbf{Monte Carlo mismatch analysis from Cadence Spectre simulation (seed 50).}
      Each panel shows output logit currents (top: $I_{\text{yes}}$ in green, $I_{\text{no}}$ in red) over the 101-frame input sequence and the corresponding prediction for each Monte Carlo sample (bottom). Analysis performed with $3\sigma$ mismatch variation on all transistors (200 samples). Spoken word: ``yes''. Nominal prediction: ``yes''. Impaired sample rate: 0\%.
    }
    \label{fig:monte_carlo5}
  \end{center}
\end{figure*}

\begin{figure*}[ht!]
  \vskip 0.2in
  \begin{center}
    \centerline{\includegraphics[width=\linewidth]{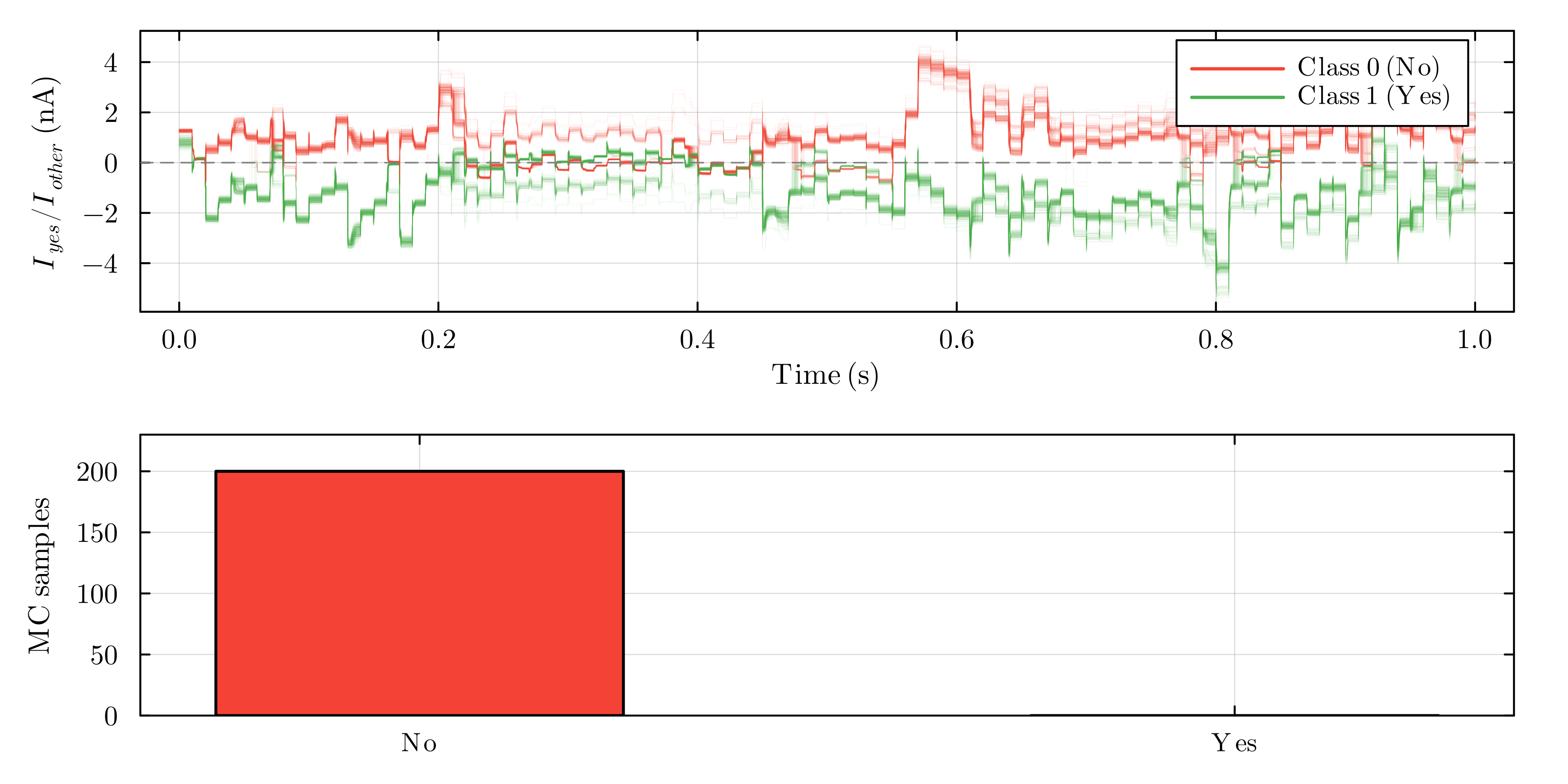}}
    \caption{
      \textbf{Monte Carlo mismatch analysis from Cadence Spectre simulation (seed 52).}
      Each panel shows output logit currents (top: $I_{\text{yes}}$ in green, $I_{\text{no}}$ in red) over the 101-frame input sequence and the corresponding prediction for each Monte Carlo sample (bottom). Analysis performed with $3\sigma$ mismatch variation on all transistors (200 samples). Input: background noise (no speech). Nominal prediction: ``background''. Impaired sample rate: 0\%.
    }
    \label{fig:monte_carlo6}
  \end{center}
\end{figure*}
\clearpage
 
\begin{figure*}[ht!]
  \vskip 0.2in
  \begin{center}
    \centerline{\includegraphics[width=\linewidth]{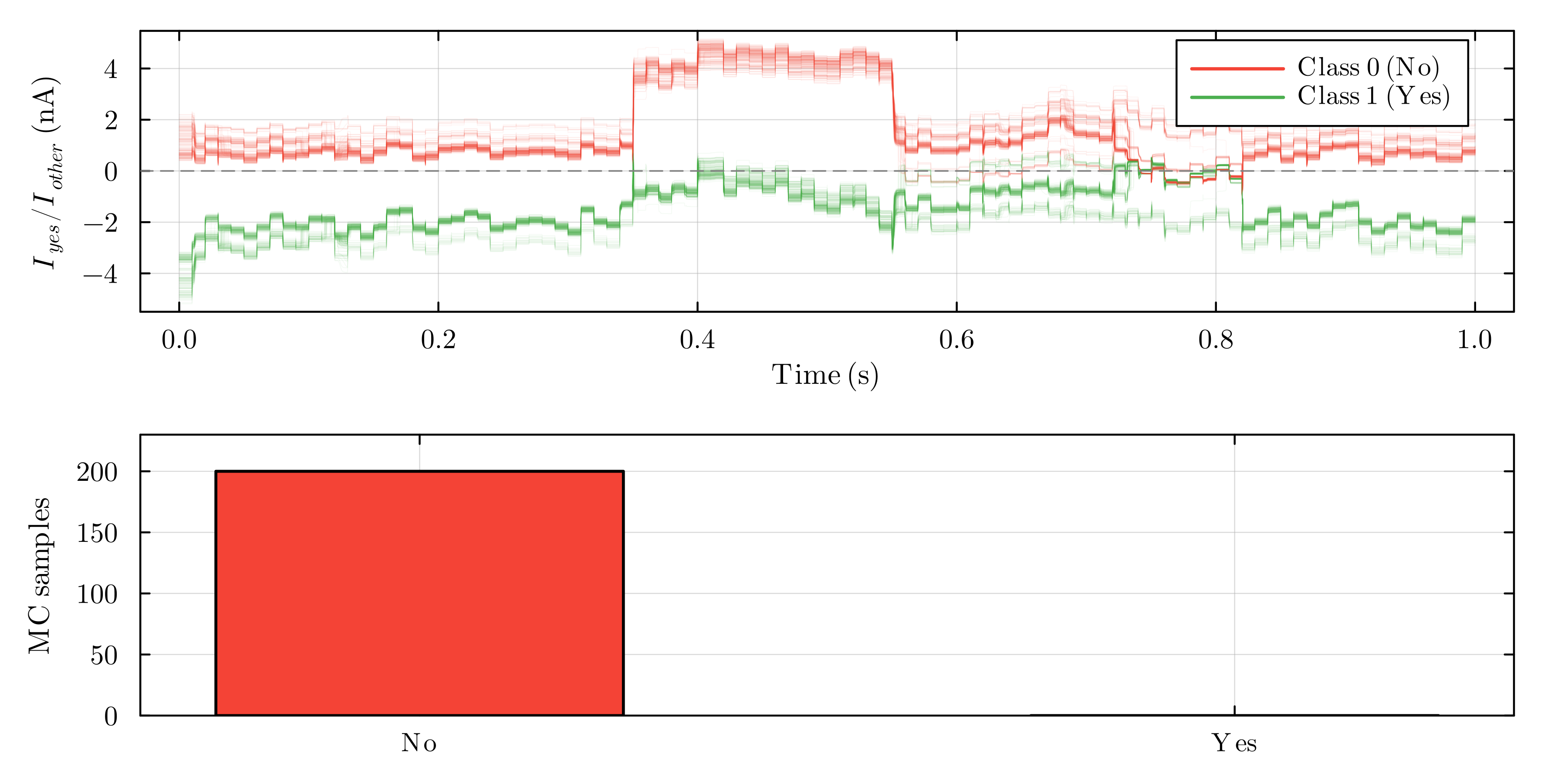}}
    \caption{
      \textbf{Monte Carlo mismatch analysis from Cadence Spectre simulation (seed 66).}
      Each panel shows output logit currents (top: $I_{\text{yes}}$ in green, $I_{\text{no}}$ in red) over the 101-frame input sequence and the corresponding prediction for each Monte Carlo sample (bottom). Analysis performed with $3\sigma$ mismatch variation on all transistors (200 samples). Spoken word: ``up''. Nominal prediction: ``background''. Impaired sample rate: 0\%.
    }
    \label{fig:monte_carlo7}
  \end{center}
\end{figure*}
  
\begin{figure*}[ht!]
  \vskip 0.2in
  \begin{center}
    \centerline{\includegraphics[width=\linewidth]{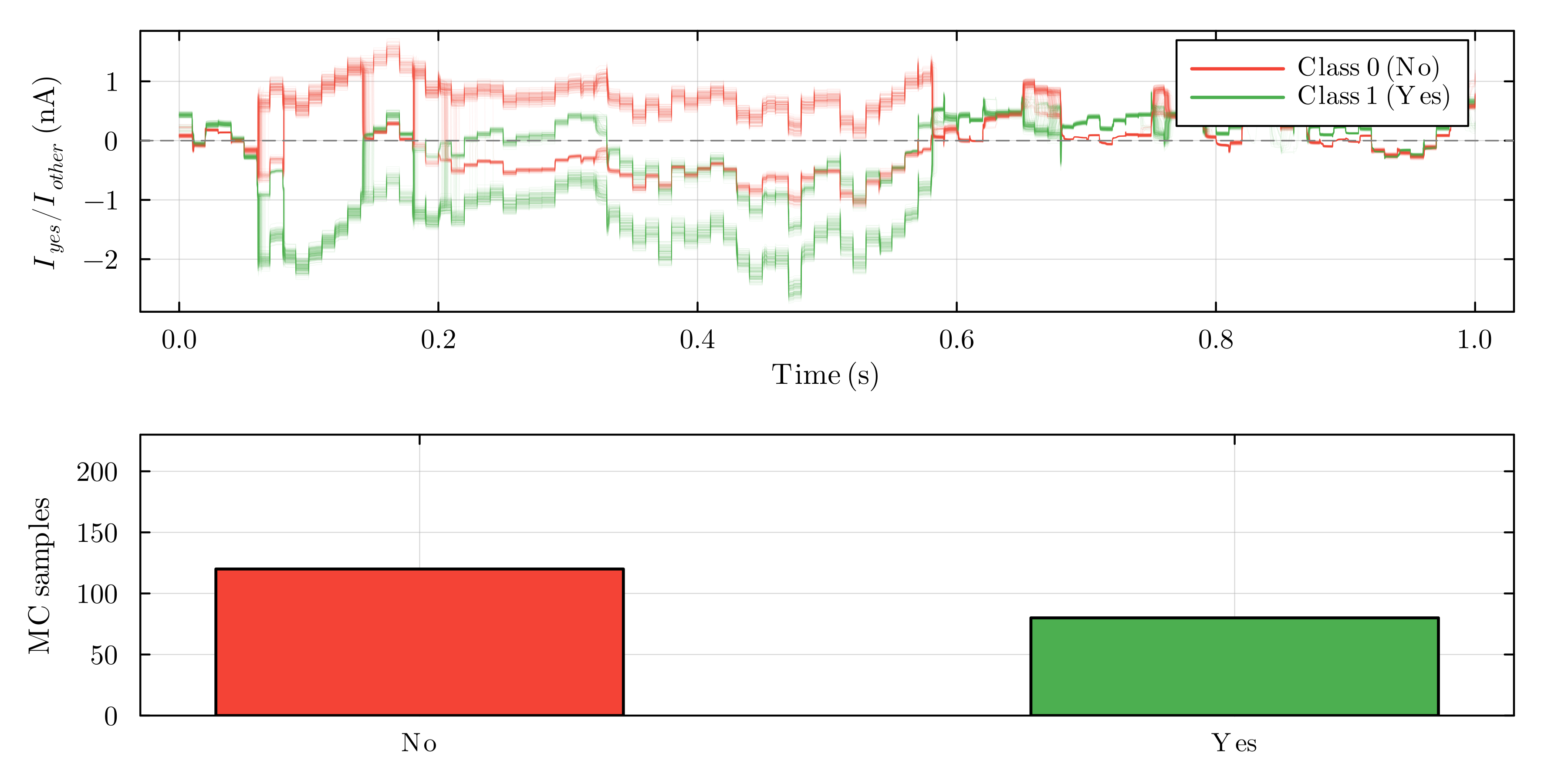}}
    \caption{
      \textbf{Monte Carlo mismatch analysis from Cadence Spectre simulation (seed 67).}
      Each panel shows output logit currents (top: $I_{\text{yes}}$ in green, $I_{\text{no}}$ in red) over the 101-frame input sequence and the corresponding prediction for each Monte Carlo sample (bottom). Analysis performed with $3\sigma$ mismatch variation on all transistors (200 samples). Spoken word: ``yes''. Nominal hardware prediction: ``background'' (already in disagreement with software under nominal conditions). Impaired sample rate: 41\%.
    }
    \label{fig:monte_carlo8}
  \end{center}
\end{figure*}
\clearpage
 
\begin{figure*}[ht!]
  \vskip 0.2in
  \begin{center}
    \centerline{\includegraphics[width=\linewidth]{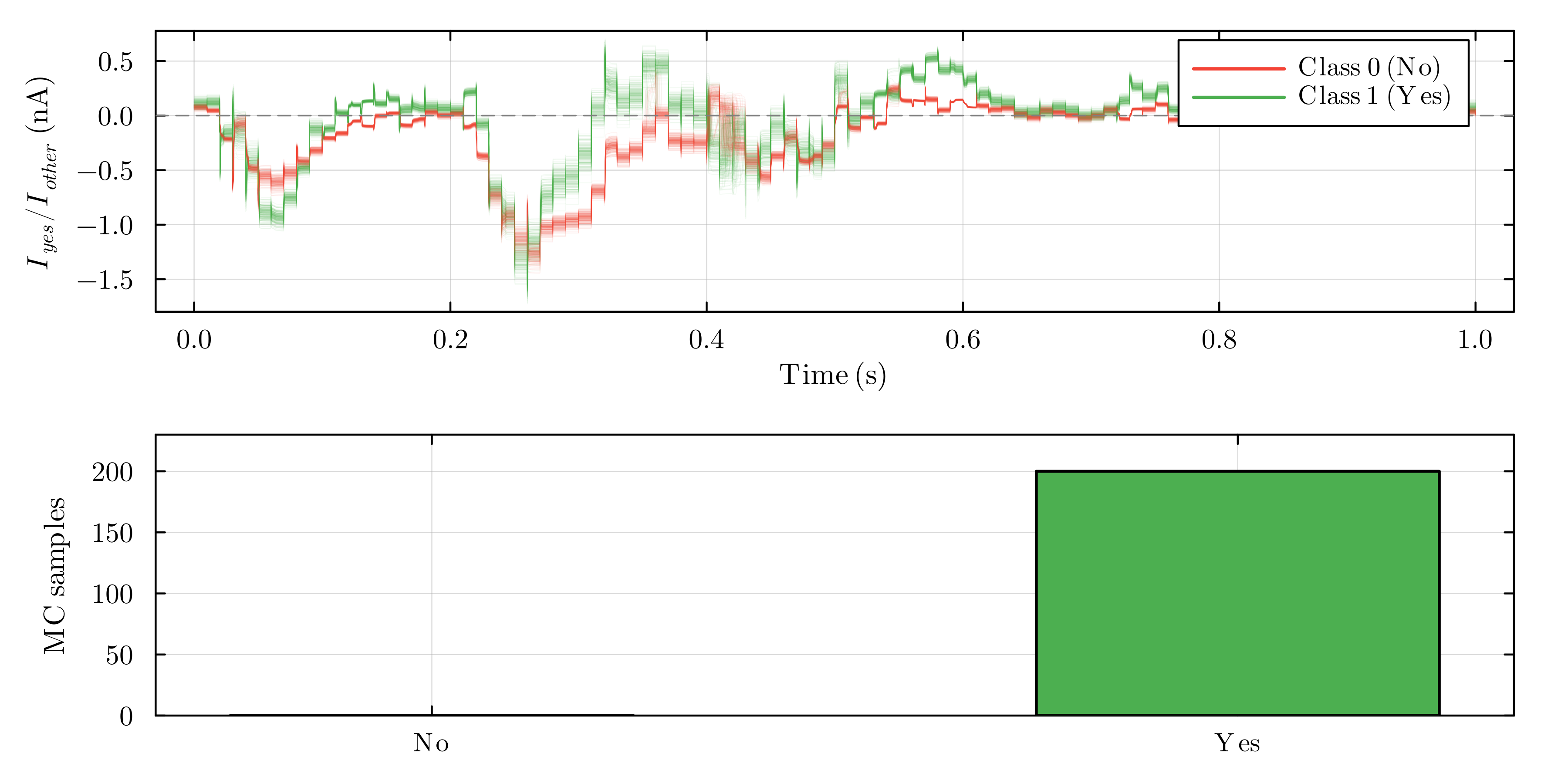}}
    \caption{
      \textbf{Monte Carlo mismatch analysis from Cadence Spectre simulation (seed 68).}
      Each panel shows output logit currents (top: $I_{\text{yes}}$ in green, $I_{\text{no}}$ in red) over the 101-frame input sequence and the corresponding prediction for each Monte Carlo sample (bottom). Analysis performed with $3\sigma$ mismatch variation on all transistors (200 samples). Spoken word: ``yes''. Nominal prediction: ``yes''. Impaired sample rate: 0\%.
    }
    \label{fig:monte_carlo9}
  \end{center}
\end{figure*}

\begin{figure*}[ht!]
  \vskip 0.2in
  \begin{center}
    \centerline{\includegraphics[width=\linewidth]{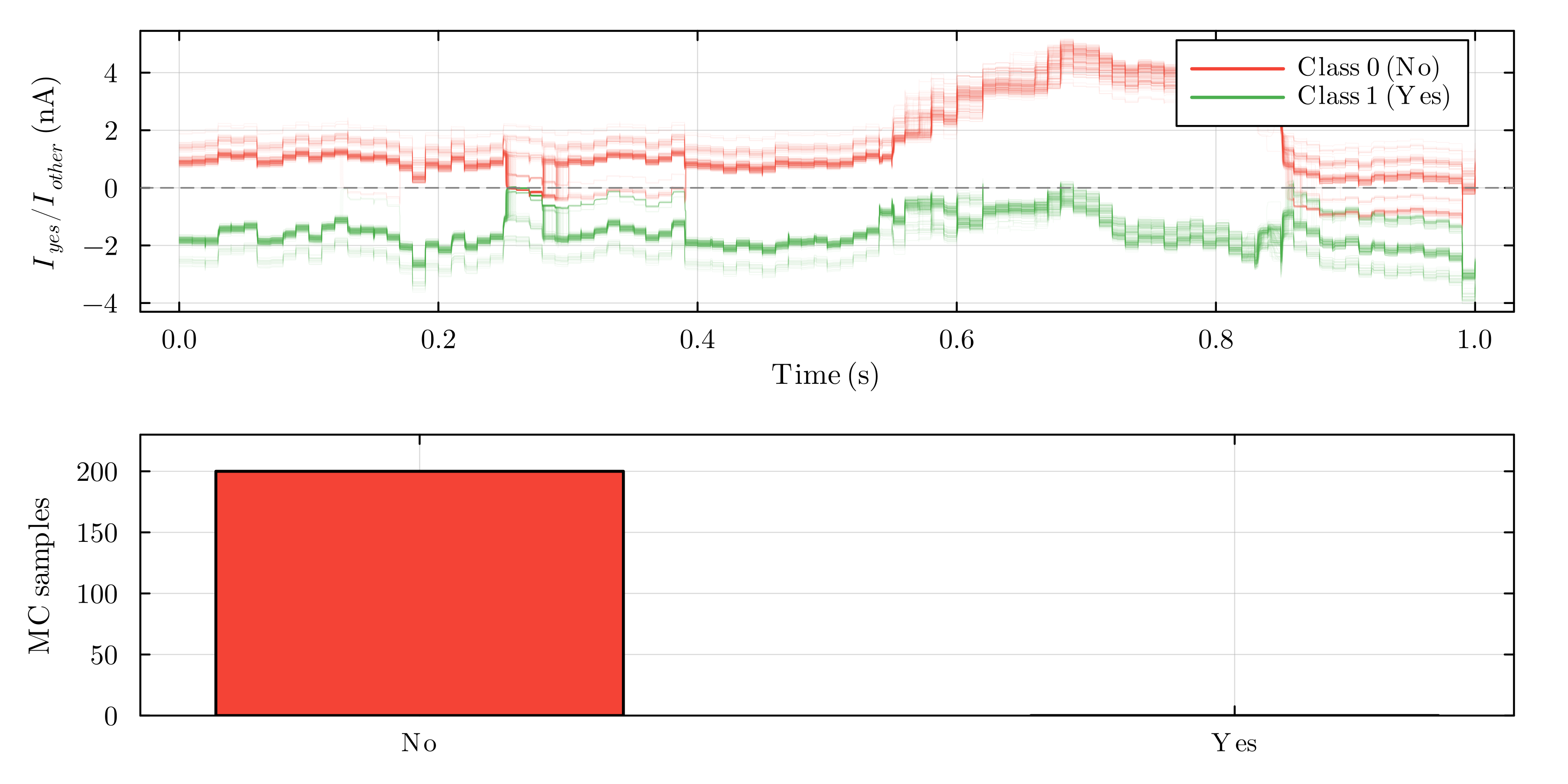}}
    \caption{
      \textbf{Monte Carlo mismatch analysis from Cadence Spectre simulation (seed 61).}
      Each panel shows output logit currents (top: $I_{\text{yes}}$ in green, $I_{\text{no}}$ in red) over the 101-frame input sequence and the corresponding prediction for each Monte Carlo sample (bottom). Analysis performed with $3\sigma$ mismatch variation on all transistors (200 samples). Spoken word: ``right''. Nominal prediction: ``background''. Impaired sample rate: 0\%.
    }
    \label{fig:monte_carlo10}
  \end{center}
\end{figure*}

\clearpage

\begin{figure*}[ht!]
  \begin{center}
    \centerline{\includegraphics[width=\linewidth]{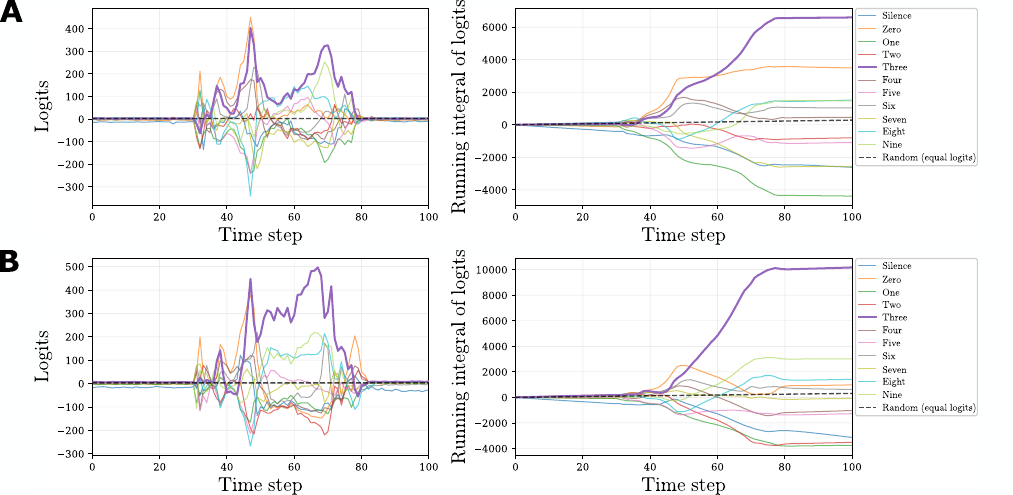}}
    \caption{
      \textbf{Multi-class KWS evaluation (11 classes), "three" spoken.}
      \textbf{A.}~$2\times 4$ network. Logit time evolution (left) and integrated logits used for the final classification decision (right). The classification is correct, but the narrow decision margins leave the prediction vulnerable to mismatch.
      \textbf{B.}~$2\times 16$ network. Logit time evolution (left) and integrated logits (right). The classification is correct and the decision margins are substantially wider, improving robustness. Networks of this size remain within sub-microwatt consumption estimates (Table~\ref{tab:power_scaling}).
    }
    \label{fig:multiclass}
  \end{center}
\end{figure*}

\begin{figure*}[ht!]
  \begin{center}
    \centerline{\includegraphics[width=\linewidth]{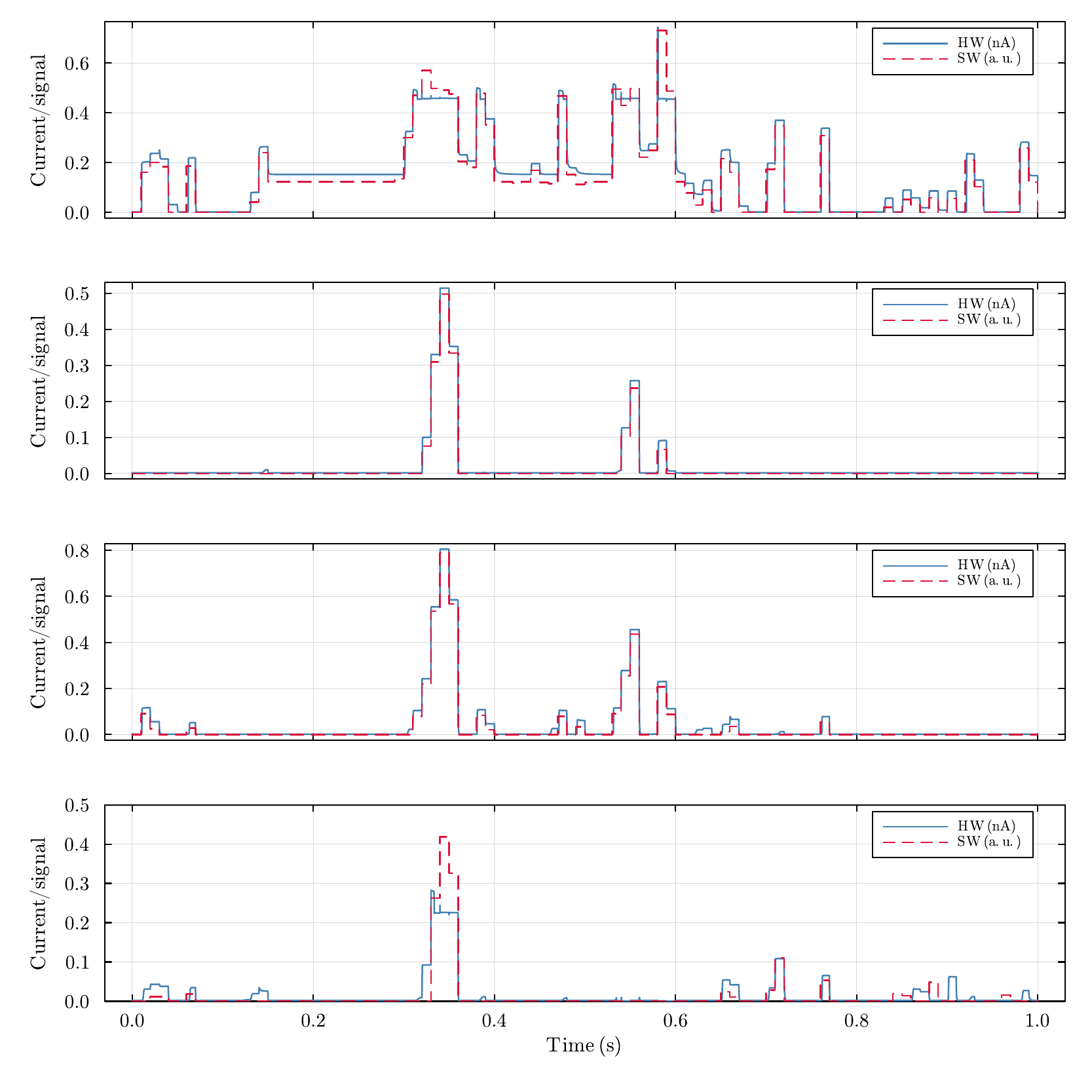}}
    \caption{
      \textbf{Intermediate signal comparison between software and hardware (layer-1 candidates, seed 51).}
      Overlay of the 4 software-predicted and Cadence-simulated candidate currents of the first recurrent layer for a representative ``yes'' inference sample.
    }
    \label{fig:signal_agreement0}
  \end{center}
\end{figure*}

\begin{figure*}[ht!]
  \begin{center}
    \centerline{\includegraphics[width=\linewidth]{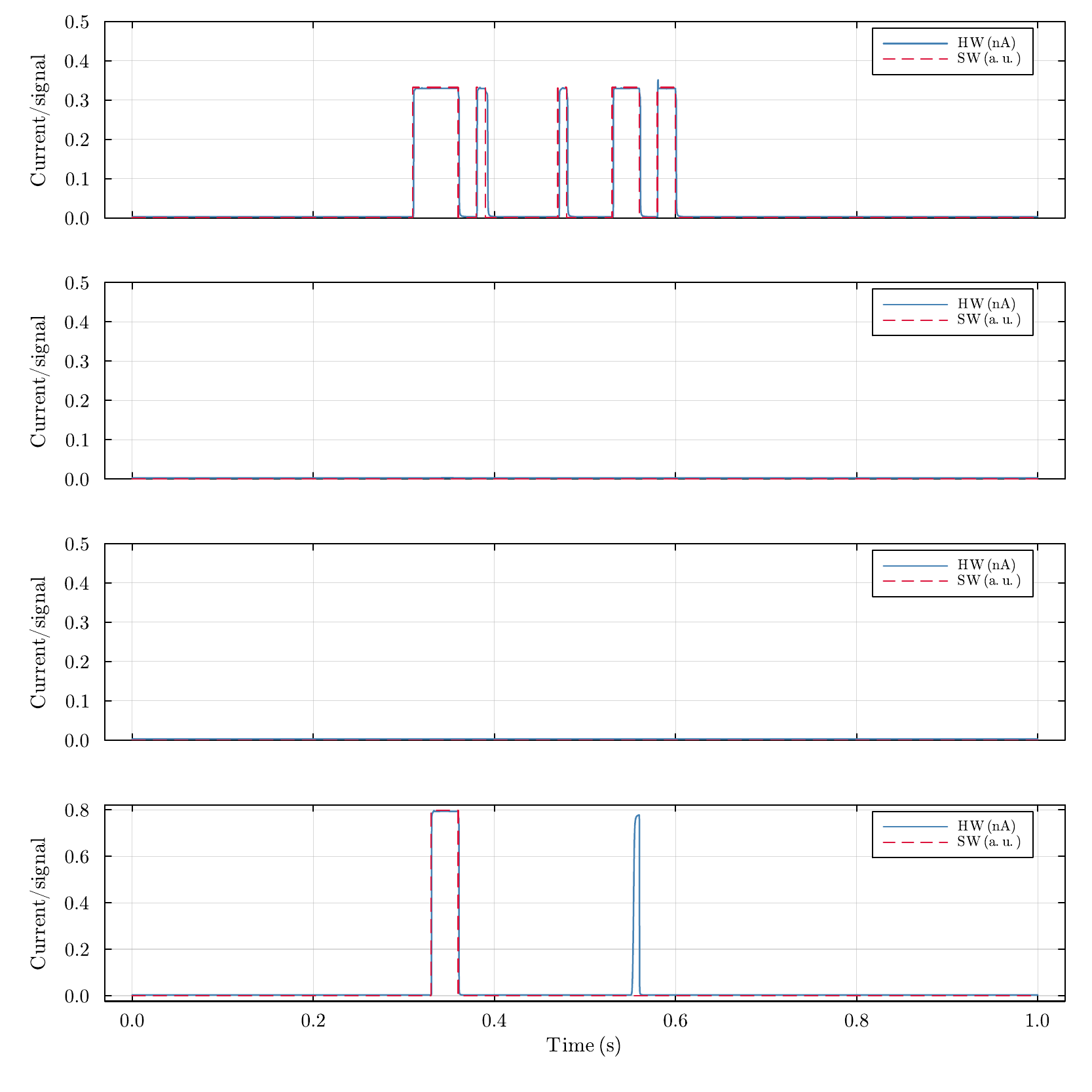}}
    \caption{
      \textbf{Intermediate signal comparison between software and hardware (layer-1 states, seed 51).}
      Overlay of the 4 software-predicted and Cadence-simulated FQ BMRU cell outputs of the first recurrent layer for a representative ``yes'' inference sample.
    }
    \label{fig:signal_agreement1}
  \end{center}
\end{figure*}
 
\begin{figure*}[ht!]
  \begin{center}
    \centerline{\includegraphics[width=\linewidth]{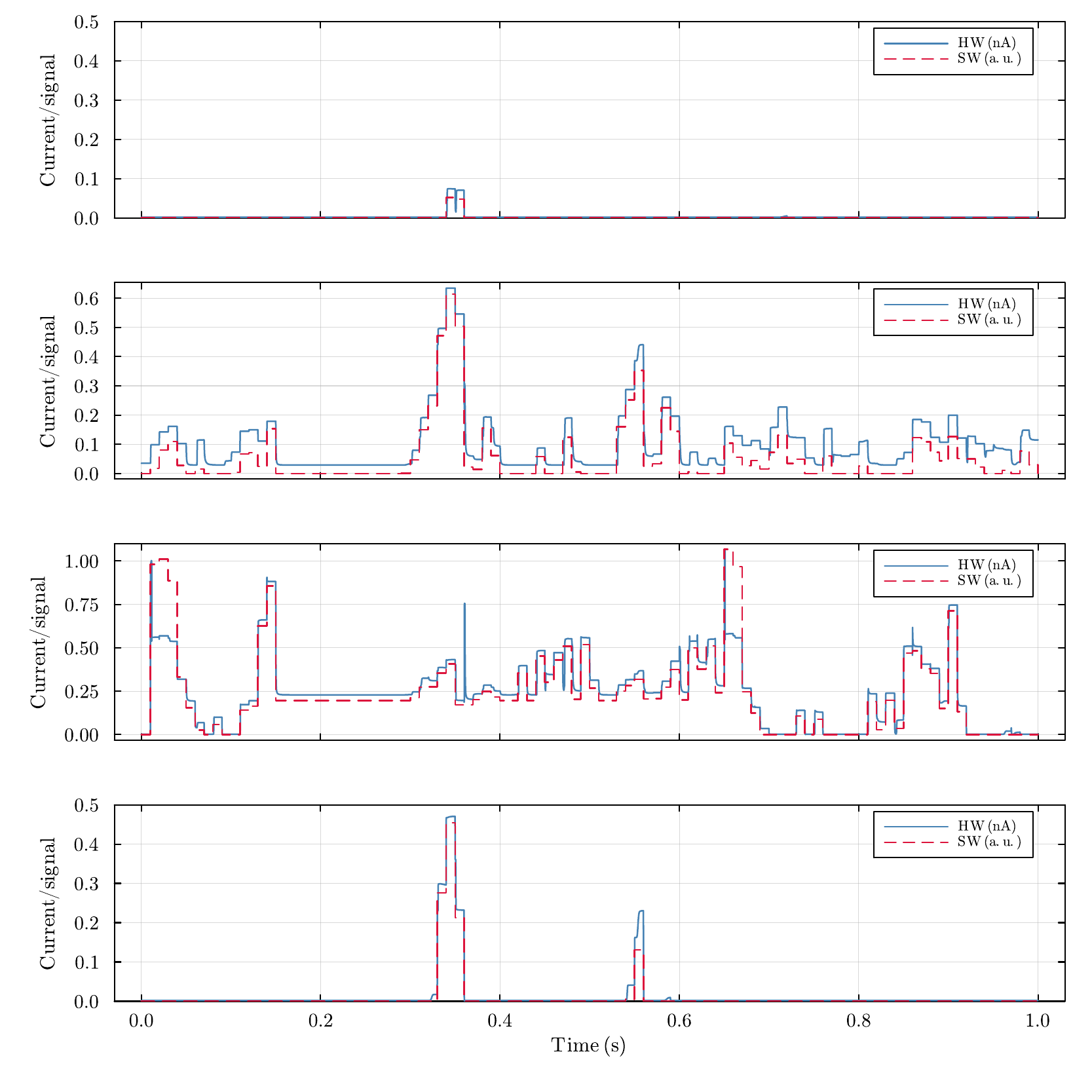}}
    \caption{
      \textbf{Intermediate signal comparison between software and hardware (layer-2 candidates, seed 51).}
      Overlay of the 4 software-predicted and Cadence-simulated candidate currents of the second recurrent layer for a representative ``yes'' inference sample.
    }
    \label{fig:signal_agreement2}
  \end{center}
\end{figure*}
 
\begin{figure*}[ht!]
  \begin{center}
    \centerline{\includegraphics[width=\linewidth]{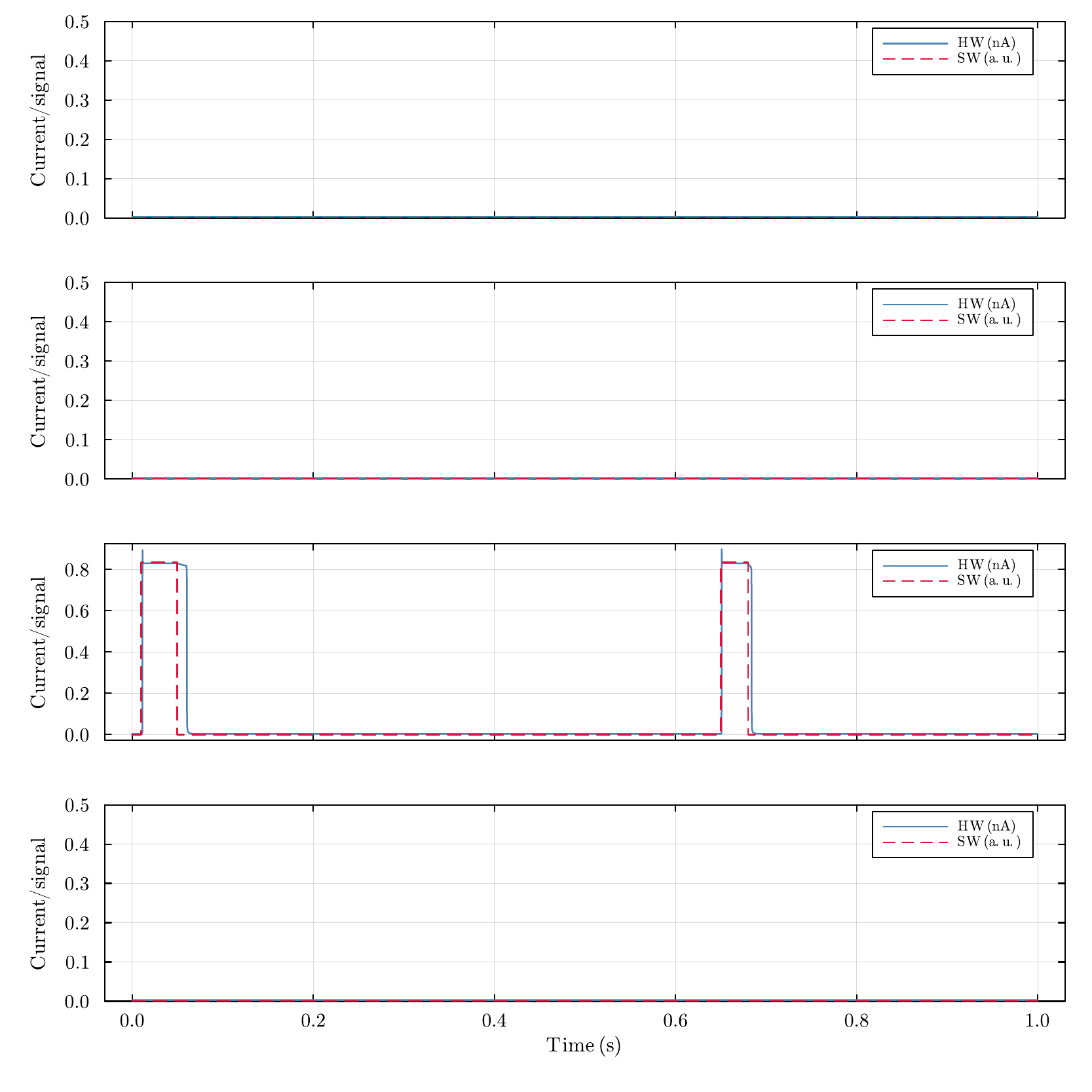}}
    \caption{
      \textbf{Intermediate signal comparison between software and hardware (layer-2 states, seed 51).}
      Overlay of the 4 software-predicted and Cadence-simulated FQ BMRU cell outputs of the second recurrent layer for a representative ``yes'' inference sample.
    }
    \label{fig:signal_agreement3}
  \end{center}
\end{figure*}
 
\begin{figure*}[ht!]
  \begin{center}
    \centerline{\includegraphics[width=\linewidth]{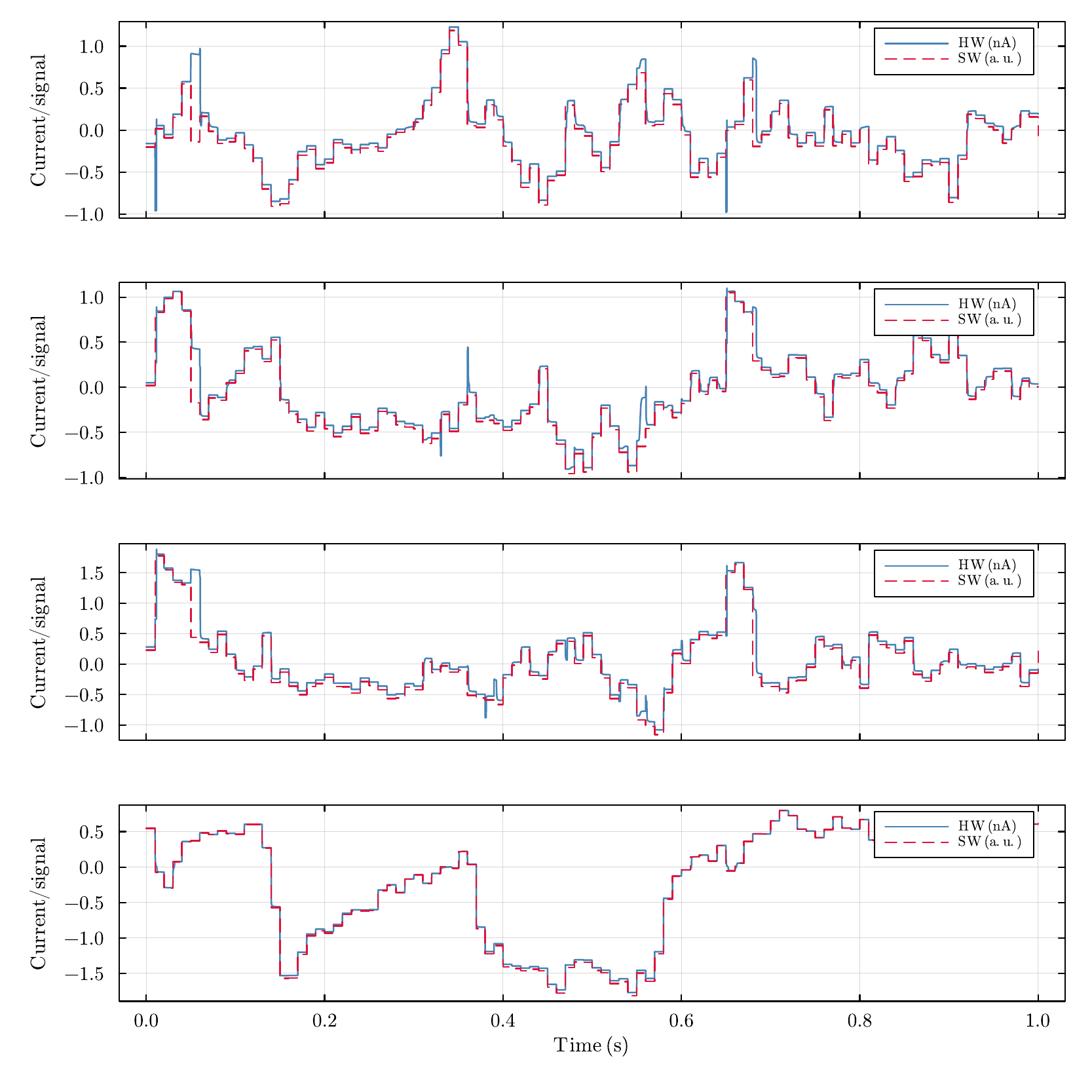}}
    \caption{
      \textbf{Intermediate signal comparison between software and hardware (layer-2 output after skip connection, seed 51).}
      Overlay of the 4 software-predicted and Cadence-simulated output signals of the second recurrent layer, after the skip connection, for a representative ``yes'' inference sample.
    }
    \label{fig:signal_agreement4}
  \end{center}
\end{figure*}
 
\begin{figure*}[ht!]
  \begin{center}
    \centerline{\includegraphics[width=\linewidth]{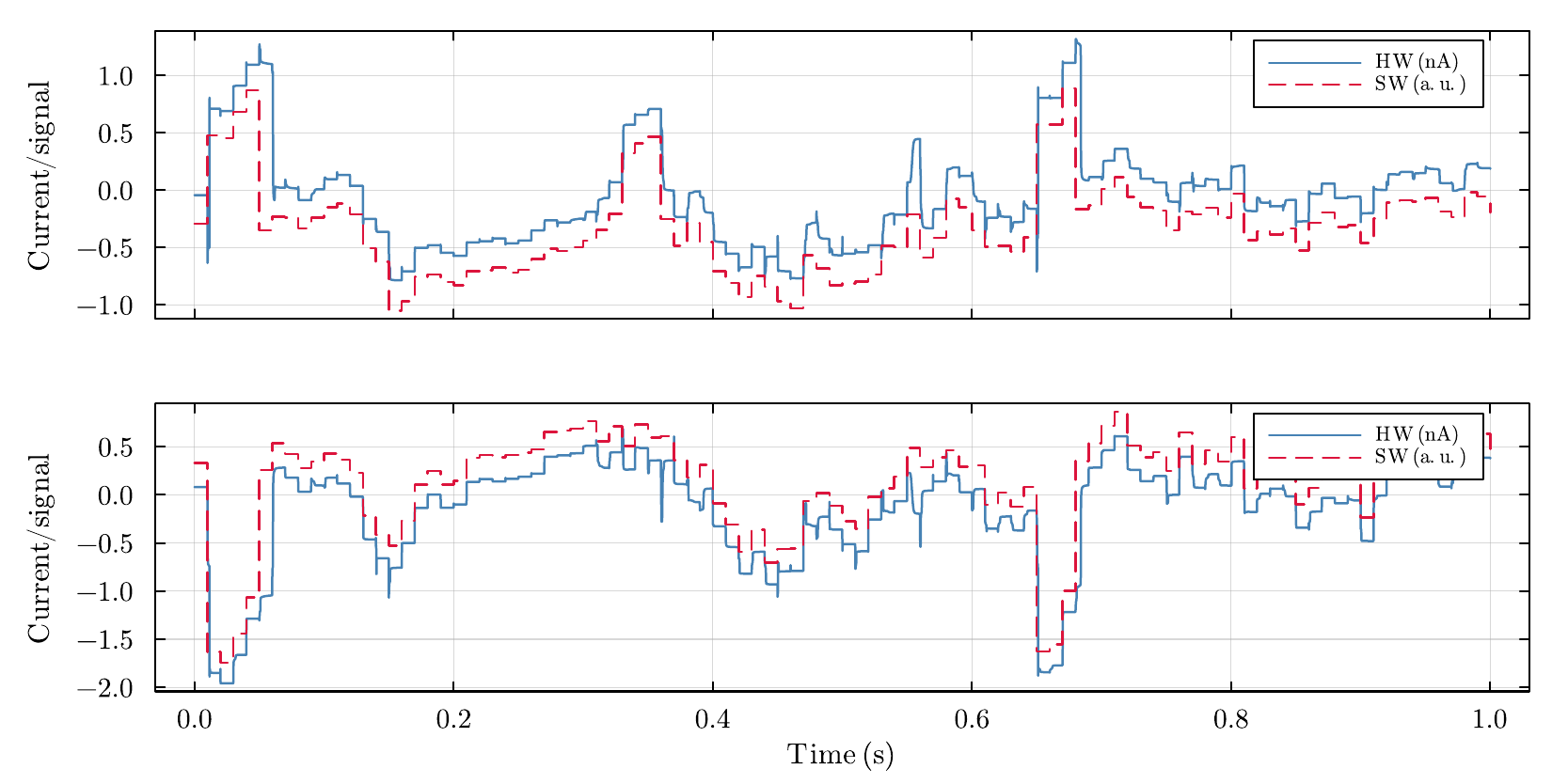}}
    \caption{
      \textbf{Intermediate signal comparison between software and hardware (output logits, seed 51).}
      Overlay of the software-predicted and Cadence-simulated output logit currents for a representative ``yes'' inference sample.
    }
    \label{fig:signal_agreement5}
  \end{center}
\end{figure*}
 
\begin{figure*}[ht!]
  \begin{center}
    \centerline{\includegraphics[width=\linewidth]{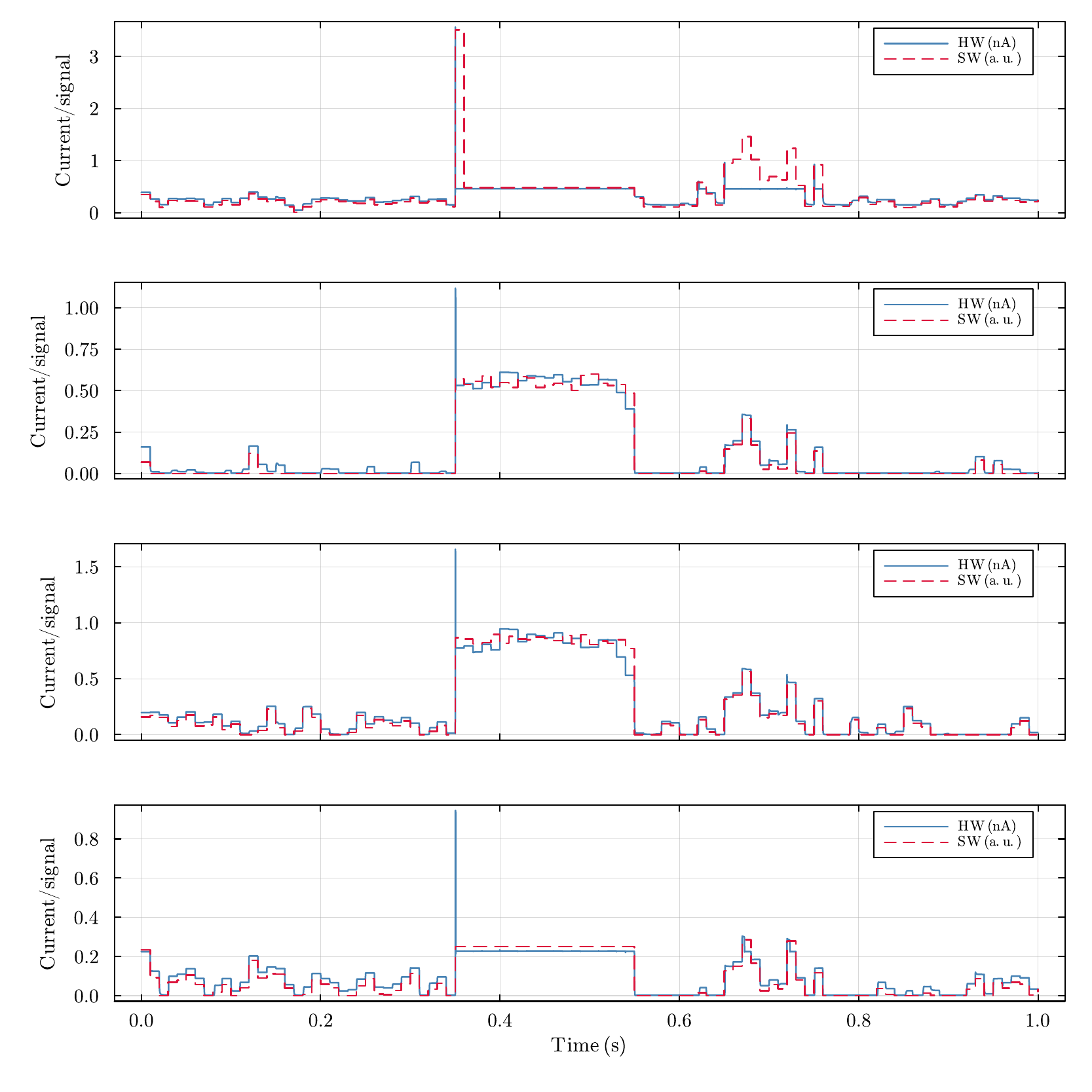}}
    \caption{
      \textbf{Intermediate signal comparison between software and hardware (layer-1 candidates, seed 66).}
      Overlay of the 4 software-predicted and Cadence-simulated candidate currents of the first recurrent layer for a representative ``background'' inference sample.
    }
    \label{fig:signal_agreement6}
  \end{center}
\end{figure*}
 
\begin{figure*}[ht!]
  \begin{center}
    \centerline{\includegraphics[width=\linewidth]{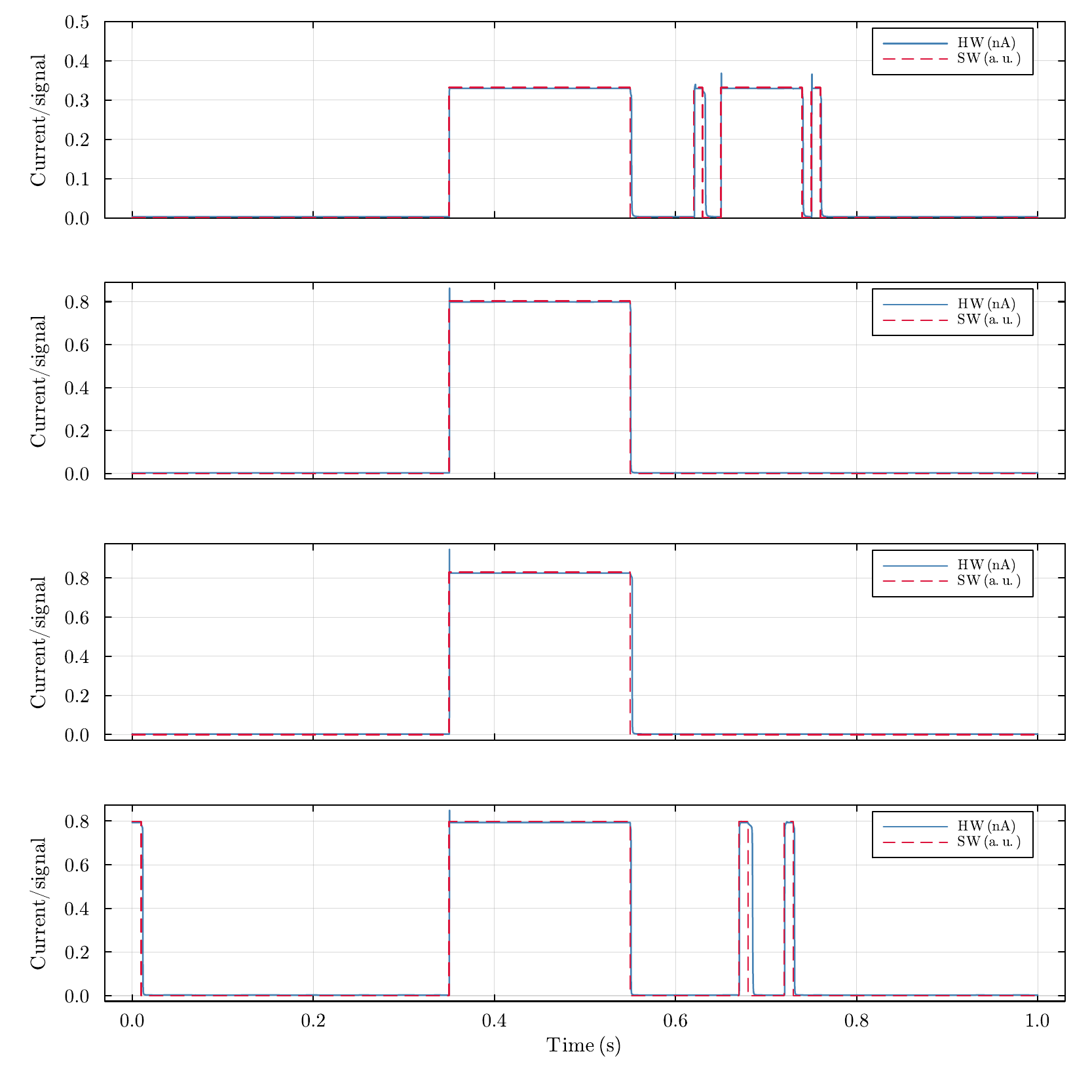}}
    \caption{
      \textbf{Intermediate signal comparison between software and hardware (layer-1 states, seed 66).}
      Overlay of the 4 software-predicted and Cadence-simulated FQ BMRU cell outputs of the first recurrent layer for a representative ``background'' inference sample.
    }
    \label{fig:signal_agreement7}
  \end{center}
\end{figure*}
 
\begin{figure*}[ht!]
  \begin{center}
    \centerline{\includegraphics[width=\linewidth]{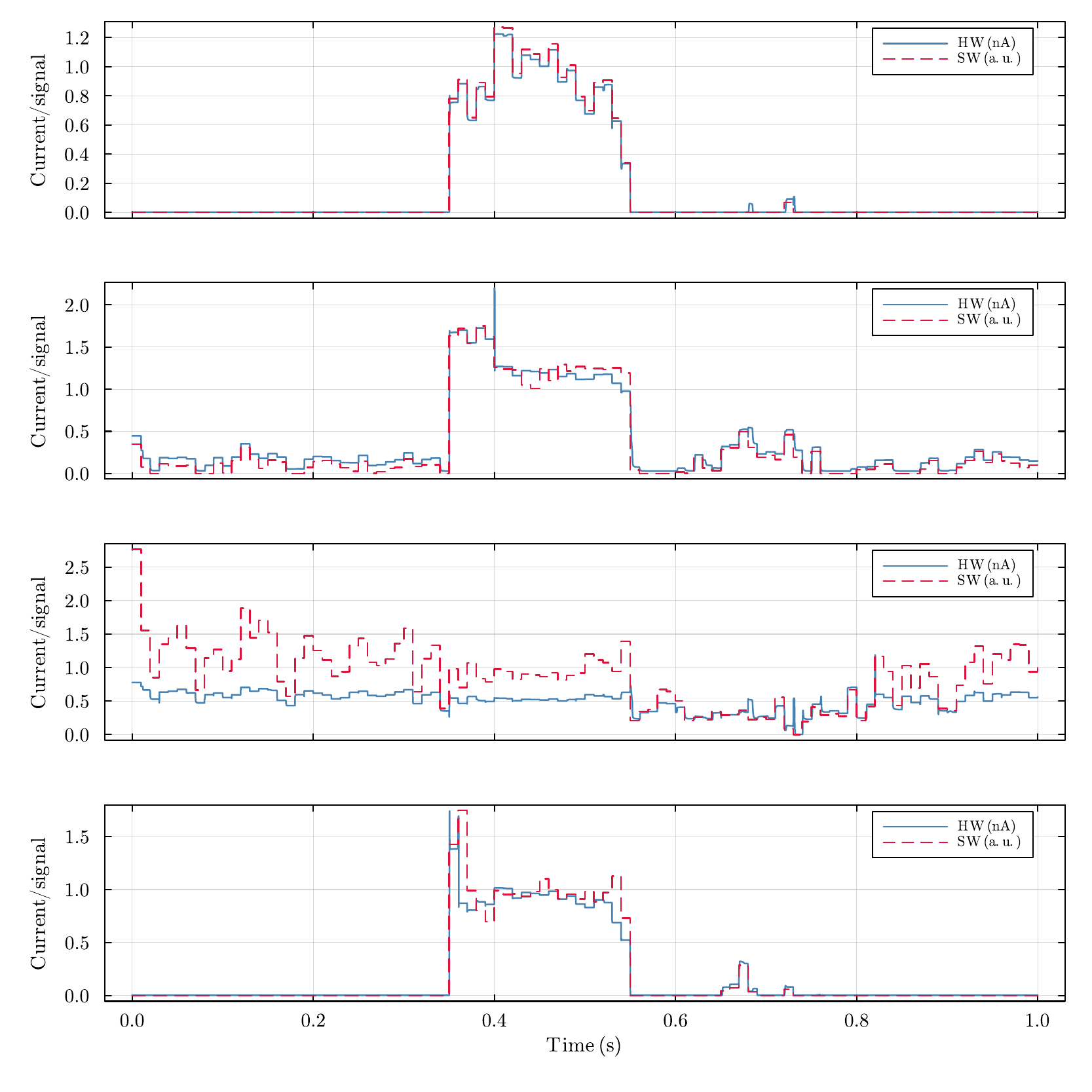}}
    \caption{
      \textbf{Intermediate signal comparison between software and hardware (layer-2 candidates, seed 66).}
      Overlay of the 4 software-predicted and Cadence-simulated candidate currents of the second recurrent layer for a representative ``background'' inference sample.
    }
    \label{fig:signal_agreement8}
  \end{center}
\end{figure*}
 
\begin{figure*}[ht!]
  \begin{center}
    \centerline{\includegraphics[width=\linewidth]{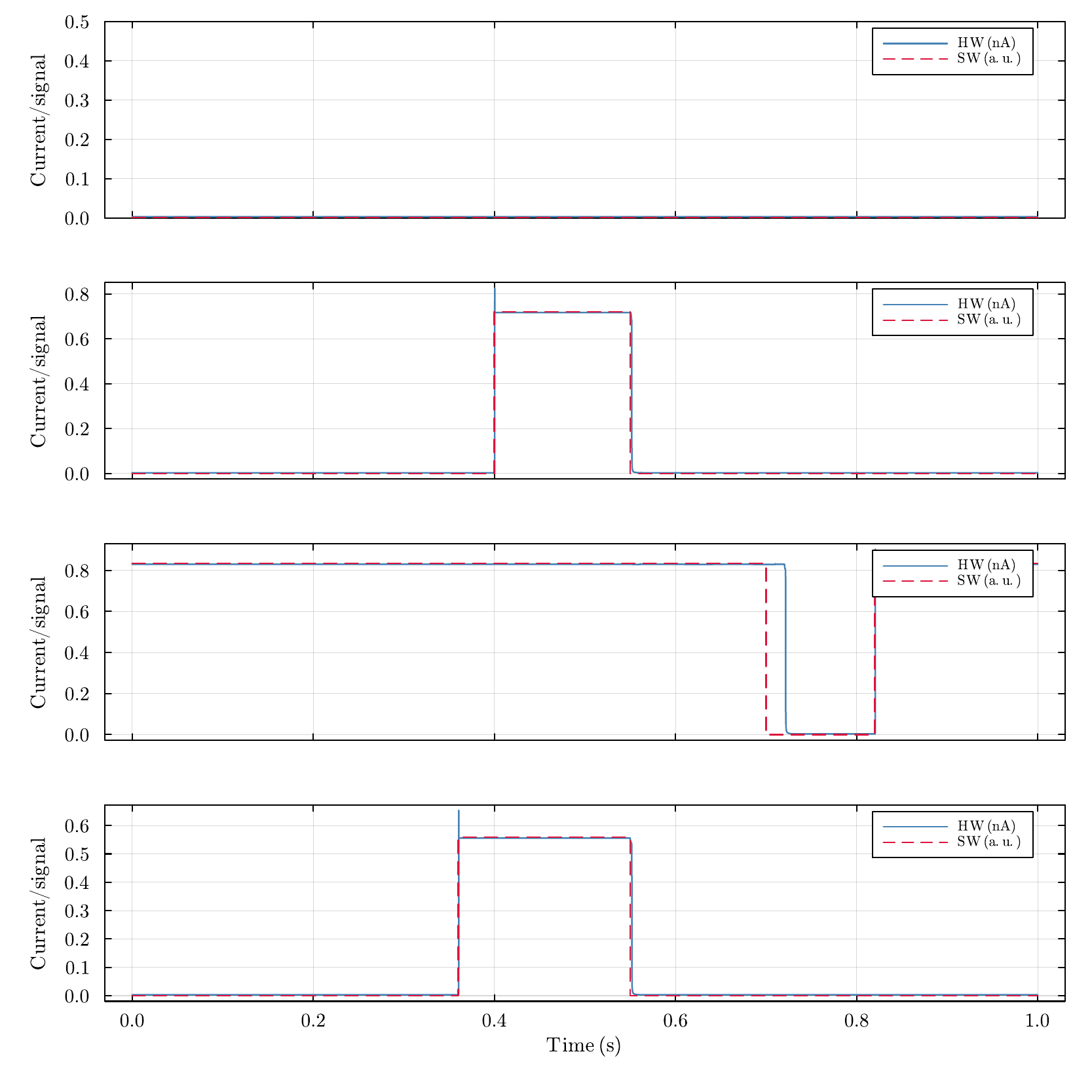}}
    \caption{
      \textbf{Intermediate signal comparison between software and hardware (layer-2 states, seed 66).}
      Overlay of the 4 software-predicted and Cadence-simulated FQ BMRU cell outputs of the second recurrent layer for a representative ``background'' inference sample.
    }
    \label{fig:signal_agreement9}
  \end{center}
\end{figure*}
 
\begin{figure*}[ht!]
  \begin{center}
    \centerline{\includegraphics[width=\linewidth]{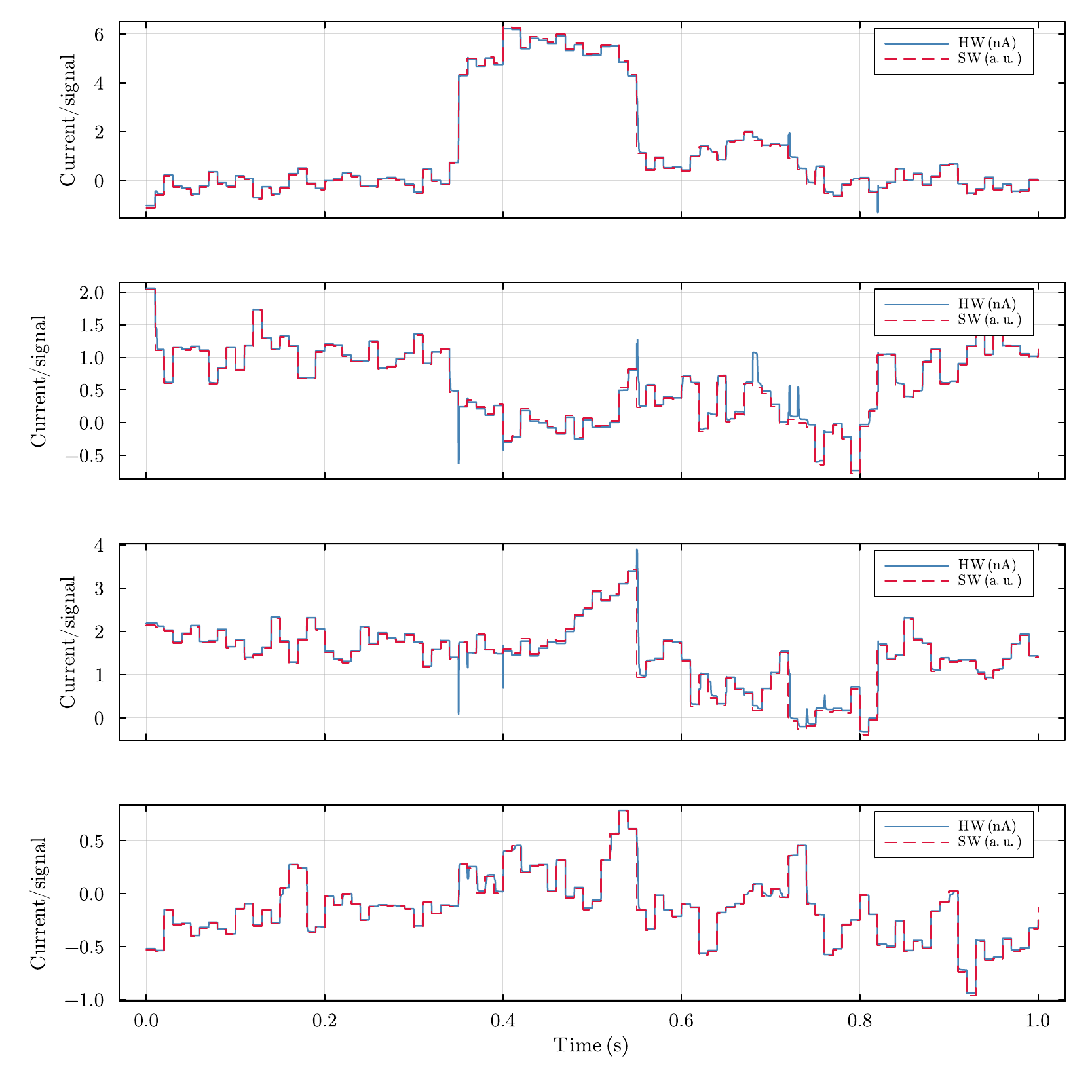}}
    \caption{
      \textbf{Intermediate signal comparison between software and hardware (layer-2 output after skip connection, seed 66).}
      Overlay of the 4 software-predicted and Cadence-simulated output signals of the second recurrent layer, after the skip connection, for a representative ``background'' inference sample.
    }
    \label{fig:signal_agreement10}
  \end{center}
\end{figure*}
 
\begin{figure*}[ht!]
  \begin{center}
    \centerline{\includegraphics[width=\linewidth]{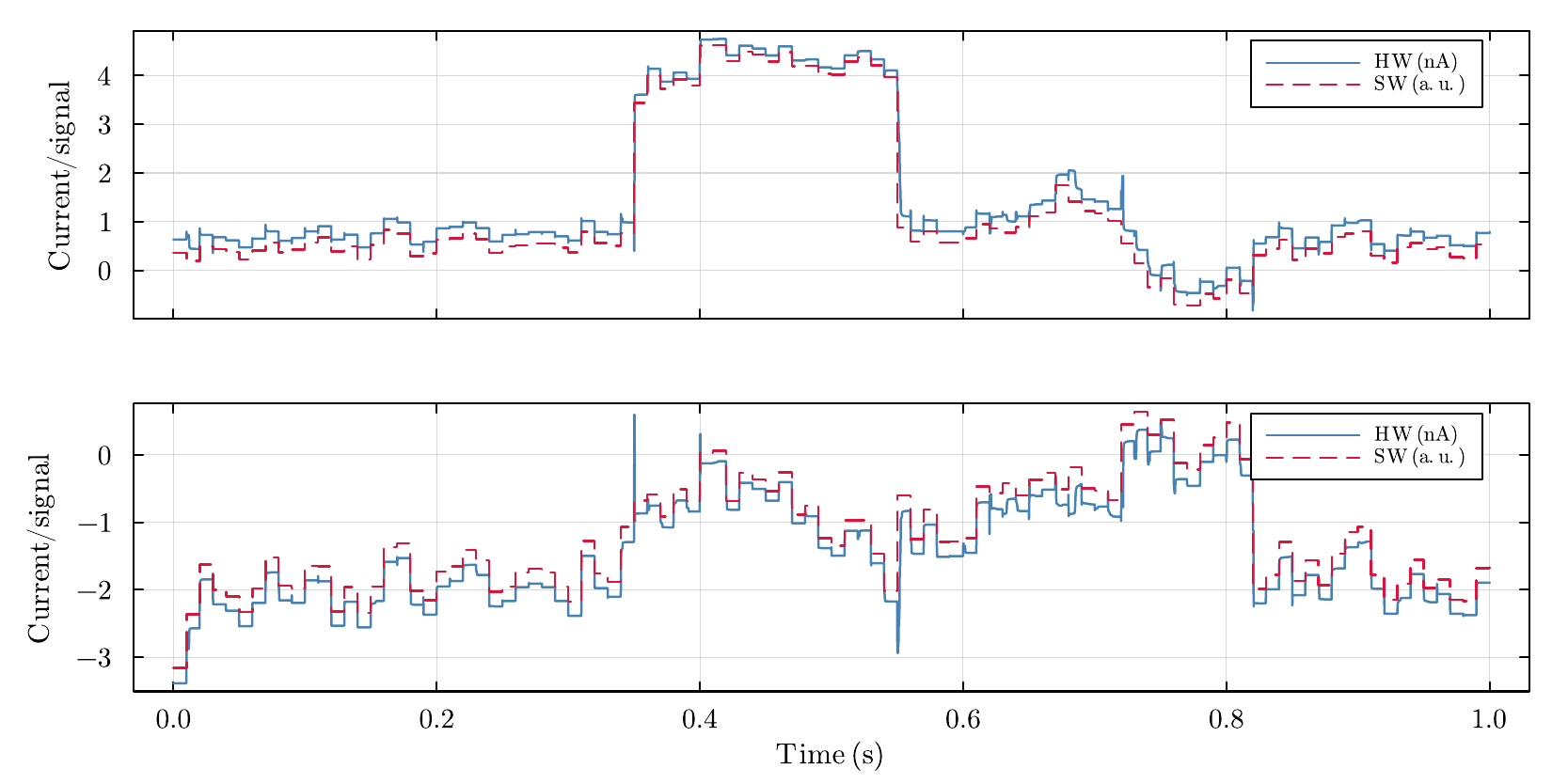}}
    \caption{
      \textbf{Intermediate signal comparison between software and hardware (output logits, seed 66).}
      Overlay of the software-predicted and Cadence-simulated output logit currents for a representative ``background'' inference sample.
    }
    \label{fig:signal_agreement11}
  \end{center}
\end{figure*}


\end{document}